\theoremstyle{definition}
\title[Optimally Rewriting Formulas and Database Queries]{Optimally Rewriting Formulas and Database Queries: A Confluence of Term Rewriting, Structural Decomposition, and Complexity}
\author[H.~Chen]{Hubie Chen}[a]
\author[S.~Mengel]{Stefan Mengel}[b]
\address{Department of Informatics, King’s College London}
\address{Univ.\ Artois, CNRS, Centre de Recherche en Informatique de Lens (CRIL)}
\keywords{width, query rewriting, structural decomposition, term rewriting} %
\begin{document}

\begin{abstract}
  A central computational task in database theory, finite model theory, and computer science at large is the evaluation of a first-order sentence on a finite structure.  In the context of this task, the \emph{width} of a sentence, defined as the maximum number of free variables over all subformulas, has been established as a crucial measure, where minimizing width of a sentence (while retaining logical equivalence) is considered highly desirable.
  An undecidability result rules out the possibility of an algorithm that, given a first-order sentence, returns a logically equivalent sentence of minimum width; this result motivates the study of width minimization via syntactic rewriting rules, which is this article's focus.  For a number of common rewriting rules (which are known to preserve logical equivalence), including rules that allow for the movement of quantifiers, we present an algorithm that, given a positive first-order sentence $\phi$, outputs the minimum-width sentence obtainable from $\phi$ via application of these rules.
  We thus obtain a complete algorithmic understanding of width minimization up to the studied rules; this result is the first one---of which we are aware---that establishes this type of understanding in such a general setting.  Our result builds on the theory of term rewriting and establishes an interface among this theory, query evaluation, and structural decomposition theory.
\end{abstract}
\maketitle

\theoremstyle{plain}
\newtheorem{theorem}[thm]{Theorem}%

\newtheorem{corollary}[thm]{Corollary}

\newtheorem{proposition}[thm]{Proposition}

\newtheorem{propositionrep}[thm]{Proposition}
\newtheorem{lemma}[thm]{Lemma}

\newtheorem{lemmarep}[thm]{Lemma}

\newtheorem{claim}[thm]{Claim}

\theoremstyle{definition}
\newtheorem{example}[thm]{Example}

\newcommand{\sh}{\sharp}
\newcommand{\countp}{\mathsf{count}}

\newcommand{\mc}{\mathsf{MC}}

\newcommand{\nats}{\mathbb{N}}
\newcommand{\N}{\mathbb{N}}
\newcommand{\Q}{\mathbb{Q}}
\newcommand{\Z}{\mathbb{Z}}

\newcommand{\str}{\mathrm{str}}

\newcommand{\rela}{\mathbf{A}}
\newcommand{\relb}{\mathbf{B}}
\newcommand{\relc}{\mathbf{C}}
\newcommand{\reld}{\mathbf{D}}
\newcommand{\rele}{\mathbf{E}}
\newcommand{\relt}{\mathbf{T}}
\newcommand{\relp}{\mathbf{P}}
\newcommand{\relf}{\mathbf{F}}
\newcommand{\relg}{\mathbf{G}}
\newcommand{\relh}{\mathbf{H}}
\newcommand{\reli}{\mathbf{I}}
\newcommand{\relm}{\mathbf{M}}
\newcommand{\reln}{\mathbf{N}}
\newcommand{\relr}{\mathbf{R}}

\newcommand{\calP}{\mathcal{P}}
\newcommand{\calE}{\mathcal{E}}

\newcommand{\free}{\mathrm{free}}
\newcommand{\closed}{\mathrm{closed}}
\newcommand{\lib}{\mathrm{lib}}

\newcommand{\id}{\mathrm{id}}
\newcommand{\surj}{\mathrm{surj}}
\newcommand{\tw}{\mathsf{tw}}
\newcommand{\starsize}{\mathsf{starsize}}
\newcommand{\qaw}{\mathsf{qaw}}
\newcommand{\topp}{\mathsf{top}}
\newcommand{\atom}{\mathsf{atom}}

\newcommand{\res}{\upharpoonright}
\newcommand{\aug}{\mathsf{aug}}
\newcommand{\width}{\mathsf{width}}
\newcommand{\shwidth}{\sh\textup{-}\width}
\newcommand{\contract}{\mathsf{contract}}

\newcommand{\p}{\mathsf{P}}
\newcommand{\np}{\mathsf{NP}}

\newcommand{\FO}{\mathsf{FO}}
\newcommand{\PP}{\mathsf{PP}}
\newcommand{\EP}{\mathsf{EP}}

\newcommand{\fo}{\mathsf{FO}}
\newcommand{\pp}{\mathsf{PP}}
\newcommand{\ep}{\mathsf{EP}}

\newcommand{\pcountp}{\smallp\countp}

\newcommand{\dom}{\mathrm{dom}}
\newcommand{\powfin}{\wp_{\mathsf{fin}}}

\newcommand{\pn}[1]{\textsc{#1}}

\newcommand{\smallp}{\ensuremath{p\textup{-}}}
\newcommand{\clique}{\pn{Clique}}
\newcommand{\sclique}{\pn{\#Clique}}
\newcommand{\pclique}{\smallp\clique}
\newcommand{\psclique}{\smallp\sclique}

\newcommand{\param}[1]{\mathsf{param}\textup{-}#1}

\newcommand{\hnote}[1]{}
\newcommand{\snote}[1]{[Stefan note: #1]}

\newcommand{\lics}[1]{#1}
\newcommand{\pods}[1]{}
\newcommand{\licspods}[2]{\lics{#1}\pods{#2}}

\newcommand{\dnorm}{\hspace{-2pt}\downarrow} %

\newcommand{\drule}{\hspace{-.5pt}\downarrow}
\newcommand{\urule}{\hspace{-.5pt}\uparrow}

\newcommand{\T}{\mathcal{T}}
\newcommand{\A}{\mathcal{A}}
\renewcommand{\div}{\hspace{2pt}/}
\newcommand{\pfo}{\mathrm{PFO}}
\newcommand{\pushdown}{P\drule}
\newcommand{\pushup}{P\urule}

\newcommand{\splitdown}{S\drule}
\newcommand{\splitup}{S\urule}

\newcommand{\lrarr}{\leftrightarrow}
\newcommand{\lrarrstar}{\stackrel{*}{\lrarr}}

\newcommand{\rarr}{\rightarrow}
\newcommand{\rarrstar}{\stackrel{*}{\rarr}}
\newcommand{\larr}{\leftarrow}
\newcommand{\larrstar}{\stackrel{*}{\larr}}

\newcommand{\llb}{\llbracket}
\newcommand{\rrb}{\rrbracket}

\newcommand{\exwe}{\{ \exists, \wedge \}}
\newcommand{\fove}{\{ \forall, \vee \}}

\newcommand{\reals}{\mathbb{R}}

\renewcommand{\free}{\mathrm{free}}
\newcommand{\bagsize}{\mathsf{bagsize}}
\newcommand{\wid}{\mathsf{width}}

\section{Introduction}

The problem of evaluating a first-order sentence on a finite structure is
a central computational task in database theory, finite model theory, and indeed computer science at large.  In database theory, first-order sentences are viewed as constituting the core of SQL;
in parameterized complexity theory,
many commonly studied problems can be naturally and directly formulated
as cases of this evaluation problem.  In the quest to perform this evaluation efficiently,
the \emph{width} of a sentence---defined as the maximum number of free variables over all subformulas of the sentence---has become established as a central and crucial measure.
A first reason for this is that the natural bottom-up algorithm for sentence evaluation exhibits an exponential dependence
on the width~\cite{Williams14}; this algorithm runs in polynomial time when restricted to any class of sentences
having \emph{bounded width}, meaning that there exists a constant bounding their width.  In addition, there are a number of dichotomy theorems~\cite{GroheSS01,Grohe07,Chen14,ChenD16} showing that,
for particular fragments of first-order logic, a class of sentences admits tractable evaluation if and only if the class has bounded width, up to logical equivalence.
These theorems are shown under standard complexity-theoretic assumptions, and concern relational first-order logic, which we focus on in this article.

These results all point to and support the desirability of minimizing the widths of logical sentences.
However, an undecidability result~\cite[Theorem 19]{BovaChenICDT14} immediately rules out the existence of
an algorithm that, given a first-order sentence $\phi$, returns a sentence
of minimum width among all sentences logically equivalent to $\phi$; indeed,
this undecidability result rules out an algorithm that performs as described even on
positive first-order sentences.  
Despite this non-computability result, given the importance of
width minimization, it is natural to seek computable methods for 
reformulating sentences so as to reduce their width.
In this article, we show how to perform width minimization,
in an optimal fashion, via
established \emph{syntactic rewriting rules} known to preserve logical
equivalence.
The study of such rules is strongly relevant:
\begin{itemize}
\item In database theory, such rules are studied and
applied extensively to reformulate
database queries with the aim of allowing efficient evaluation~\cite[Chapter 6]{AbiteboulHV95}.

\item Well-known tractable cases of conjunctive query evaluation admit logical characterizations via such rules, such as those of bounded treewidth~\cite{cqc} and bounded hypertreewidth~\cite{robbers}.

\item Such rules play a key role in obtaining the tractability results of
some of the mentioned dichotomy theorems~\cite{Chen14,ChenD16}.
\end{itemize}
\noindent 
This article focuses on a collection of common
and well-known rewriting rules, which includes rules that
are well-established in database theory
\cite[Figures 5.1 and 6.2]{AbiteboulHV95}.
Our
main result is the presentation of an algorithm that,
given a positive first-order sentence $\phi$, outputs a minimum-width sentence
obtainable from $\phi$ via application of the considered rules.
We thus obtain a 
complete algorithmic understanding of width minimization up to the studied rules.
Our
main result is the first result---of which we are aware---to 
obtain a 
comprehensive characterization of width
up to syntactic rules in the general setting of positive first-order logic.

In order to obtain our main result, we use the theory of term rewriting; in particular,
we make use of basic concepts therein such as \emph{termination}
 and \emph{confluence}.
We view this marriage of \emph{term rewriting} and \emph{query rewriting}
as a conceptual contribution of our work.
We gently augment the basic theory of term rewriting
in two ways.
First, we define the notion of a \emph{gauged system},
which is an abstract reduction system along with a \emph{gauge}, a function mapping each
element of the system to a number; the gauge represents a quantity that one wants to minimize.
In this article our focus is on systems whose elements are positive first-order formulas,
and where the gauge is the width.
Second, we define a notion of division of a system by an equivalence relation.
This notion allows us to consider equivalence classes w.r.t.~a subset of the considered rules,
in a sense made precise.
In particular, we consider equivalence classes w.r.t.~a set of rules that
correspond to the computation of tree decompositions
(see Theorem~\ref{thm:width-for-regions}).
This will make precise and transparent the role of tree decomposition computations
in our algorithm; this also opens up the possibility of applying procedures
that approximate treewidth in a black-box fashion, or decomposition methods
that are designed for unbounded-arity query classes, 
such as those associated with
hypertree width~\cite{GottlobLS01}.
Relatedly, we mention here that for any Boolean conjunctive query, the minimum width 
obtainable by the studied rules is equal to the query's treewidth, plus one
(see Corollary~\ref{cor:exists-wedge-tw}); thus, this measure of minimum width,
when looked at for positive first-order logic, constitutes a generalization
of treewidth.

\emph{All together, the presented framework and theory initiate a novel interface among term rewriting, query rewriting, and structural decomposition theory.}  
In our view, the obtention of our main theorem necessitates the 
development of novel ideas and techniques that draw upon, advance,
and fuse together these areas.
We believe that further development of this interface has the potential to further unite these areas via asking new questions and prompting new techniques---of service to efficient query evaluation.
We wish to emphasize, as a conceptual contribution, the marriage
of \emph{term rewriting} and \emph{query rewriting},
which is (to our knowledge) novel, despite an enduring overlap in names!

We wish to add two technical remarks.
First, although our results are presented for positive first-order logic, our
width minimization procedure can straightforwardly be applied to general
first-order sentences by first turning it into negation normal form, i.e., pushing all negations down to the atomic level,
and then treating negated atoms as if they were atoms.
Second, our main result's algorithm straightforwardly gives rise to a
fixed-parameter tractability result: on any class of sentences that have bounded width
when mapped under our algorithm, we obtain fixed-parameter tractability of evaluating
the class, with the sentence as parameter, via invoking our algorithm and then performing the
natural bottom-up algorithm for sentence evaluation.

\paragraph*{Related work}
Some of the previous characterizations and studies of width, for example~\cite{Chen14,ChenD16,BovaC19}, focused on subfragments of positive first-order logic
defined by restricting the permitted quantifiers and connectives;
as mentioned, we here consider full positive first-order logic.
Width measures for first-order logic have previously been defined~\cite{AdlerWeyer,ChenJACM17}.
While the measure of \cite{ChenJACM17} makes use of syntactic rewriting rules,
and the work \cite{AdlerWeyer} studies the relationship of width to syntactic rewriting rules,
the present work shows how to optimally minimize formulas up to 
the set of studied syntactic rewriting rules, a form of result that
(to our knowledge) is not entailed by the theory of either of these
previous works.
Here, it is worth mentioning a separation that can be shown between the width measure of~\cite{AdlerWeyer} and the width measure introduced in this article:
the class of formulas in this article's Example~\ref{ex:adler} has unbounded width
under the width measure of \cite{AdlerWeyer} (see their Proposition 3.16), but has bounded width
  under the width notion of the present article.  Thus, insofar as the work~\cite{AdlerWeyer} exploits
  syntactic rewriting rules, it does not exploit the ones used in  Example~\ref{ex:adler}.

In our view, a positive feature of the present work 
is that our width measure has a
clear interpretation.
Moreover, we believe that the present work
also renders some of the previously defined
width measures as having, in retrospect, an ad hoc character:
as an example, the width measure of~\cite{ChenD16}
has the property that, when a formula has measure $w$,
it is possible to apply rewriting rules to the formula
so that it has width $w$, but the optimal width achievable
via these rules is not addressed; the considered rule set from~\cite{ChenD16} 
is encompassed by the rule set studied here.\footnote{
The rules used to rewrite formulas in~\cite{ChenD16} are presented
in the argument for~\cite[Theorem 5.1]{ChenD16}; 
to use the rule terminology of the present paper 
(given in Section~\ref{sect:rules}),
apart from associativity, commutativity, and ordering, the argument uses splitdown (for universal quantifications) and pushdown (for existential quantifications).
}

A conference version of this article appeared in the proceedings of ICDT '24. In contrast to that version, this one contains all proofs. Moreover, we have made many modifications to improve readability.

\section{Preliminaries}
\label{sct:prelims}

We assume basic familiarity with the syntax and semantics of relational first-order logic,
which is our logic of focus.
We assume relational structures to have nonempty universes.
In building formulas, we %
assume that conjunction ($\wedge$) and
disjunction ($\vee$) are binary.
When $\psi$ is a first-order formula, we use $\free(\psi)$ to denote the set of free variables of $\psi$.
By a \emph{pfo-formula}, we refer to a positive first-order formula, i.e., a first-order formula without any negation.
We use $\pfo$ to denote the class of all pfo-formulas.
The \emph{width} of a first-order formula $\phi$, denoted by $\wid(\phi)$, is defined as
the maximum of $|\free(\psi)|$ over all subformulas $\psi$ of $\phi$.
We will often identify pfo-formulas with their syntax trees, 
that is, trees where internal nodes are labeled with 
a quantification $Q v$ (where $Q$ is a quantifier and $v$ is a variable), a connective from $\{ \land, \lor \}$, and leaves are labeled with atoms.
We sometimes 
refer to a node of a syntax tree labeled by a connective $\oplus \in \{ \land, \lor \}$ simply as a \emph{$\oplus$-node}. 
We use the convention that, in syntax trees, quantifiers and the variables that they bind are in the label of a single node.

For a set $S$ of quantifiers and connectives, we define an $S$-formula to be a formula in which all quantifiers and connectives are from $S$; we analogously define $S$-sentences.

\subsection{Hypergraphs and tree decompositions}

A \emph{hypergraph} is a pair $(V,E)$
where $V$ is a set called the \emph{vertex set} of the hypergraph,  and $E$ is a subset of the
power set of $V$, that is, each element of $E$ is a subset of $V$; $E$ is called the
\emph{edge set} of the hypergraph.  Each element of $V$ is called a \emph{vertex},
and each element of $E$ is called an \emph{edge}. In this work, we only consider finite hypergraphs, so we tacitly assume that $V$ is always a finite set.
We sometimes specify a hypergraph simply via its edge set $E$; we do this with the understanding
that the hypergraph's vertex set is $\bigcup_{e \in E} e$.
When~$H$ is a hypergraph, we sometimes use $V(H)$ and $E(H)$ to denote its vertex set
and edge set, respectively.
We consider a \emph{graph} to be a hypergraph where every edge has size $2$.

A \emph{tree decomposition} of a hypergraph $H$ is a pair $(T, \{ B_t \}_{t \in V(T)})$
consisting of a tree $T$
and, for each vertex $t$ of $T$,
a \emph{bag} $B_t$ that is a subset of $V(H)$, such that the following conditions hold:
\begin{itemize}
\item (vertex coverage) for each vertex $v \in V(H)$, there exists a vertex $t \in V(T)$
  such that $v \in B_t$;
\item (edge coverage) for each edge $e \in E(H)$, there exists a vertex $t \in V(T)$
  such that $e \subseteq B_t$;
\item (connectivity) for each vertex $v \in V(H)$, the set $S_v = \{ t \in V(T) ~|~ v \in B_t \}$
  induces a connected subtree of $T$.
  
\end{itemize}
\noindent 
We define the \emph{bagsize} of a tree decomposition
$C = (T, \{ B_t \}_{t \in V(T)})$,
denoted by $\bagsize(C)$,
as $\max_{t \in V(T)} |B_t|$, that is, as the maximum size over all bags.
The \emph{treewidth} of a hypergraph $H$ is the minimum,
over all tree decompositions~$C$ of $H$,
of the quantity $\bagsize(C) - 1$.
A \emph{minimum width tree decomposition} of a hypergraph $H$ is a tree decomposition $C$
where $\bagsize(C) - 1$ is the treewidth of $H$. 

For every hypergraph $H$, there is a tree decomposition $C = (T, \{ B_t \}_{t \in V(T)})$ with $|V(T)| = O(\bagsize(H)|V(H)|)$. Moreover, given a tree decomposition $C' = (T', \{ B'_t \}_{t \in V(T')})$ of bagsize $k$ of $H$, one can compute a tree decomposition $C = (T, \{ B_t \}_{t \in V(T)})$ with $|V(T)| = O(\bagsize(H)|V(H)|)$ and bagsize $k$ in time $O(k^2 \cdot \max(V(T'), V(H)))$, see e.g.~\cite[Lemma~7.4]{CyganFKLMPPS15}. We will thus assume that all tree decompositions we deal with have size polynomial in the hypergraphs.

\subsection{Term rewriting}
\label{subsect:term-rewriting}
\paragraph*{Basics}
We here introduce the basic terminology of term rewriting
to be used; for the most part, we base our presentation on \cite[Chapter~2]{BaaderN98}.
A \emph{system} is a pair $(D, \rarr)$
where $D$ is a set, and $\rarr$ is a binary relation on $D$.
We refer to the elements of $D$ as elements of the system, and
the binary relation $\rarr$ will sometimes be called a \emph{reduction}.
Whenever $\rarr$ is a binary relation, we use $\larr$ to denote its \emph{inverse}
(so $a \rarr b$ if and only if $b \larr a$),
and we use $\lrarr$ to denote the union $\rarr \cup \larr$ which is straightforwardly verified to be a symmetric relation.
We use $\rarrstar$, $\larrstar$, and $\lrarrstar$ to denote
the reflexive-transitive closures of $\rarr$, $\larr$, and $\lrarr$,
respectively. Note that $\lrarrstar$ is an equivalence relation.

We will use the following properties of elements of a system.
Let $(D,\rarr)$ be a system, and let $d,e \in D$.
\begin{itemize}

\item $d$ is \emph{reducible} if there exists $d' \in D$ such that $d \rarr d'$.
\item $d$ is \emph{in normal form} if it is not reducible.
\item $e$ is a \emph{normal form} of $d$ if $d \rarrstar e$ and $e$ is in normal form.

  \item $d$ and $e$ are \emph{joinable}, denoted $d \downarrow e$, if there exists an element $f$ such that $d \rarrstar f \larrstar e$.
  
\end{itemize}
Let $\rarr'$ be a binary relation on a set $D$, and let $d \in D$; 
we say that $\rarr'$ is \emph{applicable to} $d$ or \emph{can be applied to} $d$ if there exists an element $e \in D$ such that $d \rarr' e$.
We extend this to sets of elements in a natural way: when $S \subseteq D$,
we say that $\rarr'$ is \emph{applicable to} $S$ or \emph{can be applied to} $S$ if there exist elements $d \in S$ and  $e \in D$ such that $d \rarr' e$.

We next define properties of a system $Y = (D,\rarr)$; in what follows,
$d$, $e$, $e'$, $d_i$, etc. denote elements of $D$.
\begin{itemize}
\item $Y$ is \emph{confluent} if $e \larrstar d \rarrstar e'$ implies $e \downarrow e'$.
\item $Y$ is \emph{locally confluent} if $e \larr d \rarr e'$ implies $e \downarrow e'$.
\item $Y$ is \emph{terminating} if there is no infinite descending chain
  $d_0 \rarr d_1 \rarr \cdots$ of elements in $D$.
\item $Y$ is \emph{convergent} if it is confluent and terminating.
\end{itemize}
We remark that a confluent system is also locally confluent, 
since with respect to a system, $d \rarr e$ implies $d \rarrstar e$.
Also note that the property of being terminating immediately implies 
that of being \emph{normalizing}, 
meaning that each element has a normal form. 

We will use the following properties of convergent systems.

\begin{proposition}[see~\cite{BaaderN98}, Lemma 2.1.8 and Theorem 2.1.9]\label{prop:samenormalform} 
  Suppose that $(D,\rarr)$ is a convergent system.
  Then, each element $d \in D$ has a unique normal form.
  Moreover, for all elements $d, e \in D$, it holds that
  $d \lrarrstar e$ if and only if they have the same normal form.
\end{proposition}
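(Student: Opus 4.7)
The plan is to prove the two parts of the proposition separately, first establishing uniqueness of normal forms and then the Church--Rosser-style equivalence. Both parts are standard consequences of the interaction between confluence and termination, so the proof will largely be structural rather than computational.

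For existence of a normal form, I would first note that by hypothesis the system is terminating, so there is no infinite reduction chain starting at $d$; hence any maximal reduction chain from $d$ ends in an element that is not reducible, i.e., in a normal form. For uniqueness, suppose $d$ has two normal forms $e_1$ and $e_2$, so $d \rarrstar e_1$ and $d \rarrstar e_2$. Applying confluence to this peak yields $f$ with $e_1 \rarrstar f \larrstar e_2$. Since $e_1$ is in normal form it admits no reduction step, so $e_1 \rarrstar f$ forces $e_1 = f$, and symmetrically $e_2 = f$, giving $e_1 = e_2$.

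For the second assertion, the easy direction is clear: if $d$ and $e$ share a normal form $n$, then $d \rarrstar n \larrstar e$ witnesses $d \lrarrstar e$. For the other direction, the key lemma I would prove is the Church--Rosser property: confluence implies that $d \lrarrstar e$ already implies $d \downarrow e$. I would prove this by induction on the number of $\lrarr$ steps. The base case is trivial; for the inductive step from a chain $d \lrarrstar e' \lrarr e$, the inductive hypothesis supplies some $f$ with $d \rarrstar f \larrstar e'$. If the last step is $e \rarr e'$, then $d \rarrstar f \larrstar e' \larr e$ gives joinability immediately. If instead the last step is $e' \rarr e$, I apply confluence to the peak $f \larrstar e' \rarrstar e$ to obtain $g$ with $f \rarrstar g \larrstar e$, and then $d \rarrstar f \rarrstar g \larrstar e$ witnesses $d \downarrow e$. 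Once Church--Rosser is in hand, $d \lrarrstar e$ yields some common $f$ with $d \rarrstar f$ and $e \rarrstar f$; extending $f$ to its (unique) normal form $n$ by termination, $n$ is a normal form both of $d$ and of $e$, and uniqueness of normal forms forces this to be the normal form of each.

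The main obstacle, and really the only nontrivial step, is the inductive argument establishing Church--Rosser from confluence: one must handle both orientations of the last $\lrarr$ step and invoke confluence in exactly the right case. Everything else is a direct unpacking of the definitions of normal form, joinability, and termination.
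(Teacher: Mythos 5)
Your proof is correct and is exactly the standard argument: termination gives existence of normal forms, confluence gives uniqueness, and the Church--Rosser property (proved by induction on the length of the $\lrarrstar$ chain) yields the equivalence. The paper offers no proof of its own here---it defers entirely to Lemma 2.1.8 and Theorem 2.1.9 of \cite{BaaderN98}---and your argument coincides with the proof given there.
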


\emph{Whenever an element $d\in D$ has a unique normal form, we denote this unique normal form by $d \dnorm$.}
The last part of Proposition~\ref{prop:samenormalform} can then be formulated as follows: for every convergent system $(D,\rarr)$ we have 
for all elements $d, e \in D$ that $d \lrarrstar e$ if and only if $d \dnorm = e \dnorm$.

The following lemma, often called \emph{Newman's Lemma}, was first shown in~\cite{Newman42} and will be used to establish convergence. See also~\cite[Lemma 2.7.2]{BaaderN98} for a proof.

\begin{lemC}[\cite{Newman42}]
  \label{lemma:newman}
  If a system is locally confluent and terminating, then it is confluent, and hence convergent.
\end{lemC}

\paragraph*{Gauged systems}
We next extend the usual setting of term rewriting by
considering systems having a~\emph{gauge}, 
which intuitively is a measure that one desires to minimize by rewriting.
To this end, define a \emph{gauged system} to be a triple $(D,\rarr,g)$ where
$(D,\rarr)$ is a system, and $g: D \to \reals$ is a mapping called a \emph{gauge}.
A gauged system is \emph{monotone} if for any pair of elements $d,e \in D$
where $d \rarr e$, it holds that $g(d) \geq g(e)$, that is, applying the reduction
cannot increase the gauge.
We apply the terminology of
Section~\ref{subsect:term-rewriting} to gauged systems;
for example, we say that a gauged system $(D,\rarr,g)$ is
\emph{convergent} if the system $(D,\rarr)$ is convergent.

\begin{propositionrep}
  \label{prop:monotone-convergent}
  Let $(D,\rarr,g)$ be a gauged system that is monotone and convergent.
  Then, for any elements $d, e \in D$, it holds that $d \lrarrstar e$
  implies $g(d \dnorm) \leq g(e)$.  
That is, for any element $d$, its normal form
  $d \dnorm$ has the minimum gauge among all elements $e$ with $d \lrarrstar e$.
\end{propositionrep}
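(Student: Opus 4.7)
The plan is to combine two ingredients that the excerpt has already set up: uniqueness of normal forms in convergent systems (Proposition~\ref{prop:samenormalform}), and the monotonicity of the gauge along single reduction steps. The first gives us that $d\dnorm$ and $e\dnorm$ coincide whenever $d\lrarrstar e$; the second lets us push the gauge bound from single steps up through the reflexive--transitive closure $\rarrstar$. So the entire statement should fall out by chasing the normal form of $e$.

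More concretely, I would proceed as follows. First, suppose $d\lrarrstar e$. By Proposition~\ref{prop:samenormalform}, applied to the convergent system $(D,\rarr)$, both $d$ and $e$ have unique normal forms, and in fact $d\dnorm = e\dnorm$. Second, since $e\rarrstar e\dnorm$, there is a finite sequence $e = e_0 \rarr e_1 \rarr \cdots \rarr e_k = e\dnorm$ (possibly $k=0$). An easy induction on $k$, using the monotonicity assumption $g(e_i)\geq g(e_{i+1})$ at each step, yields $g(e)\geq g(e\dnorm)$. Combining with $d\dnorm = e\dnorm$ gives $g(d\dnorm) = g(e\dnorm)\leq g(e)$, which is exactly the inequality we need. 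The ``that is'' clause is then immediate: setting $e$ to range over all elements with $d\lrarrstar e$ shows $d\dnorm$ minimises $g$ over the $\lrarrstar$-equivalence class of $d$, and $d\dnorm$ itself lies in this class (since $d \rarrstar d\dnorm$ entails $d\lrarrstar d\dnorm$).

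There is essentially no obstacle here: everything is an immediate unfolding of the definitions once Proposition~\ref{prop:samenormalform} is in hand. The only point requiring even minor care is the induction on the length of the reduction $e\rarrstar e\dnorm$, which transfers monotonicity from the one-step relation $\rarr$ to its reflexive--transitive closure $\rarrstar$; this is the kind of routine step one usually leaves implicit. I would therefore expect the write-up to be three or four sentences long.
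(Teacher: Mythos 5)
Your proposal is correct and follows essentially the same route as the paper: an induction transferring monotonicity from $\rarr$ to $\rarrstar$ gives $g(e)\geq g(e\dnorm)$, and Proposition~\ref{prop:samenormalform} gives $d\dnorm=e\dnorm$, from which the inequality follows. No gaps.
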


\begin{proof}
First note that, by a simple induction using the monotonicity of $g$, we have for all elements $e', e''\in D$ that $e' \rarrstar e''$ implies $g(e')\ge g(e'')$.
Next, since $(D, \rarr)$ is convergent, we have by Proposition~\ref{prop:samenormalform} that $d \dnorm = e \dnorm$. 
But then $g(e) \ge g(e\dnorm) = g(d\dnorm)$.
\end{proof}

\paragraph*{Systems and division}
We here define a notion of dividing systems by equivalence relations.
Let $D$ be a set and let~$\equiv$ be an equivalence relation on $D$.
Let $D / \hspace{-1pt} \equiv$ denote the set containing the $\equiv$-equivalence classes.
We define $(D,\rarr) / \hspace{-1pt} \equiv$ as the system $(D / \hspace{-1pt} \equiv, \rarr)$,
where we extend the definition of $\rarr$ to $D / \hspace{-1pt} \equiv$ by positing that
for $\equiv$-equivalence classes $C, C'$, it holds that
$C \rarr C'$ if and only if there exist $d \in C$ and $d' \in C'$ such that
$d \rarr d'$.

Suppose that $G = (D,\rarr,g)$ is a gauged system.
We extend the definition of $g$ so that,
for each set $S \subseteq D$, we have $g(S) = \inf \{ g(d) ~|~ d \in S \}$.
We define
$G / \hspace{-1pt} \equiv$ as the gauged system $(D / \hspace{-1pt} \equiv, \rarr, g)$.

\section{Rewriting rules}
\label{sect:rules}

We here define well-known transformation rules on first-order formulas,
which are all known to preserve logical equivalence.  We define these rules
with the understanding that they can always be applied to subformulas,
so that when $\psi_1$ is a subformula of $\phi_1$ and $\psi_1 \rarr \psi_2$,
we have $\phi_1 \rarr \phi_2$, where~$\phi_2$ is obtained from $\phi_1$
by replacing the subformula $\psi_1$ with~$\psi_2$.
In the following $F_1, F_2$, etc. denote formulas. It will be convenient to formalize these rules as rewriting rules in the sense defined before.

The rule $\rarr_A$ is \emph{associativity} for $\oplus \in \{\land, \lor\}$:%
\begin{align*}
	(F_1\oplus (F_2 \oplus F_3))\rarr_{A} ((F_1\oplus F_2)\oplus F_3), \quad %
((F_1\oplus F_2)\oplus F_3)\rarr_{A} (F_1\oplus (F_2 \oplus F_3)).
  \end{align*}

The rule $\rarr_C$ is \emph{commutativity} for $\oplus \in \{\land, \lor\}$:%
  \begin{align*}
    (F_1 \oplus F_2)\rarr_{C} (F_2\oplus F_1).
  \end{align*}

The \emph{reordering} rule $\rarr_{O}$ is defined when $Q \in \{ \forall, \exists \}$
is a quantifier: $Q x Q y F \rarr_O Q y Q x F$.

  The \emph{pushdown} rule $\rarr_{\pushdown}$ applies when $F_2$ is a formula in which $x$ is not free:
  \begin{align*}
    \exists x (F_1 \wedge F_2) \rarr_{\pushdown} (\exists x F_1) \wedge F_2,\quad %
    \forall x (F_1 \vee F_2) \rarr_{\pushdown} (\forall x F_1) \vee F_2.
  \end{align*}
The \emph{pushup} rule $\rarr_{\pushup}$ is defined as the inverse of $\rarr_{\pushdown}$.

The \emph{renaming} rule $\rarr_{N}$ allows $Q x F \rarr_N Q y F'$
when $y \notin \free(F)$ and $F'$ is derived from $F$ by replacing each free occurrence of $x$ with $y$.

Collectively, 
we call the above rules the \emph{tree decomposition rules}, since, as we will see in Section~\ref{sect:tds}, they allow, in a way that we will make precise, optimizing the treewidth of certain subformulas.

The \emph{splitdown} rule distributes quantification over a corresponding connective:
\begin{align*}
  \exists x (F_1 \vee F_2) \rarr_{\splitdown} (\exists x F_1) \vee (\exists x F_2), \quad %
  \forall x (F_1 \wedge F_2) \rarr_{\splitdown} (\forall x F_1) \wedge (\forall x F_2).
\end{align*}
The \emph{splitup} rule $\rarr_{\splitup}$ is defined as the inverse of $\rarr_{\splitdown}$.

The \emph{removal} rule $\rarr_M$  allows removal of a quantification when no variables are bound to the quantification: when $Q \in \{ \forall, \exists \}$, we have
$Q x F \rarr_M F$ when $F$ contains no free occurrences of the variable $x$,
that is, when $x \notin \free(F)$.  (Note that we assume that formulas under discussion are evaluated over structures with non-empty universe; this assumption is needed for this removal rule to preserve logical equivalence.)

We tacitly use the straightforwardly verified fact that when the given rules are applied to any pfo-formula,
the formula's set of free variables is preserved.
We denote these rules via the given subscripts.
We use $\T$ to denote the set of all tree decomposition rules,
so $\T = \{ A, C, \pushdown, \pushup, O, N \}$.
We use $\A$ to denote the set of all presented rules, so
$\A = \T \cup \{ \splitdown, \splitup, M \}$.

When we have a set $\mathcal{R}$ of subscripts representing these rules, we use
$\rarr_{\mathcal{R}}$ to denote the union $\bigcup_{R \in \mathcal{R}} \rarr_R$.
For example, $\rarr_{\{ A,C,O \}}$ denotes $\rarr_{A} \cup \rarr_{C} \cup \rarr_{O}$.
In this context, we sometimes denote a set by the string concatenation of its elements,
for example, we write $\rarr_{ACO}$ in place of $\rarr_{\{ A,C,O \}}$.
We apply these conventions to $\larr, \lrarr, \rarrstar, \larrstar, \lrarrstar$, and so forth.
Thus $\lrarr_{ACO}$ denotes $\lrarr_{\{ A,C,O \}}$, which
is equal to 
$\rarr_{A} \cup \rarr_{C} \cup \rarr_{O} \cup \larr_{A} \cup \larr_{C} \cup \larr_{O}$.
When $\mathcal{R}$ is a set of rule subscripts
and $\phi$ is a formula, we use $[\phi]_{\mathcal{R}}$
to denote the $\lrarrstar_{\mathcal{R}}$-equivalence class of $\phi$.
So, for example, $[\phi]_{\T}$ denotes the $\lrarrstar_{\T}$-equivalence class of $\phi$,
and $[\phi]_{ACO}$ denotes the $\lrarrstar_{ACO}$-equivalence class of $\phi$.

\begin{figure}
\begin{center}
\begin{tabular}{ccc}
  Shorthand & Rule name & Brief description \\
  \toprule
  $A$  & associativity & associativity of connectives $\land, \lor$  \\
  $C$  & commutativity  & commutativity of connectives $\land, \lor$  \\
  $O$ & reordering & $Qx Qy F \rarr Qy Qx F$ when $Q \in \{ \exists, \forall \}$ \\
  $\pushdown$ & pushdown &     $\exists x (F_1 \wedge F_2) \rarr (\exists x F_1) \wedge F_2$ when $x \notin \free(F_2)$, and dual \\
  $\pushup$ & pushup & inverse of pushdown \\
  $N$ & renaming & renaming $x$ as $y$ in $Qx F$ when $y \notin \free(F)$ \\
\midrule
  $\splitdown$ & splitdown &  $\exists x (F_1 \vee F_2) \rarr (\exists x F_1) \vee (\exists x F_2)$, and dual \\
  $\splitup$ & splitup & inverse of splitdown \\
  $M$ & removal & $Q x F \rarr F$ when $x \notin \free(F)$ \\
  \bottomrule
\end{tabular}

\end{center}
\caption{Summary of studied rules.}
\label{fig:rules}
\end{figure}

\begin{example}
\label{ex:role-of-O}
Let
 $\phi = \exists x \exists y \exists t (R(x,t) \wedge S(t,y))$; this sentence has width $3$.
We have 
$$\phi \thickspace \rarrstar_{O} \thickspace \exists t \exists x \exists y (R(x,t) \wedge S(t,y)) \thickspace \rarrstar_{C,\pushdown} \thickspace \exists t ( (\exists x R(x,t)) \wedge (\exists y S(t,y)) ).$$
 The last formula has width $2$; the rules we used were %
in~$\T$.
It can, however, be verified that,
using the rules in $\T \setminus \{ O \}$,
this formula $\phi$ cannot be rewritten into a width $2$ formula:
using these rules, any rewriting will be derivable from $\phi$
via commutativity and renaming.
\end{example}

\begin{example}
  \label{ex:adler}
  Consider the formulas $(\phi_n)_{n \geq 1}$ defined by
  $\phi_n = \exists x_1 \ldots \exists x_n \forall y (\bigwedge_{i=1}^n E_i(x_i,y))$; the formula $\phi_n$ has width $n+1$.
  (Although formally we consider conjunction as a binary connective, 
we allow higher-arity
  conjunctions due to the presence of the associativity rule.)
  We have 
$$\phi_n \thickspace \rarrstar_{\splitdown,A}  \thickspace \exists x_1 \ldots \exists x_n (\bigwedge_{i=1}^n \forall y E_i(x_i,y))  \thickspace \rarrstar_{\pushdown,A,C}  \thickspace \bigwedge_{i=1}^n (\exists x_i \forall y E_i(x_i,y)).$$
  The last formula has width $2$, and hence the rules we consider suffice to convert each formula $\phi_n$
  into a width $2$ formula.
\end{example}

\paragraph*{Justification of the rule choice}

Before stating our main result in the next section, let us take some time to discuss why we have chosen the above rules for study, in this paper. First, as suggested in the article, these rules appear in a number of textbooks and well-known sources. In addition to being in the standard database book~\cite{AbiteboulHV95}, one can find some of the crucial ones in~\cite[page 99]{Papadimitriou94}, and many of the crucial ones also appear in the Wikipedia entry on first-order logic.\footnote{\url{https://en.wikipedia.org/wiki/First-order_logic\#Provable_identities}} We can remark that these rules are generally used when showing that first-order formulas can be rewritten into prenex normal form.

Indeed, apart from distributivity (discussed in the conclusion), we are not aware of any other syntactic rewriting rules (apart from combinations of the rules that we study); in particular, we did not find any others in any of the standard sources that we looked at.

As alluded to in the introduction, the study of many of these rules is strongly established in the research literature. They are studied in articles including \cite{AdlerWeyer,BovaChenICDT14,ChenD16,Chen14,ChenJACM17,DalmauKV02,robbers,cqc}.
As mentioned, subsets of the rule set are used crucially, in the literature, to obtain the positive results of dichotomy theorems. In particular, there are precise contexts where a subset of the rule set is sufficient to give a tractability result (for example, in~\cite{ChenD16}). The tractability result of the present article strengthens these previous tractability results since it optimally minimizes width with respect to the considered rules, and thus provides a wider and deeper perspective on these previous tractability results.

Let us emphasize that a subset of the considered rules corresponds to treewidth computation in a very precise sense (see Theorem~\ref{thm:width-for-regions} and Corollary~\ref{cor:exists-wedge-tw}). The correspondence between rewriting rules and computation of tree decompositions was also studied, for example, in~\cite{cqc, DalmauKV02,robbers}.

\section{Main theorem and roadmap}

In this section, 
we first state our main theorem, then we give the intermediate results that we will show in the remainder of the paper to
derive it; at the end of the section, we prove %
the main theorem using the intermediate results.
Our main theorem essentially says the following: 
there is an algorithm that,
given a pfo-formula $\phi$, computes a formula $\phi'$ 
that is derived from $\phi$
by applying the defined rules $\A$ and that has the minimum width
over formulas derivable by these rules.
Moreover, up to computation of minimum width tree decompositions, the algorithm computing $\phi'$ is efficient in that it runs in polynomial time.

\begin{theorem}
\label{thm:main}
  (Main theorem)
There exists an algorithm $A$ that, given a pfo-formula $\phi$,
computes a minimum width element of $[\phi]_{\A}$.
Moreover, with oracle access to an algorithm for computing minimum width tree decompositions,
the algorithm $A$ can be implemented in polynomial time.
\end{theorem}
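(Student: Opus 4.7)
The plan is to derive Theorem~\ref{thm:main} as an instantiation of the gauged-systems framework. I would set up the gauged system $G = (\pfo, \rarr_\A, \width)$ and then study its division $G / \lrarrstar_\T$, in which formulas that are $\T$-interconvertible (and therefore of the same minimum $\T$-width, the quantity computed via tree decompositions by Theorem~\ref{thm:width-for-regions}) are identified. On this quotient, the only remaining non-trivial reductions come from $\splitdown$, $\splitup$, and $M$. I would orient the quotient reduction using only $\splitdown$ and $M$, noting that a direct calculation on the set of free variables of the redex shows that each of these two rules is width-non-increasing; composing with $\T$-rewriting (which can only decrease minimum $\T$-width further), this makes the quotient a monotone gauged system.

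The main step is then to verify convergence of the quotient via Newman's lemma (Lemma~\ref{lemma:newman}). For termination, I would exhibit a simple lexicographic measure, e.g., summing over each $\vee$-node (resp.\ $\wedge$-node) the number of $\exists$-quantifiers (resp.\ $\forall$-quantifiers) in whose scope it lies, together with the overall formula size to handle $M$; both $\splitdown$ and $M$ strictly decrease this measure and it is invariant under $\T$-rewrites. For local confluence, I would enumerate the critical pairs arising from overlapping applications of $\splitdown/\splitdown$, $\splitdown/M$, and $M/M$, and show in each case that the two one-step divergences can be rejoined modulo $\T$ by further applications of $\splitdown$ and $M$. Invoking Proposition~\ref{prop:monotone-convergent}, the unique normal form of $[\phi]$ in $G / \lrarrstar_\T$ then has minimum width in $[\phi]_\A$.

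The algorithm $A$ then has two phases. Phase one normalizes $\phi$ in the quotient by exhaustively applying $\splitdown$ and $M$; after this phase, any representative $\hat\phi$ of the normal form class is width-equivalent to the minimum. Phase two selects within $[\hat\phi]_\T$ an actual pfo-formula witness of minimum width: applying Theorem~\ref{thm:width-for-regions} to each maximal $\T$-region of $\hat\phi$ (recursively, bottom-up through the syntax tree), each such region's minimum $\T$-width is realized by computing a minimum-width tree decomposition of an associated hypergraph via the oracle, and the decomposition is compiled back into a concrete choice of quantifier order and pushdown placement.

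The main obstacle I anticipate is local confluence modulo $\T$. Because $\splitdown$ duplicates a subformula $F_i$, overlapping applications can produce branches that diverge syntactically, and one has to verify that the duplicated copies can be rejoined after further $\splitdown/M$ steps combined with $\T$-rewriting (in particular, commutativity, associativity, and renaming need to be used to normalize bound-variable choices in the duplicates). A secondary worry is controlling the size blow-up inherent in $\splitdown$: one must check that applying $M$ eagerly, together with the observation that $\splitdown$ only fires when the bound variable is free in both disjuncts (after $M$-saturation), bounds the number of $\splitdown$ applications polynomially in the input size, ensuring the algorithm runs in polynomial time relative to the tree-decomposition oracle.
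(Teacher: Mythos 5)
Your overall architecture matches the paper's: quotient the reduction by $\T$-equivalence, orient the remaining reduction by $\splitdown$ and $M$, establish convergence via Newman's lemma and monotonicity of the gauge, and finish by minimizing within a $\T$-class using tree decompositions region by region. However, there are two genuine gaps. The first is monotonicity. You argue that $\splitdown$ and $M$ are width-non-increasing ``by a direct calculation on the redex,'' and conclude that the quotient gauged system is monotone. This does not follow: the gauge of a $\T$-class is the \emph{minimum} width over the class, and the class pair $C_1 \rarr C_2$ is only witnessed by \emph{some} representatives $\theta_1 \rarr \theta_2$. Knowing $\width(\theta_2) \le \width(\theta_1)$ gives $g(C_2) \le \width(\theta_1)$, which can be far above $g(C_1)$. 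One must show that the rule can be applied at (or transported to) a \emph{minimum-width} representative of $C_1$. This is the content of the paper's Section on monotonicity, and for $\splitdown$ it requires a nontrivial structural result (Theorem~\ref{thm:conjunction-on-top}): the minimum $\T$-width of a conjunction equals the maximum of the conjuncts' minimum widths and the number of free variables, which is what guarantees one can minimize width while keeping the quantifier poised above the conjunction so that $\splitdown$ still fires. Your proposal contains no substitute for this step. Relatedly, your proposed termination measure (counting, for each $\vee$-node, the $\exists$-quantifiers in whose scope it lies) is not invariant under $\T$: a single application of pushdown or pushup changes exactly this count, so the measure is not well-defined on $\T$-classes. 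The paper's potential avoids this by counting connective nodes below a quantifier's ``non-blocked ancestor,'' a notion engineered to be $\T$-invariant.

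The second gap is algorithmic. Phase one of your algorithm ``exhaustively applies $\splitdown$ and $M$'' in the quotient by $\lrarrstar_\T$, but a reduction step on a $\T$-class requires finding \emph{some} representative to which the rule applies, and you give no procedure for deciding this. The paper resolves it by introducing a second, coarser quotient by $\lrarrstar_{ACO}$ (the system $Y$, which also includes $\pushdown$ as a reduction), where applicability of each rule on a class can be decided in polynomial time (Lemma~\ref{lem:testruleapplication}, resting on Lemmas~\ref{lem:applyM}, \ref{lemma:n-fold}, and~\ref{lem:applyS}), and then proves that a normal form of $Y$ yields a normal form of $Y'$ (Theorem~\ref{thm:normal-forms}). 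Without this two-level structure or an equivalent, the claimed polynomial running time of phase one is unsupported. Finally, you correctly flag local confluence modulo $\T$ as the main obstacle, but you leave it unresolved; the overlapping-$\splitdown$ case genuinely requires the canonical-form analysis of regions that the paper develops, so this cannot be dismissed as routine critical-pair checking.
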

\noindent 
Here, we say that an algorithm \emph{computes minimum width tree decompositions} if, when given a hypergraph $H$,
it outputs a minimum width tree decomposition of $H$.

In order to establish our main theorem, 
we study and make use of the systems
$$	Y = (\pfo, \to_{\{\pushdown,\splitdown,M\}}) \div \lrarrstar_{ACO},\quad
	Y' = (\pfo, \to_{\{\splitdown,M\}}) \div \lrarrstar_{\T}.$$
We also make use of the gauged system
$$G' = (\pfo, \to_{\{\splitdown,M\}}, \wid) \div \lrarrstar_{\T},$$ where, as defined before, $\wid$ is the function that 
maps a formula to its width.
We show that for the system $Y$, normal forms can be computed in polynomial time,
and moreover, that a normal form of $Y$ directly yields a normal form of $Y'$.

\begin{theorem}
  \label{thm:push-in}
The system $Y$ %
is terminating; moreover,
there exists a polynomial-time algorithm that, given a pfo-formula $\phi$,
returns a pfo-formula $\phi^+$ such that
$[\phi^+]_{ACO}$ is a normal form of $[\phi]_{ACO}$ in the system $Y$.
\end{theorem}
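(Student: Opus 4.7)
The plan is to prove termination by exhibiting a well-founded monovariant $\mu$ on $\pfo$ that is invariant under $\lrarrstar_{ACO}$ and strictly decreases under each application of $\pushdown$, $\splitdown$, or $M$. Concretely, I would define $\alpha(\phi)$ as the number of ordered pairs $(q,c)$ where $q$ is a quantifier node of $\phi$ and $c$ is a connective node (a $\wedge$- or $\vee$-node) lying in the subtree rooted at the child of $q$, and $\beta(\phi)$ as the total number of quantifier nodes of $\phi$. Setting $\mu(\phi) = (\alpha(\phi), \beta(\phi))$ with the lexicographic order on $\nats \times \nats$ gives a well-founded measure on $\pfo$.

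The first step is verifying ACO-invariance of $\mu$: associativity and commutativity permute connective nodes inside a subtree but preserve which nodes lie beneath which quantifier, and reordering swaps two adjacent same-sign quantifiers without altering either one's set of connective descendants. Hence $\mu$ descends to the quotient system $Y$. The second step is strict decrease. For $\pushdown$, the rewrite $\exists x (F_1 \wedge F_2) \to (\exists x F_1) \wedge F_2$ (and its dual) destroys the pair $(\exists x, \wedge_{\mathrm{top}})$ together with every pair $(\exists x, c)$ for $c$ a connective of $F_2$, creating no new pairs, so $\alpha$ drops by at least $1$. For $\splitdown$, the rewrite $\exists x (F_1 \vee F_2) \to (\exists x F_1) \vee (\exists x F_2)$ destroys $(\exists x, \vee_{\mathrm{top}})$; the pairs $(\exists x, c)$ for $c$ in $F_1$ (respectively $F_2$) are reassigned to the appropriate new copy of $\exists x$ and so their total is preserved; and ancestor quantifiers of the rewritten subtree keep the same connective descendants, since the duplicated quantifier nodes are not themselves connectives. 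So $\alpha$ strictly decreases by $1$, dominating the $+1$ change in $\beta$. For $M$, the rewrite $Qx F \to F$ erases every pair $(Qx, c)$ and lowers $\beta$ by $1$, so $\mu$ strictly decreases even when $F$ is atomic.

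For the algorithmic part, I would maintain canonical representatives of $\lrarrstar_{ACO}$-classes by flattening binary $\wedge$ and $\vee$ nodes into multi-ary unordered nodes and grouping each maximal block of same-sign quantifiers into an unordered set. Applicability of a rule modulo ACO is then detected by inspecting, for each quantifier $Qx$, its flattened scope: $\pushdown$ when $Qx$ sits atop the compatible connective with at least one $x$-free child, $\splitdown$ when it sits atop the incompatible connective, and $M$ when $x$ is not free in the scope (dually for $\forall$). Each detection and rewrite takes polynomial time. Since $\mu$ strictly decreases and is bounded by $(|\phi|^2, |\phi|)$ lexicographically, at most polynomially many steps occur; only $\splitdown$ can increase the formula size, and only by one quantifier node per step, so the size stays polynomial throughout. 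When no rule applies in the canonical form, any reduction in $Y$ from $[\phi^+]_{ACO}$ would have to correspond to a directly applicable rule on some ACO-equivalent formula, which would already be visible in the canonical form. Hence $[\phi^+]_{ACO}$ is the required normal form.

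The main obstacle is the design of $\mu$. The superficially natural choice of summing scope sizes over quantifiers is monotone under $\pushdown$ and $M$, but $\splitdown$ can increase it: every ancestor of the rewritten subtree gains a node in its scope (the duplicated quantifier), which can outweigh the $-1$ contribution from the scope of the split quantifier itself. Shifting from raw scope sizes to quantifier-connective ancestor pairs neutralises this, because duplicating a quantifier node does not introduce new ancestor pairs with outer quantifiers; the lexicographic tiebreak by $\beta$ is then needed only for the edge case of $M$ applied to a quantifier whose scope is atomic.
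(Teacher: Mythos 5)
Your proposal is correct, and it reaches the theorem by a genuinely different termination argument than the paper's. The paper (Lemma~\ref{lem:terminatingY}) uses a single integer potential: the sum, over all quantifier-rooted subformulas $F$, of $a^2$ where $a$ is the number of atoms in $F$; this is ACO-invariant, strictly decreases under each of $\pushdown$, $\splitdown$, $M$ (the subadditivity $a_1^2+a_2^2<(a_1+a_2)^2$ handles $\splitdown$, and counting atoms rather than connectives makes the removal of a quantifier with atomic scope decrease the potential without any tiebreak), and is bounded by $|\phi|^3$, giving the polynomial chain-length bound immediately. Your quantifier--connective ancestor-pair count $\alpha$ with the lexicographic tiebreak $\beta$ achieves the same thing and is arguably more transparent about \emph{why} $\splitdown$ makes progress (the duplicated quantifier is not a connective, so outer pairs are untouched); the one point you should make explicit is that the lexicographic bound still yields polynomially many steps, which follows because $\beta$ increases only at $\splitdown$ steps, each of which decreases $\alpha$, so the chain has length $O(\alpha_0+\beta_0)$. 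For the algorithmic half, your canonical flattened form (multi-ary unordered connectives, unordered maximal same-quantifier blocks) is in substance the same test as the paper's Lemma~\ref{lem:testruleapplication}: there, the pushdown check also collapses the relevant region into an $r$-fold disjunction/conjunction and looks for an $x$-free member, working bottom-up through the organized-formula decomposition. One small difference is that the paper detects $\splitdown$-applicability only on the single representative $\theta$ after first exhausting $\pushdown$, justified via Lemma~\ref{lem:applyS}, whereas you detect it modulo ACO directly from the block structure; both are valid, since applicability within the ACO class (as opposed to the full $\T$ class, which is what Lemma~\ref{lem:applyS} is really needed for in Theorem~\ref{thm:normal-forms}) is visible in the flattened form. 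Your approach buys a self-contained and slightly more elementary proof of this theorem; the paper's routing through organized formulas and the applicability lemmas buys reusable infrastructure for the later normal-form and confluence arguments.
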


\begin{restatable}{theorem}{theoremnormalforms}
  \label{thm:normal-forms}
  Suppose that $\phi$ is a pfo-formula where
  $[\phi]_{ACO}$ is a normal form of the system~$Y$;
then, $[\phi]_{\T}$ is a normal form of the system
$Y'$.
\end{restatable}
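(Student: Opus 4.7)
I plan to prove the contrapositive: any $\splitdown$ or $M$ redex in some $\psi \in [\phi]_{\T}$ would force such a redex in $[\phi]_{ACO}$, contradicting the $Y$-normality hypothesis. The argument combines a preservation lemma and a confluence argument for the pushdown subsystem.

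\paragraph*{Preservation of redexes.}
First I would verify that every rule in $\T$ preserves $M$-normality: for each of $A, C, O, N, \pushdown, \pushup$, the rule preserves, for every subformula $Qx F$ of the formula, whether $x \in \free(F)$. In particular, $\pushdown$ is applied to $\exists x(F_1 \wedge F_2)$ only when $x \notin \free(F_2)$, which forces $x \in \free(F_1)$ (otherwise the starting formula would already be $M$-reducible), so the created subformula $\exists x F_1$ is not an $M$ redex. Hence every $\psi \in [\phi]_{\T}$ is $M$-normal. I would also verify that $\pushdown$, $ACO$, and $N$ rewrites never destroy an existing $\splitdown$ or $M$ redex: $ACO$ and $N$ only permute or rename; a $\pushdown$ step $\exists x(F_1 \wedge F_2) \to (\exists x F_1) \wedge F_2$ leaves the subformulas $F_1, F_2$ verbatim, preserving any redex they contain, and does not alter the free-variable structure that determines redex status of quantifications above the step site.

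\paragraph*{Confluence and conclusion.}
Second, I would establish that on $M$-normal pfo-formulas the system with reduction $\pushdown$ modulo $\lrarrstar_{ACO}$ (with $N$-renaming used to avoid name clashes) is confluent. Termination follows from Theorem~\ref{thm:push-in}. The key observation for local confluence is that $M$-normality makes $\pushdown$ essentially deterministic: at $\exists x(F_1 \wedge \cdots \wedge F_k)$ the canonical partition of the conjuncts into an ``$x$-using'' part and an ``$x$-free'' part is unique modulo $ACO$, so apparently different $\pushdown$ choices differ only by $A$-groupings and join after further $\pushdown$ and $ACO$ steps, while nested or independent applications commute. Newman's lemma then yields confluence. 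Since $\phi$ is $\pushdown$-normal modulo $ACO$ (by the hypothesis that $[\phi]_{ACO}$ is $Y$-normal) and lies in $[\phi]_{\T}$, it is the unique $\pushdown$-normal representative of $[\phi]_{\T}$ modulo $ACO$ and $N$. Any $\psi \in [\phi]_{\T}$ therefore reduces via these rewrites to $\phi$, and by the preservation step, any $\splitdown$ or $M$ redex in $\psi$ would survive to $\phi$, contradicting the hypothesis. The main obstacle will be the detailed local confluence argument---in particular the case where two $\pushdown$ steps choose different $A$-groupings of the same $\wedge$-chain, where the joining diagram closes via further $\pushdown$ and $ACO$ steps but requires careful case analysis.
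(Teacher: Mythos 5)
Your treatment of the $M$ rule is fine and matches the paper's Lemma~\ref{lem:applyM}: whether a quantification binds a variable is invariant under every rule in $\T$, so $M$-applicability transfers from $[\phi]_{\T}$ back to $\phi$. The gap is in the $\splitdown$ case, where your whole argument rests on the claim that the reduction $\rarr_{\pushdown}$ modulo $\lrarrstar_{ACO}$ (plus renaming) is confluent on $M$-normal formulas, so that $\phi$ is the \emph{unique} pushed-down representative of $[\phi]_{\T}$ and every $\psi\in[\phi]_{\T}$ reduces to it. That claim is false. Take $\theta=\exists x\exists y\exists z\exists w(R(x,y)\wedge S(y,z)\wedge T(z,w))$. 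Pushing $\exists y$ inward first leads (after also pushing $x$ and $w$) to $\exists z\bigl(\exists y((\exists x R(x,y))\wedge S(y,z))\wedge \exists w T(z,w)\bigr)$, while pushing $\exists z$ first leads to $\exists y\bigl((\exists x R(x,y))\wedge \exists z(S(y,z)\wedge\exists w T(z,w))\bigr)$. Both are $M$-normal, both are pushed-down (each remaining quantifier sits over a conjunction whose conjuncts, under every $ACO$-regrouping, all contain its variable free), and they are not $ACO$- or $N$-equivalent: one has a one-atom top-level conjunct built from $T$, the other from $R$. So the quotient system has two distinct normal forms in a single $\lrarrstar_{\T}$-class, and local confluence already fails at the divergence between pushing $\exists y$ and pushing $\exists z$, because each quantifier, once pushed, blocks the other. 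The critical pair you single out (different $A$-groupings of one $\wedge$-chain for a single quantifier) is the harmless one; the "nested or independent applications commute" case is the one that breaks.

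Consequently you cannot conclude that an arbitrary $\psi\in[\phi]_{\T}$ reduces to $\phi$ by pushdown-direction steps, so the redex-preservation argument has no path along which to propagate a $\splitdown$ redex to $[\phi]_{ACO}$. (Note also that $\pushup$, which is available inside $[\phi]_{\T}$, genuinely destroys $\splitdown$ redexes, e.g. $(\exists x(G_1\vee G_2))\wedge F\rarr_{\pushup}\exists x((G_1\vee G_2)\wedge F)$, so a purely local preservation argument cannot be rescued without some global replacement for uniqueness.) The paper closes exactly this gap with Lemma~\ref{lem:applyS}: for a class whose $ACO$-class is pushed-down, $\splitdown$-applicability on $[\theta]_{\T}$ implies $\splitdown$-applicability on $\theta$ itself, proved via the decomposition into organized formulas (Propositions~\ref{prop:each-pfo-is-organized} and~\ref{prop:T-to-organized}) and the $k$-fold-conjunction invariant of Lemma~\ref{lemma:n-fold}, which is preserved under all of $\T$ rather than only under pushdown-directed reduction. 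Some invariant of that kind, stable under $\pushup$ as well as $\pushdown$, is what your proof is missing.
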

\noindent 
We then prove that the system $Y'$ is convergent, and that its corresponding
gauged system $G'$ is monotone.  Together, these results allow us to leverage
Proposition~\ref{prop:monotone-convergent} and allow us to compute
minimum-width equivalence classes in $G'$.

\begin{theorem}
  \label{thm:convergent}
The system $Y'$ is convergent.
\end{theorem}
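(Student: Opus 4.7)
The plan is to apply Newman's Lemma (Lemma~\ref{lemma:newman}), which reduces convergence of $Y'$ to termination plus local confluence; I would establish each in turn.

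For termination, I would define a well-founded measure on $\T$-equivalence classes. First, $\pushdown$ on its own terminates: each application strictly reduces the count of positions where a quantifier sits above a conjunction (matching its dual) in which the bound variable is absent from one conjunct. Together with the structural rules $A, C, O, N$ this yields, for each $\T$-class, a push-down-canonical representative $\hat\psi$, unique up to $A, C, O, N$. On these canonical forms I would take a lexicographic pair: the primary component counts pairs $(u, v)$ in the syntax tree of $\hat\psi$ where $u$ is a quantifier node $Qx$ and $v$ is a matching connective ($\vee$ when $Q = \exists$, $\wedge$ when $Q = \forall$) strictly below $u$; the secondary component is the total number of quantifier nodes. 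A $\splitdown$ applied to $\hat\psi$ loses the pair formed by the split quantifier and the top-level matching connective of the redex, and any subsequent re-canonicalization (further $\pushdown$ steps on the new quantifier copies) can only weakly decrease the primary, so the net change is strictly negative. An $M$ step does not increase the primary and strictly decreases the secondary.

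For local confluence, I would carry out a critical-pair analysis. Given $[\phi]_\T \to [\phi_1]_\T$ and $[\phi]_\T \to [\phi_2]_\T$ witnessed by rewrites $r_1, r_2$ from representatives $\psi_1, \psi_2 \in [\phi]_\T$, I would bring both rewrites into a common representative via $\T$-moves and case-split on the relative positions of the two redexes: (i) disjoint positions commute, with any duplication induced by an intervening $\splitdown$ absorbed by repeating the other rewrite in each copy produced; (ii) properly nested positions are resolved by applying the inner rewrite inside each copy produced by the outer $\splitdown$; (iii) overlapping positions at the same quantifier-connective root are handled by finite case analysis to show $\T$-joinability.

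The main obstacle is the termination argument, specifically the accounting in the re-canonicalization following a $\splitdown$. When $\splitdown$ at $\exists x (F_1 \vee F_2)$ inside $\hat\psi$ yields $(\exists x F_1) \vee (\exists x F_2)$, and some $F_i$ is a conjunction with $x$ free in only one side, re-canonicalization triggers further $\pushdown$ steps on the new quantifier copies; certifying that each such $\pushdown$ only weakly decreases the primary pair count, so that the strict $-1$ drop from the original $\splitdown$ survives, requires careful case analysis. The critical-pair analysis is lengthier but essentially mechanical once the redex-position taxonomy is fixed.
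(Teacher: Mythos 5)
Your overall strategy (Newman's Lemma reducing convergence to termination plus local confluence) matches the paper's, but your termination measure has a genuine flaw: it is not well-defined on $\T$-equivalence classes. You define it on a ``push-down-canonical representative $\hat\psi$, unique up to $A,C,O,N$'', but no such unique representative exists once vacuous quantifiers are present---and $\splitdown$ itself manufactures them (e.g.\ $\forall x(R(x)\wedge S)\rarr_{\splitdown}(\forall x R(x))\wedge(\forall x S)$ creates a vacuous $\forall x$ over $S$). A vacuous quantifier sitting on a disjunction can be pushed into \emph{either} disjunct, and the resulting pushed-down forms need not be interconvertible by $A,C,O,N$: for the class of $\forall x((S_1\wedge S_2)\vee T)$ with $x$ vacuous, both $(\forall x(S_1\wedge S_2))\vee T$ and $(S_1\wedge S_2)\vee(\forall x T)$ are pushed-down representatives, with primary counts $1$ and $0$ respectively. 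With the unlucky choice of canonical form, your lexicographic pair can then \emph{increase} across a $\splitdown$ step: applying $\splitdown$ to the representative $(\forall x(S_1\wedge S_2))\vee T$ yields a class whose canonical form has pair $(0,2)$, while the source class could have been assigned $(0,1)$. A second, related gap: the reduction on classes is witnessed by a $\splitdown$ applied to an \emph{arbitrary} representative, whereas you only account for a $\splitdown$ applied to $\hat\psi$ itself; you would still need to argue that a split performed in some other representative drives the canonical form's measure down. The paper avoids both problems by defining a potential directly on \emph{every} formula---for each quantifier node it counts the matching connectives below its ``non-blocked ancestor'', the highest position that quantifier can reach via $\T$-moves---and proving this potential invariant under all of $\T$ including pushup, so it descends to classes without choosing any representative; it then shows each $\splitdown$ or $M$ step strictly decreases it (using a $3\bar p+\mu$ combination to absorb the duplication caused by $\splitdown$).

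On local confluence, your taxonomy (disjoint, nested, overlapping redexes) is the right shape, but the hard case is two $\splitdown$ applications on same-kind quantifiers over the same region of an organized formula: there, ``bringing both rewrites into a common representative via $\T$-moves'' is precisely the nontrivial step, and the paper needs the organized-formula decomposition, an equivalence relation on the holes of the region, and a canonical conjunction form of the region to reconcile the two splits. Describing this as essentially mechanical underestimates it, though the plan is not wrong in principle.
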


\begin{restatable}{theorem}{theoremmonotone}
    \label{thm:monotone}
The gauged system $G'$ is monotone.
\end{restatable}
\noindent 
From the previous results, we are able to present an algorithm that,
given a pfo-formula $\phi$, yields a pfo-formula $\phi^+$ 
obtainable from $\phi$ via the studied rewriting rules,
where
$[\phi^+]_{\T}$ is a normal form of the system $Y'$ and hence of $G'$.
In order to derive the main theorem, 
the remaining piece needed is to show that minimization can be performed within a $\lrarrstar_{\T}$-equivalence class---that is, that from $\phi^+$,
a minimum width element from the set $[\phi^+]_{\T}$ can be computed.
This piece is supplied by the following theorem.

\begin{restatable}{theorem}{theoremminimizeinT}
\label{thm:minimize-in-T}
  There exists an algorithm that, given a pfo-formula $\theta$,
  outputs a pfo-formula $\theta^+$ having minimum width among all formulas in
  $[\theta]_{\T}$.  With oracle access to an algorithm for computing minimum width tree decompositions,
  this algorithm can be implemented in polynomial time.
\end{restatable}
\noindent 
Before giving the formal proof of the main theorem, let us describe briefly how and why its algorithm works.  Given a pfo-formula $\phi$, the main theorem's algorithm first invokes the algorithm of Theorem~\ref{thm:push-in}, which yields a pfo-formula $\theta$ where, in the system $Y$, 
it holds that $[\theta]_{ACO}$ is a normal form of $[\phi]_{ACO}$.
By Theorem~\ref{thm:normal-forms} and some extra reasoning,
we have that, in the system $Y'$,
$[\theta]_{\T}$ is a normal form of $[\phi]_{\T}$.
The algorithm then uses as a subroutine
the algorithm of Theorem~\ref{thm:minimize-in-T} to
compute and output a minimum width element of 
 $[\phi]_{\T}$.
The correctness of this output is provided by 
Theorems~\ref{thm:convergent} and~\ref{thm:monotone} in
conjunction with Proposition~\ref{prop:monotone-convergent};
these two theorems provide that the system $Y'$ and its respective gauged system
$G'$ are convergent and monotone, respectively.

We next give an example explaining how the main theorem's algorithm works.

\begin{exa}
Consider the example formula 
\[\phi_0 = \forall y \forall z \exists v_1 \exists v_3 \exists v_2 (T(y,z) \vee (E(v_1,v_2) \wedge (E(v_2,v_3) \wedge E(v_3,z)))).\]
By applying each of the splitdown and removal rules once to the quantification
$\exists v_2$, we obtain  $\phi_0 \rarrstar_{\splitdown,M} \phi_1$
where 
\[\phi_1 = \forall y \forall z \exists v_1 \exists v_3 (T(y,z) \vee \exists v_2 (E(v_1,v_2) \wedge (E(v_2,v_3) \wedge E(v_3,z)))).\]
Now observe that, in the formula $\phi_1$,
 we cannot apply the pushdown rule to the quantification
$\exists v_2$, since when viewing the formula as a tree,
this quantification is on top of a conjunction where $v_2$ 
appears as a free variable on both sides of the conjunction.  
However, if we apply associativity,
we can obtain the formula 
\[ \phi'_1 = \forall y \forall z \exists v_1 \exists v_3 (T(y,z) \vee \exists v_2 ((E(v_1,v_2) \wedge E(v_2,v_3)) \wedge E(v_3,z))).
\]
where $\phi_1 \rarr_A \phi'_1$.  Since $\phi'_1$ can be derived from $\phi_1$ 
via applications of the ACO rules, we have $[\phi_1]_{ACO} = [\phi'_1]_{ACO}$.
Then, we can apply the pushdown rule to the quantification $\exists v_2$
to obtain $\phi'_1 \rarr_{\pushdown} \phi_2$ where
\[ \phi_2 = \forall y \forall z \exists v_1 \exists v_3 (T(y,z) \vee (\exists v_2 (E(v_1,v_2) \wedge E(v_2,v_3))) \wedge E(v_3,z)).
\]

We have $[\phi_0]_{ACO}  \rarrstar_{\splitdown,M} [\phi_1]_{ACO} = [\phi'_1]_{ACO} \rarr_{\pushdown} [\phi_2]_{ACO}$,
which immediately implies
$[\phi_0]_{ACO} \rarrstar_{\pushdown,\splitdown,M} [\phi_2]_{ACO}$.

By applying splitdown and removal to the quantification $\exists v_3$
and then to the quantification $\exists v_1$, we obtain 
\[ \phi_3 = \forall y \forall z (T(y,z) \vee  \exists v_1 \exists v_3 ((\exists v_2 (E(v_1,v_2) \wedge E(v_2,v_3))) \wedge E(v_3,z)))
\]
with $\phi_2 \rarrstar_{\splitdown,M} \phi_3$.

The pushdown rule cannot be applied in $\phi_3$: the only quantifications
that are above a connective are $\forall z$, $\exists v_3$, and $\exists v_2$,
and in none of these cases can the pushdown rule be applied.
However, by applying the reordering rule to $\forall y \forall z$
and to $\exists v_1 \exists v_3$
we obtain from $\phi_3$ the formula
\[ \phi'_3 = \forall z \forall y (T(y,z) \vee  \exists v_3 \exists v_1 ((\exists v_2 (E(v_1,v_2) \wedge E(v_2,v_3))) \wedge E(v_3,z)))
\]
where in each of these pairs of quantifications, the quantifications have been swapped;
we have $\phi_3 \rarrstar_O \phi'_3$.
Then, the pushdown rule can be applied to $\phi'_3$ at both of the quantifications $\forall y$ and $\exists v_1$; we have
$\phi'_3 \rarrstar_{\pushdown} \phi_4$, where
\[ \phi_4 = \forall z ((\forall y T(y,z)) \vee  \exists v_3 ((\exists v_1 \exists v_2 (E(v_1,v_2) \wedge E(v_2,v_3))) \wedge E(v_3,z))).
\]
By applying the reordering rule to $\exists v_1 \exists v_2$ and then
the pushdown rule to $\exists v_1$, we obtain
\[ \phi_5 = \forall z ((\forall y T(y,z)) \vee  \exists v_3 ((\exists v_2  (\exists v_1 E(v_1,v_2) \wedge E(v_2,v_3))) \wedge E(v_3,z))).
\]
We have  $[\phi_2]_{ACO} \rarrstar_{\splitdown,M} [\phi_3]_{ACO} = [\phi'_3]_{ACO} \rarr_{\pushdown} [\phi_4]_{ACO}$.
From the way that we derived~$\phi_5$ from $\phi_4$,
we have that there exists a formula $\phi'_4 \in  [\phi_4]_{ACO}$
such that $\phi'_4 \rarr_{\pushdown} \phi_5$; thus
$[\phi_4]_{ACO} \rarr_{\pushdown} [\phi_5]_{ACO}$.

We have derived the relationships
\[ [\phi_0]_{ACO} \rarrstar_{\pushdown,\splitdown,M} [\phi_2]_{ACO}  \rarrstar_{\pushdown,\splitdown,M} [\phi_4]_{ACO}  \rarrstar_{\pushdown,\splitdown,M} [\phi_5]_{ACO},\]
implying that 
$[\phi_0]_{ACO} \rarrstar_{\pushdown,\splitdown,M} [\phi_5]_{ACO}$.
In $\phi_5$, the last instance of $\wedge$ that appears has as its conjuncts
the formulas $(\exists v_2  (\exists v_1 E(v_1,v_2) \wedge E(v_2,v_3)))$ and 
$E(v_3,z)$.  Thus in looking at the syntax tree of $\phi_5$,
there are no two instances of the same connective adjacent to each other,
and so the associativity rule cannot be applied to $\phi_5$.
In this syntax tree, it can also be verified that
there are no two instances of quantification adjacent to each other,
and so the reordering rule cannot be applied to $\phi_5$.
Thus, each element of the equivalence class 
$ [\phi_5]_{ACO}$ can be obtained from $\phi_5$ solely by applying
the commutativity rule.  It can be verified readily that
among the rules pushdown, splitdown, and removal, none of them
can be applied to any formula in 
$ [\phi_5]_{ACO}$.  Thus, the reduction of the system $Y$
cannot be applied to $[\phi_5]_{ACO}$, which is an element of the system $Y$.
Hence, we have that $[\phi_5]_{ACO}$ is a normal form of the system $Y$.

Our main theorem's algorithm, having found this
formula $\phi_5$ where $[\phi_5]_{ACO}$ is a normal form in the system $Y$,
then uses the algorithm of
 Theorem~\ref{thm:minimize-in-T} to
find a minimum width element of 
$[\phi_5]_{\T}$.
Let us put aside the renaming rule (rule $N$), which
one can verify from Proposition~\ref{prop:getridofN} is actually
no longer needed to minimize width at this point, since $\phi_5$
and indeed all elements in $[\phi_5]_{\T}$ are \emph{standardized} 
(see Section~\ref{subsect:standardized} for the definition of this notion; note that Theorem~\ref{thm:minimize-in-T}'s algorithm standardizes the formula that it is given upfront, and then also subsequently puts aside the renaming rule).
It follows from a result to be shown 
(namely, Proposition~\ref{prop:T-to-organized})
that the subformula 
\[\psi =  \exists v_3 ((\exists v_2  (\exists v_1 E(v_1,v_2) \wedge E(v_2,v_3))) \wedge E(v_3,z))\]
of $\phi_5$ will remain intact up to applying rules in $\T \setminus \{N\}$ to
$\phi_5$, in that any formula reachable from~$\phi_5$ by applying
rules in $\T \setminus \{N\}$ will contain a subformula reachable from $\psi$
by applying rules in $\T \setminus \{N\}$, that is, a subformula contained in $[\psi]_{\T \setminus \{N\}}$.
Part of what Theorem~\ref{thm:minimize-in-T}'s algorithm achieves
is to find a minimum width element of $[\psi]_{\T \setminus \{N\}}$
using tree decomposition methods.  In this particular situation,
it can be verified that $\psi$ itself is a 
 minimum width element of~$[\psi]_{\T \setminus \{N\}}$: the formula $\psi$ has width~$2$,
and any element of~$[\psi]_{\T \setminus \{N\}}$ must have width $2$,
since $\psi$ has atoms having width~$2$, and each atom present
in $\psi$ must be present in each element of $[\psi]_{\T \setminus \{N\}}$.
\end{exa}

\begin{proof}[Proof of the main theorem---Theorem~\ref{thm:main}]
Given a pfo-for\-mula $\phi$, the algorithm first applies the algorithm of
Theorem~\ref{thm:push-in} to obtain a pfo-formula $\theta$ where
$[\theta]_{ACO}$ is a normal form of $[\phi]_{ACO}$ in $Y$,
and then applies the algorithm of Theorem~\ref{thm:minimize-in-T}
to obtain a pfo-formula $\theta^+$ having minimum width among the formulas
in $[\theta]_{\T}$; $\theta^+$ is the output of the algorithm.

We justify this algorithm's correctness as follows.
As $[\theta]_{ACO}$ is a normal form of $[\phi]_{ACO}$ in $Y$,
we have $[\phi]_{ACO} \rarrstar_{\{\pushdown,\splitdown,M\}} [\theta]_{ACO}$.
It follows that
 $[\phi]_{\T} \rarrstar_{\{\splitdown,M\}} [\theta]_{\T}$;
this is due to the observations that
(1) $[\psi]_{ACO} \rarr_{\splitdown, M} [\psi']_{ACO}$ implies
 $[\psi]_{\T} \rarr_{\splitdown, M} [\psi']_{\T}$ 
(as $\{A, C, O\} \subseteq \T$), and
(2) $[\psi]_{ACO} \rarr_{\pushdown} [\psi']_{ACO}$ implies
 $[\psi]_{\T} = [\psi']_{\T}$.
It follows from Theorem~\ref{thm:normal-forms} that $[\theta]_{\T}$ is a normal form
of $[\phi]_{\T}$ in $Y'$.
From Theorems~\ref{thm:convergent} and~\ref{thm:monotone}, we have that
the gauged system $G'$ is convergent and monotone, so by
Proposition~\ref{prop:monotone-convergent}, we have that,
in the gauged system $G'$,
the element $[\theta]_{\T}$ has the minimum gauge among all elements (of $G'$)
that are $\lrarrstar_{\{\splitdown,M\}}$-related to $[\phi]_{\T}$,
or equivalently, among all elements (of $G'$) that are contained in 
$[\phi]_{\T \cup \{ \splitdown, \splitup, M \} } $.
Thus, a minimum width element in $[\theta]_{\T}$ is a minimum width element
in $[\phi]_{\T \cup \{ \splitdown,\splitup,M \} }$.
\end{proof}

\section{Formulas}\label{sct:formulas}

In this section, we define a few types of formulas to be used
in our development, and show some basic properties thereof.

\subsection{Standardized formulas}
\label{subsect:standardized}
We first present the notion of \emph{standardized formula};
intuitively, a standardized formula is one where there is no name clash between
a quantified variable $x$ and other variable occurrences in the formula.
We will show a proposition implying that, 
given any formula $\phi$,
this formula can be converted into a standardized formula $\phi'_0$ via
applications of the renaming rule, and moreover, if one has the goal
of minimizing width via the presented rules, one can simply 
minimize the width of $\phi'_0$ via the presented rules minus the renaming
rule; that is, to minimize the width of $\phi'_0$, one never needs
the renaming rule.
Define a \emph{standardized} formula to be a formula $\phi$ where, for each occurrence $Q x$
of quantification, the following hold: (1) $x$ is not quantified elsewhere, that is,
for any other occurrence $Q' x'$ of quantification, $x \neq x'$ holds;
(2) $x \notin \free(\phi)$.
It is straightforward to verify that a subformula of a standardized formula is also standardized.

We will use the following proposition which 
implies that, in terms of applying the studied rules to reduce width,
one may always assume that variable renaming that leads to a standardized formula
is always performed upfront, and afterwards, no variable renaming is used.

\begin{propositionrep}\label{prop:getridofN}
  Let $\phi$ be a pfo-formula.
  Suppose $\phi = \phi_0$ and $\phi_0 \rarr_{A_1} \phi_1 \cdots \rarr_{A_t} \phi_t$
  where each $A_i$ is in $\T  \cup \{\splitdown, M \}$.
  Let $A'_1, \ldots, A'_s$ be the sequence obtained from $A_1, \ldots, A_t$
  by removing instances of $N$. Then:
\begin{enumerate}
\item There exists a standardized formula $\phi'_0$ with the following property: 

 there exist
 formulas $\phi'_1, \ldots, \phi'_s$
  where 
$\phi \rarrstar_{N} \phi'_0 \rarr_{A'_1} \phi'_1 \cdots \rarr_{A'_s} \phi'_s \rarrstar_{N} \phi_t$.  

\item For any standardized formula $\phi'_0$ with
$\phi \rarrstar_{N} \phi'_0$, the  property given in (1) holds.
\end{enumerate}
\end{propositionrep}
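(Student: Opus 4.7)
The plan is to establish a \emph{commutation lemma}: for any rule $A \in (\T \cup \{\splitdown, M\}) \setminus \{N\}$, if $\psi_1 \rarr_A \psi'_1$ and $\psi_1 \lrarrstar_N \psi_2$, then there exists $\psi'_2$ such that $\psi_2 \rarr_A \psi'_2$ and $\psi'_1 \lrarrstar_N \psi'_2$. With this lemma in hand, both parts of the proposition follow from a uniform induction on the length $t$ of the derivation. Note first that $\rarr_N$ is symmetric because the inverse of a renaming is again a renaming, so $\rarrstar_N = \lrarrstar_N$ is an equivalence relation whose classes are the usual alpha-equivalence classes: sets of formulas sharing the same syntax-tree shape, atomic and connective labels, and free-variable patterns, possibly with different bound-variable names.

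I would then prove the commutation lemma case by case on the rule $A$. The purely structural rules (associativity, commutativity, reordering) commute trivially with $\lrarrstar_N$, since they do not refer to variable names at all. The pushdown, pushup, and removal rules have applicability conditions stated in terms of the free-variable set of a subformula, and these sets are preserved under $\lrarrstar_N$, so applicability transfers and the results remain alpha-equivalent. For splitdown, one needs the observation that when $\exists x (F_1 \vee F_2) \lrarrstar_N \exists x' (G_1 \vee G_2)$, the renaming of the top-level quantifier variable propagates consistently into both disjuncts; applying splitdown on either side therefore produces alpha-equivalent formulas.

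Equipped with the commutation lemma, I would prove part (2) by induction on $t$; part (1) then follows at once, since one is free to choose any standardized $\phi'_0$ and such a formula always exists (replace each bound variable by a fresh name). Given a standardized $\phi'_0$ with $\phi \rarrstar_N \phi'_0$, I would inductively build a running formula $\tilde\phi_i$ with $\tilde\phi_i \lrarrstar_N \phi_i$, starting from $\tilde\phi_0 = \phi'_0$. At step $i$: if $A_i = N$, set $\tilde\phi_i = \tilde\phi_{i-1}$ using $\phi_{i-1} \lrarrstar_N \phi_i$; if $A_i \neq N$, invoke the commutation lemma to obtain $\tilde\phi_i$ with $\tilde\phi_{i-1} \rarr_{A_i} \tilde\phi_i$ and $\tilde\phi_i \lrarrstar_N \phi_i$. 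Taking the subsequence of $\tilde\phi_i$ at the positions where $A_i \neq N$ gives the desired $\phi'_1, \ldots, \phi'_s$, and $\phi'_s \lrarrstar_N \phi_t$ is precisely the final invariant (which, by symmetry of $\rarr_N$, supplies the trailing $\rarrstar_N$-step).

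The main obstacle will be the careful verification of the commutation lemma for splitdown, because this rule duplicates the quantified variable across the two disjuncts: one has to show that the single chain of $N$-steps relating the left-hand sides can be mirrored by a chain relating the right-hand sides, by splitting the top-level renaming across both copies. A secondary subtlety is the holey-formula case, where one must check that the associations attached to holey formulas are preserved by renamings (which do not alter the underlying syntax-tree structure on which the association is defined), so that the entire argument goes through verbatim for holey formulas together with their associations.
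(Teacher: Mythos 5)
Your overall strategy matches the paper's: standardize up front, replay the non-$N$ steps, and absorb all renaming into a prefix and a suffix of $N$-steps. (The paper constructs $\phi'_0$ with globally fresh, pairwise distinct bound variables, applies $A'_1,\ldots,A'_s$ at the same locations as in the original derivation, and observes that $\phi'_s$ and $\phi_t$ then differ only by a bijective renaming of quantifier nodes.) Your reorganization of this into an explicit commutation lemma plus a step-by-step induction is natural, but the commutation lemma is \emph{false as stated}, and the failure occurs at a rule you classify as unproblematic: pushup. The side condition of $(\exists x\, F_1)\wedge F_2 \rarr_{\pushup} \exists x\,(F_1\wedge F_2)$ is $x\notin\free(F_2)$, where $F_2$ lies \emph{outside} the scope of $\exists x$; this is a name-clash condition, not a binding-structure condition, and it is not invariant under $\lrarrstar_N$. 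Concretely, $(\exists x\, R(x))\wedge S(y) \rarr_N (\exists y\, R(y))\wedge S(y)$ is a legal renaming (the rule only requires $y\notin\free(R(x))$), and pushup applies to the left-hand formula but to no position of the right-hand one, so the required $\psi'_2$ does not exist. Your justification---``these sets are preserved under $\lrarrstar_N$, so applicability transfers''---conflates preservation of the set $\free(F_2)$ with preservation of membership of the \emph{renamed} quantifier variable in that set. (By contrast, for pushdown and removal the occurrences in question are bound by the very quantifier being moved, so there the condition genuinely is a binding-structure property and your argument is fine; and splitdown, which you single out as the main obstacle, has no side condition at all, so its applicability transfers trivially---only the bookkeeping of the duplicated quantifier in the output needs the care you describe.)

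The repair is to prove the lemma only in the instance you actually use: $\psi_2=\tilde\phi_{i-1}$ is not an arbitrary $\lrarrstar_N$-relative of $\phi_{i-1}$ but one reachable from a \emph{standardized} formula by rules in $(\T\cup\{\splitdown,M\})\setminus\{N\}$. Add to your induction the invariant that in $\tilde\phi_i$ no quantified variable is free in the whole formula and no occurrence of a quantified variable lies outside the scope of a quantifier carrying that same variable name except in a parallel (disjoint) subtree; standardization gives this for $\tilde\phi_0$, and each non-$N$ rule preserves it (the only delicate one is splitdown, which duplicates a quantifier into two parallel subtrees, so neither copy is an ancestor of the other). Under this invariant the pushup side condition holds automatically at the relevant position, and your induction goes through. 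This is precisely the role standardization plays in the paper's proof, where it is built in by the choice of globally fresh bound variables rather than carried as an explicit inductive invariant.
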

\begin{proof}
    We start by describing the first renaming sequence in the proposition. 
We view formulas as syntax trees.  For every node $v$ in $\phi$ labeled with a quantification $Q x$, introduce a variable $x_v$ that does not appear anywhere else in $\phi$ and is different for every $v$. Then the sequence of $\rarr_N$ applications renames for every quantification $Q x$ the variable to~$x_v$. Let $\phi_0'$ be the resulting formula. By definition $\phi \rarrstar_N \phi_0'$ and~$\phi_0'$ is standardized, since all quantifiers now are on a newly introduced variable not appearing in $\phi$ which are thus not free in $\phi_0'$ and appear in no other quantification. 
    The remainder of the proof assumes that~$\phi_0'$ is standardized,
    and shows that the given property holds, yielding the entire proposition.
    
    We now apply the sequence of rules $A_1', \ldots, A_s'$ on $\phi_0'$, by 
using the locations where the respective rules were applied to $\phi$.
Note that this is possible since all rules we consider work on at most a single quantifier and are independent of variable names. Let the result be $\phi_s'$.
    
    It remains to rename the variables of $\phi_s'$. To this end, observe that by a simple induction along the reduction sequence we can see that there is an isomorphism $b$ between~$\phi_s'$ and $\phi_t$ (which was the last formula in the given sequence). In particular, there is for every node $v$ in $\phi_s'$ labeled by a quantifier the corresponding node $b(v)$ in $\phi_t$ that is also labeled by the same quantifier. So we simply rename, for every such node $v$, the variable quantified at $v$ to the variable quantified at $b(v)$; this allows us to rewrite $\phi_s'$ to $\phi_t$ by only applying $\rarr_N$.
\end{proof}

Note that $ \phi'_s \rarrstar_{N} \phi_t$ implies that
$\phi'_s$ and $\phi_t$ have the same width, since applying the renaming rule does not change the width.  Hence if one wants to optimally minimize width up to all of the presented rules, this proposition yields that one can first standardize a given formula, and then minimize with respect to all of the presented rules minus the renaming rule.

\begin{example}
Consider the formula $\phi =  (\exists w (E(u,w) \land E(w,v))) \land (\exists u (E(v,u) \land E(u,w)))$.  
This formula $\phi$ is not standardized, since there exists an instance of quantification $Qw$ whose variable is free in $\phi$: we have $w \in \free(\phi)$ since $w$ is a free variable of the conjunct 
$(\exists u (E(v,u) \land E(u,w)))$, and hence of $\phi$ itself.
We have 
\begin{eqnarray*}
\phi & \rarr_N & (\exists c (E(u,c) \land E(c,v))) \land (\exists u (E(v,u) \land E(u,w))) \\
& \rarr_{\pushup} & \exists c \hspace{3pt} ( (E(u,c) \land E(c,v)) \land (\exists u (E(v,u) \land E(u,w)))) \\
& \rarr_{C} &
\exists c \hspace{3pt} ( 
(\exists u (E(v,u) \land E(u,w)))
\land 
(E(u,c) \land E(c,v)) 
) \\
& \rarr_N & 
\exists c \hspace{3pt} ( 
(\exists d (E(v,d) \land E(d,w)))
\land 
(E(u,c) \land E(c,v))  \\
& \rarr_{\pushup} &
\exists c \exists d \hspace{3pt} ( 
(
(E(v,d) \land E(d,w))
\land 
(E(u,c) \land E(c,v))
).
\end{eqnarray*}
\noindent 
We view this example sequence as the sequence
 $\phi = \phi_0 \rarr_{A_1} \phi_1 \cdots \rarr_{A_5} \phi_5$
containing~$5$ rule applications:
$A_1 = N, A_2 = \pushup, A_3 = C, A_4 = N, A_5 = \pushup$.
Proposition~\ref{prop:getridofN} yields that
there exists a standardized formula $\phi'_0$ with 
$\phi \rarrstar_N \phi'_0$
where, with respect to the sequence $A'_1 = \pushup, A'_2 = C, A'_3 = \pushup$, 
the property given in part (1) of the proposition holds.
The formula $\phi'_0$ is a formula obtained from $\phi$ by iteratively taking each
instance of quantification $Qx$, and renaming the variable $x$
to a variable that appears nowhere else.  So, for example,
by starting from $\phi$ and renaming $w$ as $a$ and then $u$ as $b$,
one could obtain 
$\phi'_0 = 
 (\exists a (E(u,a) \land E(a,v))) \land (\exists b (E(v,b) \land E(b,w)))$.
The proposition then says that, by applying the sequence of rules
$A'_1, A'_2, A'_3$ at the locations where the respective rules 
$A_2, A_3, A_5$ were applied, we obtain a formula
$\phi'_3 = 
\exists a \exists b \hspace{3pt} ( 
(
(E(v,b) \land E(b,w))
\land 
(E(u,a) \land E(a,v))
)$
from which we can obtain the last formula $\phi_5$ in the given sequence
solely by using the renaming rule, which evidently holds in this case.
\end{example}

\subsection{Holey formulas}

We next define the notion of \emph{holey formula}, which will be used
subsequently to define the notion of \emph{organized formula}.
An organized formula is a formula that can be naturally stratified
into \emph{regions}, where each region is a holey formula.

A \emph{holey formula} is intuitively defined similarly to a formula,
but whereas a formula contains of atoms, a holey formula has placeholders where further formulas
can be attached; these placeholders are represented by natural numbers.
Formally,
a \emph{holey pfo-formula} is a formula built as follows:
each natural number $i \ge 1$ is a holey pfo-formula,
and is referred to as a \emph{hole};
when~$\phi$ and $\phi'$ are holey pfo-formulas,
so are $\phi \wedge \phi'$ and $\phi \vee \phi'$; and,
when~$\phi$ is a holey pfo-formula and $x$ is a variable,
so are $\exists x \phi$ and $\forall x \phi$.
We require that for each holey pfo-formula~$\phi$, no natural number
occurs more than once.
We say that a holey pfo-formula is \emph{atomic}
if it is equal to a natural number (equivalently, if it contains no
connectives nor quantifiers).
A
\emph{holey $\exwe$-formula}
is defined as a holey pfo-formula where, apart from atoms,
only the formation rules involving
existential quantification and conjunction are permitted.
A \emph{holey $\fove$-formula} is defined dually.

When $\phi$ is a holey formula with holes among $1, \ldots, k$ and
we have that $\psi_1, \ldots, \psi_k$ are formulas, we use
$\phi \llb \psi_1, \ldots, \psi_k \rrb$ to denote the formula obtained from $\phi$
by substituting, for each $i = 1, \ldots, k$, the formula $\psi_i$ in place
of $i$.
\begin{example}
  Consider the holey pfo-formula 
$\phi = 2 \wedge 1$, and the formulas $\psi_1 = S(x) \vee S(y)$,
  $\psi_2 = R(x,z)$.  We have
$\phi \llb \psi_1, \psi_2 \rrb = R(x,z) \wedge (S(x) \vee S(y))$.
\end{example}

In the sequel, we will speak of applying rewriting rules
(generally excluding the rule~$N$) to
holey formulas.  In order to speak of the applicability of rules such as the pushdown and
pushup rules, we need to associate a set of free variables to each subformula of a holey formula.
We define an \emph{association} for a holey formula $\phi$ to be a partial mapping $a$
defined on the natural numbers that is defined on each hole in $\phi$,
and where, for each number $i$ on which $a$ is defined, it holds that $a(i)$ is
a set of variables.
With an association $a$ for a holey formula $\phi$, we can naturally define a set
of free variables on each subformula of $\phi$, by considering
$\free(i) = a(i)$ for each~$i$ on which $a$ is defined, and then using the usual
inductive definition of $\free(\cdot)$.

\subsection{Organized formulas}

\newcommand{\regions}{\mathrm{regions}}

We next define and study \emph{organized formulas}, which, intuitively speaking,
are pfo-formulas that are stratified into regions, 
based on the quantifiers and connectives.  %

We define \emph{$\exwe$-organized} formulas and \emph{$\fove$-organized}
formulas
by mutual induction, as follows.
\begin{itemize}
\item When $\phi$ is a non-atomic holey $\exwe$-formula and each of
  $\psi_1, \ldots, \psi_k$ is an atom or a $\fove$-organized formula,
  then $\phi \llb \psi_1, \ldots, \psi_k \rrb$ is an $\exwe$-organized formula.
\item When $\phi$ is a non-atomic holey $\fove$-formula and each of
  $\psi_1, \ldots, \psi_k$ is an atom or an $\exwe$-organized formula,
  then $\phi \llb \psi_1, \ldots, \psi_k \rrb$ is a $\fove$-organized formula.
  
\end{itemize}
An \emph{organized formula} is defined to be a formula
that is either an $\exwe$-organized formula or a $\fove$-organized formula.
An organized formula can be naturally decomposed into \emph{regions}:
essentially, when we have an organized formula 
 $\theta = \phi \llb \psi_1, \ldots, \psi_k \rrb$, 
the holey formula $\phi$ is a region of $\theta$, and 
for whichever of the formulas $\psi_1, \ldots, \psi_k$
are organized and appear in $\phi$, their regions are also regions of $\theta$.
Formally, we recursively define the set of regions of an organized formula 
$\theta = \phi \llb \psi_1, \ldots, \psi_k \rrb$,
denoted $\regions(\theta)$, as follows:
$\regions(\theta) = 
\{ \phi \} \cup \bigcup_{\text{$i$ occurs in $\phi$}} \regions(\psi_i)$,
where the union is over all holes $i$ that appear in $\theta$.

\begin{example}
Consider the formula
$\theta = (\forall y R(x,y)) \lor (\exists z (S(x,z) \land T(z)))$.
We can view this formula as a $\fove$-organized formula, in the following way.
Define $\beta$ to be the $\fove$-holey formula $(\forall y 1) \lor 2$,
and define  $\gamma$ to be the $\exwe$-holey formula 
$\exists x (1 \land 2)$.  
Then, it is straightforwardly verified that
 $\theta = \beta\llb R(x,y), \gamma\llb S(x,z), T(z)  \rrb  \rrb$.
\end{example}

The next proposition implies that every pfo-formula (that is not an atom) is an organized formula.
Define the \emph{top operation} of a pfo-formula to be the label of the root of the formula when interpreting it as a tree.

\begin{propositionrep}
\label{prop:each-pfo-is-organized}
  Each non-atomic pfo-formula $\theta$ is an organized formula. More precisely,  each non-atomic pfo-formula $\theta$ whose top operation is $\exists$ or $\land$ is an $\exwe$-organized formula. Each non-atomic pfo-formula $\theta$ whose top operation is $\forall$ or $\lor$ is an $\fove$-organized formula.
\end{propositionrep}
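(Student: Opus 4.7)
The plan is to prove this by strong induction on the size (number of nodes) of $\theta$. The key construction is to peel off a maximal $\exwe$- or $\fove$-block from the top of $\theta$, depending on its top operation, and to recurse on what lies underneath.

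More precisely, assume first that the top operation of $\theta$ is in $\{\exists, \land\}$; we want to show that $\theta$ is $\exwe$-organized. I would construct a holey $\exwe$-formula $\phi$ together with subformulas $\psi_1, \ldots, \psi_k$ of $\theta$ such that $\theta = \phi \llb \psi_1, \ldots, \psi_k \rrb$, by traversing the syntax tree of $\theta$ downward from the root and keeping every node whose label lies in $\{\exists x, \land\}$, but replacing by a fresh hole every maximal subtree whose root is either an atom, a $\forall$-node, or a $\lor$-node. Because the root of $\theta$ has label in $\{\exists x, \land\}$, the resulting $\phi$ contains at least one connective or quantifier, so $\phi$ is a non-atomic holey $\exwe$-formula, as required.

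Each $\psi_i$ placed at a hole is then a proper subformula of $\theta$, hence strictly smaller, and its top operation is either an atomic label or lies in $\{\forall, \lor\}$. If $\psi_i$ is an atom, it can be used directly by the definition of $\exwe$-organized. Otherwise, its top operation is in $\{\forall, \lor\}$ and I apply the induction hypothesis to conclude that $\psi_i$ is $\fove$-organized. Plugging these into $\phi$ gives an $\exwe$-organized formula, namely $\theta$. The case where the top operation of $\theta$ lies in $\{\forall, \lor\}$ is symmetric, extracting a maximal holey $\fove$-formula and appealing inductively to the $\exwe$-side of the statement.

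The base case of the induction is implicit: if $\theta$ is non-atomic but all proper subformulas reached at the holes are atomic, the induction hypothesis is not invoked and the construction terminates immediately. I expect no real obstacle here; the only slightly delicate point is ensuring that the peeling procedure is well-defined and yields a genuine holey formula with distinct hole indices, which is handled by numbering the holes in, say, left-to-right order during the traversal. Termination of the recursion follows from the strict decrease in formula size at each application of the induction hypothesis.
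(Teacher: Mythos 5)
Your proposal is correct and follows essentially the same route as the paper's proof: both peel off the maximal top block of nodes labelled in $\exwe$ (or $\fove$), replace the hanging subtrees by holes, and recurse on those subformulas, whose top operations necessarily lie on the opposite side. The only difference is the induction measure --- you use formula size where the paper uses the maximal number of quantifier/connective ``switches'' along a root-leaf path --- and either measure strictly decreases at the recursive calls, so this is immaterial.
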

\begin{proof}
    We define a \emph{switch} along a path in $\theta$ to be an edge $vu$ such that the label of $v$ is in $\exwe$ and the label of $u$ is in $\fove$ or the label of $v$ is in $\fove$ and that of $u$ is in $\exwe$. Then the switch-depth of a pfo-formula $\theta$ is defined as the maximal number of switches on a root-leaf path taken over all root-leaf paths in $\theta$. We show the claim by induction on the switch-depth of~$\theta$.
    
    If the switch-depth is $0$, then $\theta$ is either a $\exwe$-formula or a $\fove$-formula. By definition of organized formulas, the claim is true in that case.
    
    For the case of positive switch-depth, consider first the case that the top operation $\theta$ is in $\exwe$. 
We construct a holey $\exwe$-formula $\theta'$ as follows: let $T'$ be the largest subtree of~$\theta$ that contains the root of $\theta$ and whose nodes all have labels in $\exwe$. 
Let $v_1, \ldots, v_k$ be the nodes in $\theta$ that have their parent in $T'$ but are themselves not in $T'$. Let $\psi_1, \ldots, \psi_k$ be the subformulas such that for every $i\in [k]$ the node $v_i$ is the root of $\psi_i$. We construct $\theta'$ by substituting each $v_i$ by the hole~$i$. 
Clearly, $\theta'$ is a holey non-atomic $\exwe$-formula. Moreover, the $\psi_i$ are all either atoms or non-atomic pfo-formula whose top operation is in $\fove$. Also, their switch-depth is lower than that of $\theta$. 
So, by induction, we get that the $\psi_i$ are all atoms or $\fove$-organized formulas and thus, by definition, $\theta= \theta'\llb \psi_1, \ldots, \psi_k \rrb$ is a $\exwe$-formula, as claimed.
    
    If $\theta$ has its top operation in $\fove$, the claim follows by symmetric reasoning.
\end{proof}

When the rules in $\T \setminus \{ N \}$ are applied to
organized formulas, they act on regions independently, in the following
formal sense.

\begin{propositionrep}
  \label{prop:T-to-organized}
  Suppose that $\Phi = \phi \llb \psi_1, \ldots, \psi_k \rrb$ is an organized formula.
  Then, each formula in $[\Phi]_{\T \setminus \{ N \}}$
  has the form $\phi' \llb \psi'_1, \ldots, \psi'_k \rrb$,
  where $\phi' \in [\phi]_{\T \setminus \{ N \}}$ and
  $\psi'_1 \in [\psi_1]_{\T \setminus \{ N \}}, \ldots,$ $\psi'_k \in [\psi_k]_{\T \setminus \{ N \}}$.
  Here, we understand the $\T \setminus \{ N \}$-rules to be applied to $\phi$ and holey pfo-formulas
  under the association $i \mapsto \free(\psi_i)$ defined on each $i = 1, \ldots, k$.
\end{propositionrep}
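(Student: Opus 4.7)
The plan is to induct on the length $n$ of a forward rewriting sequence $\Phi = \Phi_0 \rarr_{R_1} \Phi_1 \rarr_{R_2} \cdots \rarr_{R_n} \Phi_n$ with each $R_j \in \T \setminus \{N\}$. Such sequences suffice to reach every element of $[\Phi]_{\T \setminus \{N\}}$ because $\rarr_{\T \setminus \{N\}}$ is symmetric as a binary relation: associativity is already given bidirectionally, commutativity and reordering are self-inverse, and $\pushdown$ and $\pushup$ are mutual inverses. The base case is immediate: $\Phi_0 = \Phi$ itself. For the inductive step, I assume $\Phi_{n-1} = \phi'' \llb \psi''_1, \ldots, \psi''_k \rrb$ with $\phi'' \in [\phi]_{\T \setminus \{N\}}$ and $\psi''_j \in [\psi_j]_{\T \setminus \{N\}}$ for every $j$, and I argue that the single step $\Phi_{n-1} \rarr_{R_n} \Phi_n$ preserves this decomposition.

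The heart of the argument is to show that $R_n$ fires entirely inside one of $\phi''$, $\psi''_1, \ldots, \psi''_k$. Every rule in $\T \setminus \{N\}$ fires at a pattern consisting of a single operator-node (for commutativity) or two adjacent operator-nodes, and in every case those nodes lie in the same family $\exwe$ or $\fove$: associativity pairs two $\oplus$-nodes with the same $\oplus$; reordering pairs two $Q$-nodes with the same $Q$; $\pushdown$ and $\pushup$ require an $\exists$-node adjacent to a $\wedge$-node, or a $\forall$-node adjacent to a $\vee$-node. Let $v$ be the root of the rule pattern in the syntax tree of $\Phi_{n-1}$. If $v$ lies in $\phi''$, assume say $\phi''$ is $\exwe$-holey, so the label of $v$ is in $\exwe$; any required adjacent child of $v$ must also carry an $\exwe$-label, but the root of any non-atomic $\psi''_j$ carries a $\fove$-label (since $\psi''_j$ is $\fove$-organized) and atoms are leaves, so the child cannot sit at the root of any $\psi''_j$ and must itself lie in $\phi''$. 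Hence the whole pattern is inside $\phi''$, and the rule yields $\Phi_n = \phi' \llb \psi''_1, \ldots, \psi''_k \rrb$ with $\phi'' \rarr_{R_n} \phi'$. If instead $v$ lies in some $\psi''_j$, then since rule patterns look only at $v$ and its descendants, the whole application stays inside $\psi''_j$ and produces $\Phi_n = \phi'' \llb \psi''_1, \ldots, \psi'_j, \ldots, \psi''_k \rrb$ with $\psi''_j \rarr_{R_n} \psi'_j$; the case where $\phi''$ is $\fove$-holey is symmetric.

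The main obstacle is keeping careful track of the side conditions: $\pushdown$ and $\pushup$ carry the constraint $x \notin \free(F_2)$, and I need the constraint evaluated on the restricted view of $\phi''$ under the association $i \mapsto \free(\psi_i)$ to agree with the constraint evaluated on the full formula $\Phi_{n-1}$. This is immediate, because the association assigns each hole the free-variable set of the corresponding $\psi_i$, and none of the rules in $\T \setminus \{N\}$ alters the set of free variables of a formula, so $\free(\psi_i) = \free(\psi''_i)$ throughout the induction. With this in hand the induction closes: in the first subcase $\phi' \in [\phi'']_{\T \setminus \{N\}} = [\phi]_{\T \setminus \{N\}}$ and the $\psi''_j$'s are unchanged, while in the second $\psi'_j \in [\psi''_j]_{\T \setminus \{N\}} = [\psi_j]_{\T \setminus \{N\}}$ and the remaining components are preserved.
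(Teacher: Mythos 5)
Your core argument is the same as the paper's: every rule in $\T \setminus \{N\}$ fires at a pattern of nodes all carrying labels from the same family $\exwe$ or $\fove$, so no application can straddle the boundary between the holey formula $\phi''$ and a plugged-in $\psi''_j$. The paper packages this as a structural induction on the organized formula, proving the stronger invariant that $\phi'$ remains a holey $\exwe$-formula (resp.\ $\fove$-formula) and each non-atomic $\psi'_j$ remains organized of the opposite kind; you instead induct on the length of a forward rewrite sequence, which is legitimate since, as you correctly note, $\rarr_{\T \setminus \{N\}}$ is symmetric.

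There is one place where your induction as stated does not quite close. In the subcase where the pattern root $v$ lies in $\phi''$, you rule out that a required child of $v$ sits at a hole by asserting that the root of any non-atomic $\psi''_j$ carries a $\fove$-label ``since $\psi''_j$ is $\fove$-organized.'' But your induction hypothesis only records $\psi''_j \in [\psi_j]_{\T \setminus \{N\}}$, not that $\psi''_j$ is itself $\fove$-organized, so you are appealing to a property you have not carried through the induction. The fix is cheap: either strengthen the hypothesis to include that each non-atomic $\psi''_j$ has its top operation in $\fove$ --- this is preserved in your second subcase because no rule in $\T \setminus \{N\}$ moves a non-atomic formula's root label out of its family ($A$, $C$, $O$ applied at the root keep the root label's family, and $\pushdown$/$\pushup$ at the root only exchange $\exists$ with $\wedge$ or $\forall$ with $\vee$, while rules applied below the root do not touch it) --- or record this root-family invariance as a separate one-line observation. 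With that addition the argument is complete; your explicit treatment of the $\pushdown$/$\pushup$ side condition via $\free(\psi_i) = \free(\psi''_i)$ is a point the paper handles only tacitly.
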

\begin{proof}
    We prove the slightly stronger statement that if~$\phi$ is a holey $\exwe$-formula then so is $\phi'$ and all $\psi_i'$ are atomic or $\fove$-organized formulas. Symmetrically, if~$\phi$ is a holey $\fove$-formula then so is $\phi'$ and all $\psi_i'$ are atomic or $\exwe$-organized formulas. 
    We perform structural induction on the formula $\phi\llb \psi_1, \ldots, \psi_k \rrb$. So assume first that all $\psi_i$ in $\phi\llb \psi_1, \ldots, \psi_k \rrb$ are atoms. Then the statement is clear since for all $i\in [k]$ we have that $[\psi_i]_{\T\setminus \{N\}}$ consists only of a single atomic formula and thus all rule applications are on $\phi$.
    
    Now let the $\psi_i$ be general, not necessarily atomic formulas. Assume that $\phi$ is a $\fove$-formula; the case of $\exwe$-formulas is analogous. Then we know that all $\psi_i$ are atoms or $\exwe$-organized formulas and, by induction, all $\psi_i'\in [\psi_i]_{\T\setminus \{N\}}$ are for non-atomic $\psi_i$ also $\exwe$-organized formulas. Thus, whenever applying any sequence of rules in $\rarr_{\T\setminus \{N\}}$ onto $\psi_i$, the top operation of the resulting formula~$\psi_i'$ will be $\exists$ or $\land$. Moreover, whenever applying rules from $\rarr_{\T\setminus \{N\}}$ in such a way that only nodes in $\phi$ are concerned, the top operation of $\phi'$ is still in $\fove$. Finally, we have that the node above any hole~$i$ in $\phi$ is either $\forall$ or $\lor$ while the top operation of $\psi_i$ is in $\exwe$ which by induction remains true after applying rules from $\rarr_{\T\setminus \{N\}}$. However, no rule in $\T\setminus \{N\}$ can involve both a node with label in $\fove$ and one with label in $\exwe$, and thus all applications involve either only the holey formula $\phi$ or one of the $\psi_i$ but never nodes of both. So the proposition follows directly.
\end{proof}

\section{Rule applicability}
\label{sect:rule-applicability}

In the remainder of the paper, it will be crucial to have an understanding for when the rules of the systems $Y$ and $Y'$ can be applied. To this end, in this section, we study the structure of formulas that allow their application. 
This will in particular also lead to the normal-form result of Theorem~\ref{thm:normal-forms}.

Throughout this section, we make use of multisets; for each multiset $S$, we use the notation $|S|$ to denote the size of $S$ as a multiset, that is, the size of $S$ where repetitions are counted.

We start with the applicability of the removal rule $\rarr_{M}$. Remember that we say that a rule $\rarr$ can be applied to a set $\Phi$ of formulas if and only if there is a formula $\phi\in \Phi$ 
to which the rule can be applied.

\begin{lemmarep}\label{lem:applyM}
	Let $\theta$ be a pfo-formula. Then the following statements are equivalent:
	\begin{itemize}
		\item $\rarr_M$ can be applied to $\theta$.
		\item $\rarr_M$ can be applied to $[\theta]_{ACO}$.
		\item $\rarr_M$ can be applied to $[\theta]_{\T}$.
	\end{itemize}
\end{lemmarep}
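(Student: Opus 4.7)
The plan is to exploit the fact that $\rarr_M$ is applicable to a formula precisely when that formula contains a subformula of the form $Qx F$ with $x \notin \free(F)$, and to argue that this property is invariant under each rule in $\T$. The two ``forward'' implications ($\rarr_M$ on $\theta$ implies $\rarr_M$ on $[\theta]_{ACO}$ implies $\rarr_M$ on $[\theta]_{\T}$) are immediate, since $\theta \in [\theta]_{ACO}$ and $[\theta]_{ACO} \subseteq [\theta]_{\T}$ (as $ACO \subseteq \T$). So the real work is to close the cycle by showing that if $\rarr_M$ is applicable to some $\phi \in [\theta]_{\T}$, then it is applicable to $\theta$ itself.

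The central step is an invariance lemma: for every rule $R \in \T = \{A, C, O, \pushdown, \pushup, N\}$ and every pair $\phi_1 \rarr_R \phi_2$, the rule $\rarr_M$ is applicable to $\phi_1$ if and only if it is applicable to $\phi_2$. Granting this, a straightforward induction on the length of a $\lrarrstar_\T$-derivation extends the equivalence to the whole class, and the cycle closes. The case analysis is short because each rule in $\T$ is local: any quantifier subformula untouched by the rewrite is preserved verbatim, so $M$-applicability at such a quantifier is trivially preserved. It therefore suffices to inspect the quantifier(s) directly affected by the rewrite. For $A$ and $C$ no quantifier is touched at all. For $O$ applied to $Qx Qy F$, the conditions for removability of the outer and inner quantifiers before and after the swap depend only on whether $x$ and $y$ lie in $\free(F)$, and one checks that the ``some quantifier in the block is $M$-removable'' predicate is identical on both sides. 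For $\pushdown$ applied to $\exists x (F_1 \wedge F_2)$ with $x \notin \free(F_2)$, the side condition $x \notin \free(F_2)$ makes ``$x \notin \free(F_1 \wedge F_2)$'' equivalent to ``$x \notin \free(F_1)$'', which is exactly the $M$-applicability condition on the rewritten formula; the dual case and $\pushup$ follow symmetrically. For $N$, renaming $Qx F$ to $Qy F'$ with $y \notin \free(F)$ gives $y \in \free(F') \iff x \in \free(F)$, so removability at this quantifier is preserved.

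The mildly delicate case is $\pushdown$/$\pushup$, as these genuinely restructure the formula rather than merely permuting or renaming pieces; however, the side condition attached to the rule is precisely what is needed to align the $M$-applicability conditions on the two sides, so the verification goes through smoothly. I do not foresee any real obstacle: the entire proof amounts to a short case inspection unwinding the definitions of $\free$ and of each rule, and each case reduces to an elementary set-membership check.
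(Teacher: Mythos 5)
Your proposal is correct and follows essentially the same route as the paper: the two forward implications are dispatched by the inclusions $\theta \in [\theta]_{ACO} \subseteq [\theta]_{\T}$, and the backward direction is closed by observing that each $\T$-rule preserves, for every quantifier instance $Qx$, whether that quantification binds any occurrence of $x$ (the paper phrases this as preservation of the number of bound occurrences per quantifier instance, while you verify the equivalent per-rule invariance of $M$-applicability), so applicability propagates back to $\theta$ by induction along the derivation.
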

\begin{proof}
	The proof is based on the simple observation that a quantifier does not bind a variable in $\theta$ if and only if it does not bind a variable in any representative in $[\theta]_{ACO}$ and $[\theta]_{\T}$.

    If $\rarr_M$ can be applied to $\theta$, then it can by definition also be applied to $[\theta]_{ACO}$ since $\theta\in [\theta]_{ACO}$. Moreover, since $[\theta]_{ACO}\subseteq [\theta]_{\T}$, the rule $\rarr_M$ is applicable to $[\theta]_{\T}$ if it is applicable to $[\theta]_{ACO}$.
    
    It remains to show that if $\rarr_M$ is applicable to $[\theta]_{\T}$, then it is also applicable to $\theta$.
    To see this, observe first that it is straightforward to verify that the rules in $\T$ preserve the number of instances
    of quantification, and also, for any instance $Q x$ of quantification, the rules in $\T$
    preserve the number of occurrences of the variable $x$ that are bound to $Q x$.
    Hence, if $\rarr_{M}$ can be applied to an instance $Q x$ of quantification
    in a formula in $[\theta]_{\T}$, there is a corresponding instance $Q x$ of quantification
    in $\theta$ to which $\rarr_{M}$ can be applied, and thus $\rarr_{M}$
    can be applied to $\theta$.
\end{proof}

It will be useful to consider (sets of) formulas in which the pushdown operation has been applied exhaustively, in particular, to understand 
normal forms of the system $Y$.
To this end, we introduce the following definitions.
We say that a pfo-formula $\phi$ is \emph{pushed-down}
if $\rarr_{\pushdown}$ cannot be applied to~$\phi$;
we say that a set $\Phi$ of pfo-formulas is \emph{pushed-down}
if each formula $\phi \in \Phi$ is pushed-down.

\newcommand{\conjs}{\mathrm{conjuncts}}

We say that a pfo-formula or a holey pfo-formula is a \emph{$k$-fold conjunction}
if, up to associativity, it can be written in the form
$\psi_1 \wedge \cdots \wedge \psi_k$.  We can formalize this as follows.
For each pfo-formula $\theta$, define the multiset $\conjs(\theta)$
inductively, as follows.  If~$\theta$ is an atom, a hole, or begins
with disjunction or quantification, define
$\conjs(\theta) = \{ \theta \}$, where $\theta$ has multiplicity $1$.
When $\theta$ has the form $\theta_1 \wedge \theta_2$, define
$\conjs(\theta)$ as the multiset union $\conjs(\theta_1) \cup \conjs(\theta_2)$.
We say that $\theta$ is a \emph{$k$-fold conjunction} when
$|\conjs(\theta)| \ge k$. We define \emph{$k$-fold disjunctions} analogously.

The notion of $k$-fold conjunction will be useful in the remainder as it allows us to understand when we can apply the splitdown rule on quantifiers. The following lemma tells us that the existence of $k$-fold conjunctions in our equivalence classes with respect to $\{A,C,O\}$ and $\T$ can be decided by looking at any pushed-down representative.

\begin{lemmarep}
	\label{lemma:n-fold}
	Suppose that $\theta$ is a pfo-formula with
	$[\theta]_{ACO}$  pushed-down, and let $k \geq 1$.
	Then, there exists a $k$-fold conjunction in $[\theta]_{\T}$
	if and only if $\theta$ is a $k$-fold conjunction.
	Similarly, there exists a $k$-fold disjunction
	in $[\theta]_{\T}$
	if and only if $\theta$ is a $k$-fold disjunction.
\end{lemmarep}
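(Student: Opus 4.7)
I prove the conjunction statement; the disjunction statement follows by the symmetry $\exists/\forall$, $\land/\lor$, $\exwe/\fove$. The $(\Leftarrow)$ direction is immediate since $\theta\in[\theta]_{\T}$; the plan is therefore to show that if some $\theta'\in[\theta]_{\T}$ satisfies $|\conjs(\theta')|\ge k$ then $|\conjs(\theta)|\ge k$. The case $k=1$ is vacuous, so I assume $k\ge 2$. First I would apply Proposition~\ref{prop:getridofN} to standardize $\theta$ upfront and reduce to derivations in $\T\setminus\{N\}$, since the renaming rule affects neither the conjunct count nor the width.

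Next I would peel off the top region. By Proposition~\ref{prop:each-pfo-is-organized} the formula $\theta$ is organized, and if its top lies in $\fove$ then by Proposition~\ref{prop:T-to-organized} so does the top of $\theta'$, contradicting $|\conjs(\theta')|\ge 2$. Otherwise I write $\theta=\phi\llb\psi_1,\dots,\psi_m\rrb$ with $\phi$ a pure $\exwe$-holey top region and each $\psi_i$ atomic or $\fove$-organized, and Proposition~\ref{prop:T-to-organized} then gives $\theta'=\phi'\llb\psi_1',\dots,\psi_m'\rrb$ with $\phi'\in[\phi]_{\T\setminus\{N\}}$ and each $\psi_i'$ having top not equal to $\land$. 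Substituting formulas with non-$\land$ top into the holes of $\phi'$ leaves the top-$\land$ structure intact, yielding $|\conjs(\theta')|=|\conjs(\phi')|$ and $|\conjs(\theta)|=|\conjs(\phi)|$; also $[\phi]_{ACO}$ is pushed-down, because any pushdown applicable to $\phi$ (with its association) lifts to one applicable to $\theta$. The problem thus reduces to showing $|\conjs(\phi')|\le|\conjs(\phi)|$ in the purely $\exwe$-holey setting with $[\phi]_{ACO}$ pushed-down.

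The core of the argument is a combinatorial invariant. For a pure $\exwe$-holey formula $\xi$, call two atoms or holes of $\xi$ \emph{linked} if they share a variable that is existentially bound somewhere in $\xi$, and let $N(\xi)$ denote the number of equivalence classes of the transitive closure $\sim_\xi$. Because the rules in $\T\setminus\{N\}$ preserve atoms and holes with their free-variable lists, and the set of variables carrying a binder---only permuting and regrouping existing binders without renaming them---the relation $\sim_\xi$ is invariant on $[\phi]_{\T\setminus\{N\}}$, so $N(\phi')=N(\phi)$. I would then establish two bounds: \emph{(i)}~$|\conjs(\xi)|\le N(\xi)$ for every such $\xi$, and \emph{(ii)}~$|\conjs(\xi)|=N(\xi)$ whenever $[\xi]_{ACO}$ is pushed-down. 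Bound~(i) is trivial when the top of $\xi$ is not $\land$; otherwise it reduces to the observation that by standardization each bound variable of $\xi$ has a unique binder, which must lie inside a single top-level conjunct (since the top-$\land$-subtree consists only of $\land$-nodes), so its scope is confined to that conjunct and any two linked atoms must share a top-level conjunct, propagating through transitivity. Bound~(ii) is a structural induction on $\xi$: for $\xi=\exists x\,\xi_0$ pushed-down, pushed-downness forces $x$ to be free in every top-level conjunct of $\xi_0$, which merges the inductive $\sim_{\xi_0}$-classes into a single $\sim_\xi$-class; for $\xi$ with top $\land$, standardization keeps the bound variables of distinct top-level conjuncts disjoint, so the $\sim$-class counts add componentwise.

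Combining invariance with the two bounds yields $|\conjs(\phi')|\le N(\phi')=N(\phi)=|\conjs(\phi)|$, and unwinding the reduction gives $|\conjs(\theta)|\ge|\conjs(\theta')|\ge k$. The step I anticipate as the main obstacle is bound~(i): the scope argument must be carried out uniformly for atoms and holes, treating the association of a hole on the same footing as the ordinary free variables of an atom, and one may need to verify that standardization extends to the holey setting so that no bound variable of $\xi$ appears in the association of a hole lying outside its binder's scope.
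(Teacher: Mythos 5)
Your proposal is correct and follows essentially the same route as the paper's proof: the same reduction to the top $\exwe$ region via standardization and Propositions~\ref{prop:getridofN}, \ref{prop:each-pfo-is-organized} and \ref{prop:T-to-organized}, followed by a $\T \setminus \{N\}$-invariant that upper-bounds the conjunct count in general and equals it for pushed-down formulas. Your class count $N(\xi)$ coincides with the cardinality of the paper's multiset $\mathrm{mset}_{\T}(\xi)$, and your bounds (i) and (ii) correspond respectively to the paper's inequality $k \leq |\mathrm{mset}_{\T}(\phi^c)|$ and to its claim that a pushed-down subformula splits into exactly $|\mathrm{mset}_{\T}|$ many conjuncts.
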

\newcommand{\msett}{\mathrm{mset}_{\T}}
\begin{proof}
    The second statement is dual to the first; we prove the first.
    The backward implication is immediate, so we prove the forward implication.
    It is directly true when $\theta$ is an atom, so we assume that $\theta$ is non-atomic.
    As renaming preserves the property of being an $k$-fold conjunction,
    by appeal to Proposition~\ref{prop:getridofN}, we may assume that $\theta$ is standardized,
    and that there exists a $k$-fold conjunction in $[\theta]_{\T \setminus \{ N \}}$.
    By appeal to
    Proposition~\ref{prop:each-pfo-is-organized},
    we view $\theta$ as an organized formula
    $\phi \llb \psi_1, \ldots, \psi_k \rrb$.
    
    In the case that $\phi$ is a holey $\{ \forall, \vee \}$-formula, then by Proposition~\ref{prop:T-to-organized}, 
    any rewriting of $\theta$ via the rules in $\T \setminus \{N\}$ yields a formula that (up to renaming)
    is of the form
    $\phi' \llb \psi'_1, \ldots, \psi'_k \rrb$ where $\phi'$ is a $\T \setminus \{N\}$-rewriting of $\phi$,
    and hence
    also a holey $\{\forall,\vee\}$-formula;
    hence, every formula in $[\theta]_{\T \setminus \{N\}}$ is a $1$-fold conjunction.
    We thus suppose that $\phi$ is a holey $\{ \exists, \wedge \}$-formula.

    For every holey $\{ \exists, \wedge \}$-formula $\alpha$ with holes among
    $1, \ldots, k$, we define
    $\msett(\alpha)$ as a multiset
    whose elements are sets of variables.  This multiset is defined inductively, as follows:
    \begin{itemize}
        \item $\msett(i) = \{ \free(\psi_i) \}$, when $i \in \nats$,
        \item $\msett(\phi_1 \wedge \phi_2) = \msett(\phi_1) \cup \msett(\phi_2)$,
        \item $\msett( \exists x \phi_1 ) = \Join_{(x)} \msett(\phi_1)$.
    \end{itemize}
    In the first item, the single element is understood to have multiplicity $1$; in the second item, the union is understood as a multiset union; in the third item, we understand $\Join_{(x)} S$ to be the multiset obtained from $S$ by removing each element that contains $x$, that is, by removing from $S$ each
    element that is contained in  $\mathcal{U} = \{ U \in S ~|~ x \in U \}$,
    and then adding the set $(\bigcup_{U \in \mathcal{U}} U) \setminus \{ x \}$, where here,
    the $\bigcup$ denotes the usual set union. Note that in this case 
    \begin{align*}|\msett(\exists x \phi_1)| = 1 + |\{U\in \msett(\phi_1)\mid x\notin U\}|.\end{align*}
    We use the straightforwardly verified fact that,
    for each set $U \in \msett(\alpha)$, it holds that $U \subseteq \free(\alpha)$.
    
    \begin{claim} \label{claim:invariance-msett}
    Supposing $\alpha$ to be a holey $\{ \exists, \wedge \}$-formula,
    $\msett(\alpha)$ is invariant under applications of rules from $\T \setminus \{ N \}$
    to $\alpha$. 
\end{claim}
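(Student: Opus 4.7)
The plan is to prove invariance under a single application of a rule in $\T \setminus \{N\}$; invariance under the whole equivalence class $[\alpha]_{\T \setminus \{N\}}$ then follows by induction on the length of a rewriting sequence. For a single application, I would use structural induction on $\alpha$: if the rule is applied strictly inside a proper subformula (so that either $\alpha = \beta_1 \wedge \beta_2$ with the application in some $\beta_i$, or $\alpha = \exists x \beta$ with the application in $\beta$), the inductive hypothesis preserves $\msett$ of the affected subformula, and since $\msett(\alpha)$ is computed uniformly from the $\msett$ of its immediate subformulas via multiset union or $\Join_{(x)}$, the multiset $\msett(\alpha)$ is preserved. It therefore suffices to verify invariance when the rule fires at the root.

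For associativity and commutativity at the root, invariance is immediate because $\msett$ of a conjunction is defined as multiset union, which is itself associative and commutative. For pushdown at the root, $\alpha = \exists x (F_1 \wedge F_2)$ rewrites to $\alpha' = (\exists x F_1) \wedge F_2$ with $x \notin \free(F_2)$; unfolding the definitions gives $\msett(\alpha) = \Join_{(x)} (\msett(F_1) \cup \msett(F_2))$ versus $\msett(\alpha') = (\Join_{(x)} \msett(F_1)) \cup \msett(F_2)$. The stated fact that every $U \in \msett(F_2)$ satisfies $U \subseteq \free(F_2)$, combined with $x \notin \free(F_2)$, tells us that no element of $\msett(F_2)$ contains $x$, so $\Join_{(x)}$ treats $\msett(F_2)$ as inert and the two expressions coincide. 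Pushup is the same equation read in reverse.

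The remaining case is reordering at the root, $\exists x \exists y F \to \exists y \exists x F$, where invariance reduces to showing that the operators $\Join_{(x)}$ and $\Join_{(y)}$ commute on multisets of variable sets. I would prove this by a direct calculation: for any multiset $S$, both $\Join_{(x)} \Join_{(y)} S$ and $\Join_{(y)} \Join_{(x)} S$ equal the multiset consisting of those $U \in S$ that contain neither $x$ nor $y$, together with the single new set $\bigl(\bigcup \{U \in S : x \in U \text{ or } y \in U\}\bigr) \setminus \{x,y\}$; this is checked by a case analysis on whether each $U \in S$ contains $x$, $y$, both, or neither. The main obstacle is the bookkeeping in the degenerate subcases where no element of $S$ contains $x$ (or none contains $y$), since $\Join_{(x)}$ then introduces an empty set, and one must confirm that the empty sets added by the two orderings align in multiplicity, consistent with the stated identity $|\msett(\exists x \phi_1)| = 1 + |\{U \in \msett(\phi_1) : x \notin U\}|$.
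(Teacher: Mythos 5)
Your overall structure matches the paper's: reduce to a single rule application at the root (the compositional definition of $\msett$ handles applications inside subformulas), then check $A$ and $C$ via associativity and commutativity of multiset union, $\pushdown$/$\pushup$ via the identity $\Join_{(x)}(\msett(F_1) \cup \msett(F_2)) = (\Join_{(x)}\msett(F_1)) \cup \msett(F_2)$ when no element of $\msett(F_2)$ contains $x$, and $O$ via commutativity of $\Join_{(x)}$ and $\Join_{(y)}$. The paper asserts these three facts without further detail; your first two verifications are correct.

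However, the explicit formula you give for the reordering case is wrong. You claim that both $\Join_{(x)}\Join_{(y)} S$ and $\Join_{(y)}\Join_{(x)} S$ consist of the elements of $S$ containing neither variable together with the \emph{single} new set $\bigl(\bigcup\{U \in S : x \in U \text{ or } y \in U\}\bigr) \setminus \{x,y\}$. That description is only correct when some element of $S$ contains \emph{both} $x$ and $y$, so that the $x$-class gets merged into the $y$-class (or vice versa) by the first join. When the $x$-class and the $y$-class are disjoint, the two joins act independently and produce \emph{two} new sets: for $S = \{\{x,a\},\{y,b\}\}$, both orders yield $\{\{a\},\{b\}\}$, not the single set $\{a,b\}$; and for $S = \{\{x\},\{y\}\}$ both orders yield the multiset $\{\emptyset,\emptyset\}$ of size two, not $\{\emptyset\}$. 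This is not merely the empty-set bookkeeping you flag at the end---it occurs for nonempty disjoint classes as well. The commutativity itself does hold (and hence the claim survives): the correct case analysis is on whether some $U \in S$ contains both $x$ and $y$; if so, both orders give the merged single set you describe, and if not, both orders give the two sets $(\bigcup\{U : x \in U\}) \setminus \{x\}$ and $(\bigcup\{U : y \in U\}) \setminus \{y\}$ alongside the untouched elements. As written, though, the ``direct calculation'' you propose would not verify your stated identity, because the identity is false; you need to split into these two cases before the argument goes through.
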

\begin{proof} 
For the rules A and C, this follows from associativity and commutativity
    of multiset union, respectively; for the rules $\pushdown$ and $\pushup$,
    this follows from the fact that when $x$ does not appear in any element of $\msett(\phi_2)$,
    $(\Join_{(x)} \msett(\phi_1)) \cup \msett(\phi_2) =  \Join_{(x)} (\msett(\phi_1) \cup \msett(\phi_2))$.
    For the rule~$O$, this follows from the fact that for any two variables $x, y$,
    the operators $\Join_{(x)}$, $\Join_{(y)}$ commute.
    \end{proof}
    \begin{claim} \label{claim:writing-as-conjunction}
    For each subformula $\phi^-$ of $\phi$, setting $s$
    to be the size of the multiset $\msett(\phi^-)$, it holds that $\phi^-$ can be written,
    up to associativity,
    as
    a conjunction $\beta_1 \wedge \cdots \wedge \beta_s$ having $s$ conjuncts
    such that $\msett(\phi^-)$ is equal to the
    multiset $\{ \free(\beta_1), \ldots, \free(\beta_s) \}$.
\end{claim}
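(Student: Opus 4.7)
The plan is to proceed by structural induction on the subformula $\phi^-$ of the holey $\{\exists, \wedge\}$-formula $\phi$. For the base case where $\phi^-$ is a hole $i$, the definition gives $\msett(\phi^-) = \{\free(\psi_i)\}$, so $s=1$ and taking $\beta_1 = \phi^-$ works, since under the association we have $\free(i) = \free(\psi_i)$.

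For the inductive step with $\phi^- = \phi_1 \wedge \phi_2$, apply the inductive hypothesis to $\phi_1$ and $\phi_2$ to obtain conjunction decompositions with $s_1$ and $s_2$ parts respectively, then concatenate them. The multiset condition is preserved because $\msett$ distributes over $\wedge$ as multiset union, matching exactly the concatenation of free-variable multisets on the conjunct side.

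The interesting case is $\phi^- = \exists x \phi_1$. By the inductive hypothesis, write $\phi_1$ (up to associativity) as $\beta_1 \wedge \cdots \wedge \beta_{s_1}$ with $\msett(\phi_1) = \{\free(\beta_1), \ldots, \free(\beta_{s_1})\}$. I claim that either $s_1 = 1$, or $x \in \free(\beta_j)$ for every $j$. Indeed, suppose $s_1 \geq 2$ and some $\beta_j$ has $x \notin \free(\beta_j)$. Then using $A$ and $C$ applied inside $\theta$ at the subformula position corresponding to $\phi_1$, we may rewrite $\phi_1$ into the form $(\beta_1 \wedge \cdots \wedge \widehat{\beta_j} \wedge \cdots \wedge \beta_{s_1}) \wedge \beta_j$; this stays within $[\theta]_{ACO}$. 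At the node $\exists x \phi_1$ the pushdown rule is now applicable, contradicting the hypothesis that $[\theta]_{ACO}$ is pushed-down. Consequently, $\Join_{(x)}$ merges all entries of $\msett(\phi_1)$ into the single set $(\bigcup_j \free(\beta_j)) \setminus \{x\} = \free(\phi^-)$, so $\msett(\phi^-) = \{\free(\phi^-)\}$, meaning $s=1$ and $\beta_1 = \phi^-$ completes the case. (The remaining sub-case $s_1 = 1$ is handled directly: $\msett(\phi^-) = \Join_{(x)}\{\free(\beta_1)\}$ equals $\{\free(\phi^-)\}$ whether or not $x \in \free(\beta_1)$.)

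The main obstacle is the existential case, specifically justifying that rearrangements of the subformula $\phi_1$ using $A, C$ really do yield formulas in $[\theta]_{ACO}$ to which pushdown is applicable. This uses two facts established by the convention on rewriting rules and the setup of holey formulas under the association $i \mapsto \free(\psi_i)$: first, rules apply to arbitrary subformulas, so $A, C$ rewrites of $\phi_1$ lift to $A, C$ rewrites of $\theta$; and second, because substitution of the $\psi_i$ preserves free-variable sets of every enclosing subformula, the condition $x \notin \free(\beta_j)$ in the holey setting translates verbatim to the side condition of the pushdown rule once the $\psi_i$ are substituted back.
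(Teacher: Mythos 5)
Your proof is correct and takes essentially the same route as the paper's: structural induction in which the only nontrivial case is $\phi^- = \exists x\, \phi_1$, resolved by observing that a conjunct $\beta_j$ with $x \notin \free(\beta_j)$ could be isolated via $A$ and $C$ and then pushed over by $\rarr_{\pushdown}$, contradicting that $[\theta]_{ACO}$ is pushed-down. One small caveat on your parenthetical for $s_1=1$: under the paper's literal definition, $\Join_{(x)}$ always adds the set $(\bigcup_{U\in\mathcal{U}}U)\setminus\{x\}$ even when $\mathcal{U}=\emptyset$, so for a vacuous quantification (i.e.\ $x\notin\free(\beta_1)$) one would get $\{\free(\beta_1),\emptyset\}$ rather than a singleton --- a degenerate corner that the paper's own proof also leaves untreated.
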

\begin{proof}
    Let us prove this by structural induction.  
It is immediate when $\phi^-$ is a hole
    or a conjunction $\phi_1 \wedge \phi_2$.
    When~$\phi^-$ has the form $\exists x \phi_1$,
    let $\phi_1 = \beta_1 \wedge \cdots \wedge \beta_s$ such that
    $\msett(\phi_1) = \{ \free(\beta_1), \ldots, \free(\beta_s) \}$.
    We observe that each element
    of $\msett(\phi_1)$ contains~$x$: if not, there exists an index~$j$ 
    such that $x \notin \free(\beta_j)$; by applying the rules A and C,
    $\phi_1$ can be expressed as a conjunction $\beta' \wedge \beta_j$ where
    $\beta'$ is the conjunction of all $\beta_i$ apart from $\beta_j$,
    and then the $\pushdown$ rule can be applied to $\beta' \wedge \beta_j$,
contradicting our assumption that $[\theta]_{ACO}$ is pushed-down.
    It follows that $\msett(\phi^-)$ has size~$1$,
    and that $\phi^-$
    is itself a $1$-fold conjunction having the desired form.
\end{proof}    

    By assumption,
    there exists a $k$-fold conjunction~$\theta^c$ in $[\theta]_{\T \setminus \{ N \}}$; by Proposition~\ref{prop:T-to-organized}
we can view $\theta^c$ as having the form $\phi^c\llb \psi^c_1, \ldots, \psi^c_k  \rrb$ where $\phi^c$ and $\phi$ can be rewritten into each other via
the rules in $\T \setminus \{ N \}$.
We have $k \leq |\msett(\phi^c)|$ by the definition of $\msett(\cdot)$.
By the invariance of $\msett(\cdot)$ established 
in Claim~\ref{claim:invariance-msett},
    we have $|\msett(\phi_c)| = |\msett(\phi)|$.
Hence $k \leq |\msett(\phi)|$.
By Claim~\ref{claim:writing-as-conjunction} applied in the case where $\phi^-$ is equal to $\phi$ itself, we have that $\phi$ is
a $k$-fold conjunction.
It follows that $\theta$ is a $k$-fold conjunction.
\end{proof}

As a consequence of Lemma~\ref{lemma:n-fold}, we get that for pushed-down formulas there is a result similar to Lemma~\ref{lem:applyM} that characterizes when the splitdown rule can be applied.

\begin{lemmarep}\label{lem:applyS}
	Let $\theta$ be a pfo-formula where $[\theta]_{ACO}$ is pushed-down. Then the following are equivalent:
	\begin{itemize}
		\item 	$\rarr_{\splitdown}$ can be applied to $\theta$.
		\item 	$\rarr_{\splitdown}$ can be applied to $[\theta]_{ACO}$.
		\item 	$\rarr_{\splitdown}$ can be applied to $[\theta]_{\T}$.
	\end{itemize}
\end{lemmarep}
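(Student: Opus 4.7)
The equivalences $\theta \Rightarrow [\theta]_{ACO} \Rightarrow [\theta]_{\T}$ follow immediately from the inclusions $\theta \in [\theta]_{ACO} \subseteq [\theta]_{\T}$; I would dispose of these in one sentence. The substantive direction is $[\theta]_{\T} \Rightarrow \theta$. Suppose $\rarr_{\splitdown}$ applies to some $\theta'' \in [\theta]_{\T}$, and WLOG (by the symmetry between the two splitdown variants) at a subformula $\alpha = \exists x (F_1 \vee F_2)$. By Proposition~\ref{prop:getridofN}, I standardize $\theta$ upfront and reduce to the situation $\theta \lrarrstar_{\T \setminus \{N\}} \theta''$ (since renaming preserves splitdown applicability), and then proceed by structural induction on $\theta$.

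For the inductive step, Proposition~\ref{prop:each-pfo-is-organized} writes $\theta = \phi \llb \psi_1, \ldots, \psi_k \rrb$ as an organized formula, WLOG $\{\exists, \wedge\}$-organized (the dual case is symmetric). Proposition~\ref{prop:T-to-organized} then gives $\theta'' = \phi'' \llb \psi_1'', \ldots, \psi_k'' \rrb$ with $\phi'' \in [\phi]_{\T \setminus \{N\}}$ and each $\psi_i'' \in [\psi_i]_{\T \setminus \{N\}}$. Each $[\psi_i]_{ACO}$ inherits the pushed-down property from $[\theta]_{ACO}$, since any $\pushdown$-applicable rewriting of $\psi_i$ lifts via substitution into $\phi$ to one of $\theta$. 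If $\alpha$ lies entirely inside some $\psi_i''$, the inductive hypothesis applied to $\psi_i$ gives applicability to $\psi_i$ and hence to $\theta$. Otherwise $\exists x$ is a node of $\phi''$, and since the $\{\exists, \wedge\}$-holey formula $\phi''$ contains no $\vee$-node, the immediate child of $\exists x$ in $\phi''$ must be a hole $j$; consequently $\psi_j'' = F_1 \vee F_2$ is a 2-fold disjunction in $[\psi_j]_{\T}$, and Lemma~\ref{lemma:n-fold} applied to $\psi_j$ forces $\psi_j$ itself to be syntactically $G_1 \vee G_2$.

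The remaining and most delicate task---which I expect to be the main obstacle---is to show that hole $j$'s immediate parent in $\phi$ is itself an $\exists$ quantifier, for this would yield a subformula $\exists w (G_1 \vee G_2)$ of $\theta$ witnessing that $\rarr_{\splitdown}$ applies to $\theta$. I would establish this via a sub-claim using the $\wedge$-decomposition developed inside the proof of Lemma~\ref{lemma:n-fold}: letting $P = \exists w P_0$ be the smallest ancestor of hole $j$ in $\phi$ starting with $\exists$, and writing $P_0 = \beta_1 \wedge \cdots \wedge \beta_s$ up to associativity with $\beta_i = j$ for some $i$ (since only $\wedge$-nodes separate $j$ from $P_0$), the pushed-down property forces $w \in \free(\beta_\ell)$ for every $\ell$. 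I would then argue that the subcase $s \geq 2$ is incompatible with any $\phi'' \in [\phi]_{\T \setminus \{N\}}$ having hole $j$ immediately beneath a quantifier, since any $\pushdown$ manipulation isolating $j$ from a sibling $\beta_\ell$ would require the binding variable to be absent from that sibling, contradicting $w \in \free(\beta_\ell)$. Hence $s = 1$, $P = \exists w \cdot j$, and hole $j$'s immediate parent in $\phi$ is $\exists w$, completing the case and the induction.
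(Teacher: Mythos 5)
Your overall route coincides with the paper's: the two easy implications from $\theta \in [\theta]_{ACO} \subseteq [\theta]_{\T}$, standardization via Proposition~\ref{prop:getridofN}, the organized decomposition via Propositions~\ref{prop:each-pfo-is-organized} and~\ref{prop:T-to-organized} (the paper locates the boundary-straddling region directly rather than by explicit structural induction, but this is cosmetic), the observation that $[\psi_i]_{ACO}$ inherits the pushed-down property, and Lemma~\ref{lemma:n-fold} to conclude that $\psi_j$ is itself syntactically a disjunction. The divergence is exactly at the step you flag as most delicate, and there your argument has a gap. The quantifier sitting immediately above hole $j$ in $\phi''$ is the \emph{specific} redex quantifier $\exists x$, and nothing forces $x = w$: the quantifier isolated onto $j$ in $\phi''$ may be one that, in $\phi$, sits strictly above your nearest existential ancestor $\exists w$ and is brought down by reordering and pushdown. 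Your use of the pushed-down hypothesis only yields $w \in \free(\beta_\ell)$ for every conjunct $\beta_\ell$ of $P_0$; the condition governing whether some $\exists x$ can be pushed past a conjunction onto $j$ alone is whether $x$ (not $w$) is absent from the other conjuncts. For instance, with $P = \exists w (j \wedge 2)$ and $w$ free in both holes, a quantifier $\exists x$ from higher up with $x \in a(j)$ could a priori be reordered below $\exists w$ and pushed onto $j$; excluding this requires invoking the pushed-down hypothesis at the position of $\exists x$ itself, which your sub-claim does not do.

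The paper closes this by reasoning about $\exists x$ rather than about $\exists w$: since the rules in $\T \setminus \{N\}$ preserve which occurrences are bound to a given quantification, and in $\phi''$ only hole $j$ lies below $\exists x$, the quantification $\exists x$ is an ancestor of $j$ in $\phi$ and no other hole below it has $x$ free. If a $\wedge$-node lay strictly between $\exists x$ and $j$ in $\phi$, then taking the highest such node, only quantifications separate it from $\exists x$, so reordering places $\exists x$ directly above it and $\rarr_{\pushdown}$ becomes applicable (the conjunct not containing $j$ has no free occurrence of $x$), contradicting that $[\theta]_{ACO}$ is pushed-down. Hence the parent of $j$ in $\phi$ is an existential quantification and $\rarr_{\splitdown}$ applies to $\theta$. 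One further caveat, shared by your sketch and by the paper's write-up: both tacitly assume that the redex quantifier binds an occurrence in hole $j$, which is what pins down its position in $\phi$; a vacuous quantifier can wander freely under the $\T$-rules (consider $(\exists x\, R(y)) \wedge (S(z) \vee T(z))$, whose $ACO$-class is pushed-down yet which admits splitdown only after moving $\exists x$ onto the disjunction), so this case needs to be addressed explicitly or excluded by additionally assuming $\rarr_M$ is inapplicable, as it is wherever the lemma is invoked.
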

\begin{proof}
	The implications from the first point to the second and from the second to the third are again directly clear as in the proof of Lemma~\ref{lem:applyM}.
	
	So assume now that $\rarr_{\splitdown}$ can be applied to $[\theta]_{\T}$.
	Then, by definition, there exists a formula $\theta^+ \in [\theta]_{\T}$
	to which $\rarr_{\splitdown}$ can be applied.
	Since applicability of $\rarr_{\splitdown}$ to a formula
	and whether or not $[\theta]_{ACO}$ is pushed-down
	are preserved by variable renaming, we may assume by Proposition~\ref{prop:getridofN}
	that $\theta$ is standardized, and that $\theta^+ \in [\theta]_{\T \setminus \{ N \}}$.  (This is justified by applying Proposition~\ref{prop:getridofN} to the formulas $\phi_0 = \theta$ and $\phi_t = \theta^+$, and then replacing $\theta$ with $\phi'_0$ and $\theta^+$ with $\phi'_s$.)
	Clearly, the formulas $\theta$ and $\theta^+$ are non-atomic.
	Observe that when $\rarr_{\splitdown}$ is applicable to an organized formula,
	the quantification must come from~a holey $\{ \exists, \wedge \}$-formula
	and the connective must come from a holey $\{ \forall, \vee \}$-formula,
	or vice-versa.
	
	By appeal to the decomposition of each non-atomic pfo-formula into an
	organized formula (Proposition~\ref{prop:each-pfo-is-organized}),
	and Proposition~\ref{prop:T-to-organized},
	there exist organized formulas
	$\phi \llb \psi_1, \ldots, \psi_k \rrb$,
	$\phi^+ \llb \psi^+_1, \ldots, \psi^+_k \rrb$ that
	are subformulas of $\theta$ and $\theta^+$, respectively,
	where 
\begin{enumerate}
\item $\phi \lrarrstar_{\T \setminus \{ N \}} \phi^+$ and
$\psi_1 \lrarrstar_{\T \setminus \{ N \}} \psi^+_1, \ldots, \psi_k \lrarrstar_{\T \setminus \{ N \}} \psi^+_k$;
	and  
\item  $\rarr_{\splitdown}$ can be applied to an instance of quantification
	in $\phi^+$ and a connective in a formula~$\psi^+_\ell$.
\end{enumerate}
	
	We assume up to duality that $\phi^+$ is a holey $\{ \forall, \vee \}$-formula
	and that $\psi^+_\ell$ is a $\{ \exists, \wedge \}$-organized formula.
	Viewing $\phi^+$ and $\psi^+_\ell$ as trees, we have that $\ell$ occurs as a leaf in $\phi^+$, that
	the parent of this leaf in $\phi^+$ is a universal quantification $\forall x$,
	and that $\psi^+_\ell$ has conjunction $(\wedge)$ at its root.
	Thus $\psi^+_\ell$ is a $2$-fold conjunction.
        Since $[\theta]_{ACO}$ is pushed-down, and $\psi_\ell$ is a subformula of $\theta$, we have that $[\psi_\ell]_{ACO}$ is pushed-down, 
so we can apply Lemma~\ref{lemma:n-fold}
	to obtain that $\psi_\ell$ is a $2$-fold conjunction.
	Let $B$ be the set of holes occurring in~$\phi$;
 since
	$\phi \lrarrstar_{\T \setminus \{ N \}} \phi^+$, we have that $B$ is also the set of
	holes occurring in $\phi^+$.
	We have $\free(\psi_i) = \free(\psi^+_i)$ for each $i = 1, \ldots, k$.
	We know that we can apply rules in $\T \setminus \{ N \}$ to obtain $\phi$ from $\phi^+$; consider the process where we obtain $\phi$ from $\phi^+$ by applying such rules one at a time.  
For each $i \in B$, if $x$ occurs free in $\psi_i$, then whether or not $x$ is bound to the mentioned quantification $\forall x$ is preserved throughout the process.
Thus, among the formulas $\psi_i$ ($i \in B$), only $\psi_\ell$
	can have a free occurrence bound to the corresponding quantification $\forall x$ in $\phi$.
	It thus follows that, in $\phi$ viewed as a tree, there is no instance of disjunction
	between $\forall x$ and $\ell$, for if there were, by quantifier reordering (applying $\rarr_O$),
	$\forall x$ could be moved above a disjunction, and then $\rarr_{\pushdown}$ would be applicable
	(contradicting that $[\theta]_{ACO}$ is pushed-down).
	Thus, in $\phi$ viewed as a tree, the parent of $\ell$ is a universal quantification,
	and so $\rarr_{\splitdown}$ can be applied to $\theta$.
\end{proof}

With the insights on rule applicability from above, we now show that there is a polynomial time algorithm that decides if the defining rules of the system $Y=(\pfo, \to_{\{\pushdown,\splitdown,M\}}) \div \lrarrstar_{ACO}$ can be applied to an equivalence class with respect to $ACO$. 
This will be a building block in the proof of Theorem~\ref{thm:push-in} 
(proved in the next section)
that shows that we can compute normal forms in the system~$Y$ in polynomial time.

\begin{lemma}\label{lem:testruleapplication}
	There is a polynomial time algorithm that, given a pfo-formula $\theta$, decides if there is a formula $\theta'\in [\theta]_{ACO}$ such that
one of the rules $\rarr_{M}$, $\rarr_{\pushdown}$, $\rarr_{\splitdown}$,
can be applied to it. If so, it computes
such a $\theta'$ and a formula $\theta''$ 
obtainable by applying one of these rules to~$\theta'$.
\end{lemma}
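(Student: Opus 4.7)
The plan is to handle the three rules separately, leveraging Lemmas~\ref{lem:applyM} and~\ref{lem:applyS} for removal and splitdown and giving a direct polynomial-time test for pushdown. By Lemma~\ref{lem:applyM}, $\rarr_M$ is applicable to some element of $[\theta]_{ACO}$ if and only if it is applicable to $\theta$ itself, so I first scan the subformulas of $\theta$ for a quantification $Q x F$ with $x \notin \free(F)$; if found, I set $\theta' := \theta$ and produce $\theta''$ by deleting that quantification.

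The heart of the proof is the pushdown check. I claim that $\rarr_{\pushdown}$ is applicable to some element of $[\theta]_{ACO}$ if and only if $\theta$ contains a subformula of the form $Q x_1 \cdots Q x_m \alpha$ that constitutes a maximal same-quantifier block, such that (i) the top operation of $\alpha$ is $\wedge$ when $Q = \exists$ (resp.\ $\vee$ when $Q = \forall$), (ii) $|\conjs(\alpha)| \geq 2$ (resp.\ its disjunctive analogue holds), and (iii) there exist $i \in \{1,\ldots,m\}$ and a top-level conjunct (resp.\ disjunct) $\beta$ of $\alpha$ with $x_i \notin \free(\beta)$. The backward direction directly produces the witness: use $O$ within the block to bring $Q x_i$ to the innermost position, then use $A$ and $C$ to regroup the top-level conjuncts (resp.\ disjuncts) of $\alpha$ into a pair $F_1 \oplus F_2$ where $F_2$ gathers exactly the $x_i$-free components, after which pushdown applies at $Q x_i$. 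The forward direction rests on two invariants of ACO-equivalence: the rules $A$ and $C$ preserve the multiset $\conjs(\cdot)$ and its disjunctive dual at every internal node, while $O$ preserves the set of variables in each maximal same-quantifier block; hence any pushdown-witnessing pattern in some $\theta' \in [\theta]_{ACO}$ corresponds to the same pattern already present in $\theta$.

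For splitdown, I first invoke the pushdown test above. If it returns negative, then no formula in $[\theta]_{ACO}$ admits pushdown, meaning $[\theta]_{ACO}$ is pushed-down, and Lemma~\ref{lem:applyS} says that $\rarr_{\splitdown}$ is applicable to $[\theta]_{ACO}$ iff it is applicable to $\theta$, which is a direct scan for subformulas $\exists x (F_1 \vee F_2)$ or $\forall x (F_1 \wedge F_2)$. All of the scans and the transformations needed to assemble $\theta'$ from $\theta$ (applying $O$, $A$, $C$ locally as dictated by the witness) run in polynomial time. The main obstacle is the forward direction of the pushdown characterization, which demands the invariance analysis of ACO operations on conjunct multisets and quantifier blocks sketched above; everything else is essentially a bookkeeping exercise on the syntax tree.
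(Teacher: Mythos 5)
Your proposal is correct and follows essentially the same route as the paper: handle $\rarr_M$ directly via Lemma~\ref{lem:applyM}, reduce the splitdown test to Lemma~\ref{lem:applyS} once the negative pushdown test certifies that $[\theta]_{ACO}$ is pushed-down, and give a syntactic, ACO-invariant criterion for pushdown applicability. Your formulation of that criterion in terms of maximal same-quantifier blocks and top-level conjunct/disjunct multisets is just a rephrasing of the paper's walk through the top region of the organized subformula and its $r$-fold disjunction decomposition, resting on the same invariance observations about the $A$, $C$, and $O$ rules.
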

\begin{proof}
	We consider the three rules in $Y$ in sequence, starting with $\rarr_{M}$, then $\rarr_{\pushdown}$ and finally $\rarr_{\splitdown}$.
	
	For $\rarr_{M}$, by Lemma~\ref{lem:applyM}, a quantifier can be deleted in $\theta$ if and only if one can be deleted from any $\theta'\in [\theta]_{ACO}$; we thus directly get a polynomial-time algorithm for $\rarr_{M}$.
	
	Now assume that $\rarr_{M}$ cannot be applied, so every quantifier binds a variable in at least one atom in its subformula in~$\theta$. 
Let us study when the rule $\rarr_{\pushdown}$ can be applied.
Fix one node~$v$ in the syntax tree of $\theta$, corresponding to a quantification, and say that the corresponding quantification is universal, say, $\forall x$; the other case is totally analogous. 
Let $\theta_v$ denote the subformula of $\theta$ rooted at $v$.
We assume that the quantifiers are checked inductively in a bottom-up fashion, so all quantifiers in strict subformulas of $\theta_v$ have been checked before dealing with $v$. So we assume we cannot apply $\rarr_{\pushdown}$ on any of them for any $\theta'\in [\theta_v]_{ACO}$. By Proposition~\ref{prop:each-pfo-is-organized}, we have that the subformula $\theta_v$ is organized. Let $\psi_v$ be the top region of $\theta_v$, so that $\psi_v$ is a holey $\fove$-subformula.
Now, in $\psi_v$, walk down the syntax tree starting from $v$ until 
we encounter a node $w$ that does not correspond to a universal quantification,
and let $\psi$ denote the holey subformula of $\psi_v$ rooted at that node $w$.
If $\psi$ is itself a hole, then clearly we cannot apply $\rarr_{\pushdown}$ on $v$ in $\theta$ as then $\psi$ does not contain any $\lor$-operation on which we could apply the pushdown; indeed this also holds for all formulas $\theta'\in [\theta]_{ACO}$ since the corresponding holey formula $\psi'$ in $\theta'$ also contains no $\lor$-operation.  
In the case that $\psi$ is not a hole, then we must have that $\psi$
begins with a disjunction; note that in this case,
we could apply the reordering rule to place the quantification of $v$
immediately on top of $\psi$.
Collapsing $\psi$ up to associativity,
we can write $\psi$ as an $r$-fold disjunction
	\begin{align}
		\psi= \bigvee_{i=1}^r \psi_i,\label{eq:decomposepsi}
	\end{align}
	for some integer $r$, where each $\psi_i$ 
is either a hole or has a universal quantifier on top. 
Now if there is $i^*\in [r]$ such that $\psi_{i^*}$ does not contain the variable $x$ (quantified at the node~$v$) in any of its holes, we can use the rules $A,C$ to rewrite 
$\psi = \psi_{i^*} \lor \bigvee_{i\in [r]\setminus \{i^*\}} \psi_i$
	which gives us an $\lor$-operation to which we can apply $\rarr_{\pushdown}$ for the node $v$ (up to applications of the O rule). On the other hand, if all $\psi_i$ contain the variable $x$ in one of their holes, then this is the case for all $\psi'$ that we derive using the same process
for each formula $\theta' \in [\theta]_{ACO}$, 
because the ACO rules do not allow exchanging the positions of $\lor$ and any other operators. Thus, the representation (\ref{eq:decomposepsi}) for all $\psi'$ is the same as for $\psi$ up to the fact that the ACO rules might have been used on the disjuncts $\psi_i$. But in that case, in no $\psi'$ and thus in no $\theta'$ there is a $\lor$-operation that $v$ could be pushed over. This directly yields a polynomial time algorithm for this case: check nodes $v$ corresponding to quantifications from the bottom up (so that any node is checked before any proper ancestor of the node), and in each case find the corresponding formula $\psi$; when this formula $\psi$
is not a hole, view it in the form (\ref{eq:decomposepsi}) above, and then check to see if there is a hole that does not contain the variable $x$; if so, then the pushdown rule can be applied, up to the ACO rules, and if not,
proceed to the next variable.
	
	Finally, assume that $\rarr_{M}$ and $\rarr_{\pushdown}$ can both not be applied. Then in particular $\theta$ is pushed-down. Then, by Lemma~\ref{lem:applyS}, we get that $\rarr_{\splitdown}$ can be applied to $[\theta]_{ACO}$ if and only if it can be applied to $\theta$ which directly yields a polynomial time algorithm.
	
	Overall, we have polynomial time algorithms for all three rules under study. 
Observing that if any rule can be applied, we can compute a formula $\theta'$ and the result $\theta''$ of the rule application (as specified in the lemma statement) efficiently; this completes the proof.
\end{proof}

Finally, we give the proof of Theorem~\ref{thm:normal-forms} which we restate for the convenience of the reader. 
Recall that 
$$	Y = (\pfo, \to_{\{\pushdown,\splitdown,M\}}) \div \lrarrstar_{ACO},\quad
	Y' = (\pfo, \to_{\{\splitdown,M\}}) \div \lrarrstar_{\T}.$$
\theoremnormalforms*
\begin{proof}
Suppose that $[\phi]_{ACO}$ is a normal form of the system $Y$.
Then, we have that $[\phi]_{ACO}$ is pushed-down, since $\pushdown$
is a rule in $Y$.
Since  $[\phi]_{ACO}$ is a normal form of $Y$, 
neither of the rules $\splitdown$, $M$ can be applied to
 $[\phi]_{ACO}$, implying by 
Lemmas~\ref{lem:applyM} and~\ref{lem:applyS}
that neither of these two rules can be applied to  $[\phi]_{\T}$.
Thus $[\phi]_{\T}$ is a normal form of~$Y'$. 
\end{proof}

\section{Termination and confluence}
\label{sect:termination-confluence}

In this section, we will show that both systems $Y$ and $Y'$ are terminating. With Lemma~\ref{lem:testruleapplication} from the last section, this will yield Theorem~\ref{thm:push-in}, showing that we can efficiently compute normal forms for the system $Y$. For $Y'$, we will also show that it is locally confluent which then implies Theorem~\ref{thm:convergent}, the convergence of $Y'$.

We will start with termination for the system $Y$. Let us for every pfo-formula $\phi$ denote by $|\phi|$ the number of nodes in the syntax tree of $\phi$.

\begin{lemmarep}\label{lem:terminatingY}
  For every pfo-formula $\phi$, any reduction chain in the system $Y$ starting in $[\phi]_{ACO}$ terminates and has length at most  $|\phi|^3$.
\end{lemmarep}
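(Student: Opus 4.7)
The plan is to exhibit a nonnegative integer measure $\mu \colon \pfo \to \nats$ that (i) is invariant under the rules $A$, $C$, $O$, so that it descends to a well-defined function on $ACO$-equivalence classes, and (ii) strictly decreases by at least $1$ whenever any of the rules $\pushdown$, $\splitdown$, $M$ is applied. Any reduction chain in $Y$ starting at $[\phi]_{ACO}$ then has length at most $\mu(\phi)$, and it will suffice to observe $\mu(\phi) \le |\phi|^3$.

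For each quantifier node $v$ in a pfo-formula $\psi$, let $a(v)$ denote the number of \emph{non-quantifier} nodes (i.e., atoms and $\land,\lor$-nodes) in the body of the quantification at $v$, and define $\mu(\psi) = \sum_v a(v)$, summed over all quantifier nodes $v$ of $\psi$. Since $\mu(\psi)$ is a sum of at most $|\psi|$ terms each bounded by $|\psi|$, we have $\mu(\psi) \le |\psi|^2 \le |\psi|^3$, so the numerical bound drops out once monotonicity is established.

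I would verify $ACO$-invariance rule by rule. Rules $A$ and $C$ merely permute nodes inside a single connective's subtree; they create or destroy no quantifier and leave every body's non-quantifier-node count intact. The delicate case is $O$, where $QxQyF \rarr_O QyQxF$: both quantifier bodies contain $F$ together with the other quantifier node, and since that other node contributes nothing to the non-quantifier count, each of the two quantifiers contributes $a(F)$ both before and after, so $\mu$ is preserved.

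For strict decrease: applying $M$ to $QxF$ eliminates $Qx$'s contribution $a(F) \ge 1$, and ancestor bodies lose only a quantifier node, so their $a$-values are unchanged. Applying $\pushdown$ to $\exists x(F_1 \land F_2)$ shrinks the moving $\exists x$'s body from $F_1 \land F_2$ to $F_1$, dropping $\mu$ by $a(F_2) + 1 \ge 1$, while ancestor bodies still contain the same non-quantifier material (only a quantifier moved) and descendant quantifiers are untouched. Applying $\splitdown$ to $\exists x(F_1 \lor F_2)$ replaces one quantifier by two, changing their combined contribution from $a(F_1) + a(F_2) + 1$ to $a(F_1) + a(F_2)$, a net drop of $1$, while ancestor bodies still contain the same $\lor$-node and the same non-quantifier material of $F_1, F_2$. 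The dual cases for $\forall$ are symmetric. I expect the only real subtlety to be the $O$-case of $ACO$-invariance, and restricting the count to non-quantifier nodes in bodies is precisely what makes it go through.
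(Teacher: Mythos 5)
Your proof is correct, and it follows the same overall strategy as the paper---a potential function that is invariant under $A$, $C$, $O$ (hence well-defined on $ACO$-classes) and strictly decreases under $\rarr_{\pushdown}$, $\rarr_{\splitdown}$, $\rarr_M$---but with a genuinely different, and arguably cleaner, choice of potential. The paper assigns to each quantifier node the \emph{square} of the number of atoms in its body and sums these; the squaring is forced on them because a linear atom-count is unchanged by splitdown (one quantifier over $a_1+a_2$ atoms becomes two quantifiers over $a_1$ and $a_2$ atoms), so they need the strict convexity $a_1^2+a_2^2<(a_1+a_2)^2$. You instead count \emph{non-quantifier nodes} (atoms and connectives) in each body linearly, and the decrease under splitdown comes from the single connective node that escapes the quantifier's body; excluding quantifier nodes from the count is exactly what makes the $O$-rule an invariance, as you note. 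Your measure is bounded by $|\phi|^2$, slightly sharper than the paper's $|\phi|^3$, and both comfortably establish the stated bound. The rule-by-rule checks you sketch (ancestors unaffected because only quantifier nodes move or are duplicated, $a(F)\ge 1$ because every pfo-formula contains an atom) all go through.
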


\begin{proof}
    Consider a pfo-formula $\phi$. We will show that any chain of $\rarr_{\pushdown,\splitdown,M}$-steps starting in $[\phi]_{ACO}$ is upper bounded by $|\phi|^3$. To this end, we define a potential function $p$ on all pfo-formulas as follows: let $F$ be a subformula of $\phi$ whose root is labeled by a quantifier. Then the local potential $\bar p(F)$ of $F$ is defined as $a^2$ where $a$ is the number of atoms in $F$. Note that $\bar p(F)$ is positive. Then the potential of $\phi$ is the sum of the local potentials of all subformulas that have a quantifier labeling their root.
    
    We claim that applying any operation in $\rarr_{\pushdown,\splitdown,M}$ decreases the potential. So let $\phi'$ be obtained from $\phi$ by applying one reduction step. We consider the different possible cases:
    \begin{itemize}
        \item $\mathbf{\rarr_{\pushdown}}$: Let $F= \exists x (F_1 \land F_2)$ be a sub-formula of $\phi$ such that in $F_2$ the variable $x$ is not free, and consider the application 
$\exists x (F_1 \land F_2)\rarr_{\pushdown} \underbrace{(\exists xF_1)\land F_2}_{:=F'}$.
        The only change in the potential of $\phi$ when applying this rule is the change from $\bar p(F)$ to $\bar p(F')$, since no other sub-formulas of $\phi$ change. But $F_1$ contains fewer atoms than $F_1\land F_2$, so $\bar p(F) > \bar p(F')$, so the potential of $\phi$ decreases. All other cases for $\rarr{\pushdown}$ follow analogously.
        \item $\mathbf{\rarr_{\splitdown}}$:  Let $F= \forall x (F_1 \land F_2)$ be a sub-formula of $\phi$ and consider the application
$\forall x (F_1 \land F_2)\rarr_{\splitdown} (\forall x F_1)\land (\forall x F_2)$.
        Then this application changes the potential $p$ of $\phi$ by $\bar p (\forall x F_1) +\bar p (\forall x F_2) - \bar p (F)$. Let $a_1$ be the number of atoms in $F_1$ and $a_2$ the number of atoms in $F_2$, then
        \begin{align*}
            \bar p (\forall x F_1) +\bar p (\forall x F_2) = a_1^2 + a_2^2 < (a_1+a_2)^2 = \bar p(F),
        \end{align*}
        so the potential of $F$ decreases. All other cases for $\rarr_{\splitdown}$ follow analogously.
        \item $\mathbf{\rarr_{M}}$:  In the definition of $p(F)$, we lose one positive summand whenever applying $\rarr_{M}$ without changing any of the other summands, so the potential decreases.
    \end{itemize}
    So in any case, whenever applying one of the rules on $\phi$, the potential $p$ decreases.
    
    Now consider the equivalence classes of $PFO\div \lrarrstar_{ACO}$. Note that applying $\rarr_{A}$, $\rarr_C$ or $\rarr_O$ does not change the potential of any formula, so we can define the potential $[p]$ of every equivalence class $[\phi]_{ACO}$ by $[p]\left([\phi]_{{ACO}}\right) := p(\phi)$. Now, whenever applying a rule of the system to an equivalence class $[\phi]_{{ACO}}$, the potential $[p]([\phi]_{{ACO}})$ decreases, because, as we have seen before, it decreases for any $\phi'\in [\phi]_{{ACO}}$. 
    
    The potential is bounded by $[p]([\phi]_{{ACO}})\le |\phi| a^2\le |\phi|^3$ where $|\phi|$ is the length of~$\phi$ and $a$ the number of atoms. Since the potential is a positive integer, the longest descending chain starting in $[\phi]_{{ACO}}$ is thus of length $|\phi|^3$, so any chain in the system has polynomially bounded length.
 \end{proof}

The proof of Theorem~\ref{thm:push-in} follows from this Lemma~\ref{lem:testruleapplication} and Lemma~\ref{lem:terminatingY} straightforwardly.
\begin{proof}[Proof of Theorem~\ref{thm:push-in}]
    Starting in an input $\theta$, we simply run the algorithm of Lemma~\ref{lem:testruleapplication} to apply the rules of $Y$ as long as we can apply one of them. Let $\theta_i$ be the representative that the algorithm gives us after $i$ applications. Then, by the definition of the rules, the size when going from $\theta_i$ to $\theta_{i+1}$ increases by one (in the case of $\rarr_{\splitdown}$) or stays the same. Noting that we can only apply rules a polynomial number of times by Lemma~\ref{lem:terminatingY}, we get that all formulas we have to consider have size polynomial in $|\theta|$. Thus, after polynomial time, the algorithm of Lemma~\ref{lem:testruleapplication} will report that none of the rules can be applied anymore in which case we have found a normal form for $[\theta]_{ACO}$ in the system $Y$.
\end{proof}

We next turn to the termination of $Y'$, which will be used to show Theorem~\ref{thm:convergent} with Lemma~\ref{lemma:newman}.

\begin{lemma}\label{lem:terminatingYp}
	The system $Y'$ is terminating.
\end{lemma}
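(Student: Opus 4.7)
The plan is to lift the potential-function argument used for $Y$ in Lemma~\ref{lem:terminatingY} to the level of $\T$-equivalence classes. Recall that the potential $p(\phi) = \sum_{Qx \in \phi} a(Qx)^2$ used there, where $a(Qx)$ denotes the number of atoms in the subformula rooted at $Qx$, is invariant under the $\T$-rules in $\{A, C, O, N\}$ and strictly decreases under each of $\pushdown$, $\splitdown$, and $M$; dually, it strictly increases under $\pushup$. Hence $p$ is not $\T$-invariant, but the minimum of $p$ over a $\T$-equivalence class is a well-defined element of $\N$. By Theorem~\ref{thm:push-in}, each $\T$-class has a pushed-down representative, and since $\pushdown$ strictly decreases $p$, this minimum is attained precisely on pushed-down representatives. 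I therefore set $\hat p([\phi]_{\T}) := \min_{\psi \in [\phi]_{\T}} p(\psi)$.

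The main goal is to show that every $Y'$-transition $[\phi]_{\T} \rarr_r [\phi']_{\T}$, with $r \in \{\splitdown, M\}$, strictly decreases $\hat p$; given this, termination of $Y'$ follows from the well-foundedness of $(\N, <)$. Starting from any witness $\gamma \rarr_r \gamma'$ of such a transition, I apply $\pushdown$ exhaustively to $\gamma$ to obtain $\gamma^{\mathrm{pd}} \in [\phi]_{\T}$ with $[\gamma^{\mathrm{pd}}]_{ACO}$ pushed-down and $p(\gamma^{\mathrm{pd}}) \le p(\gamma)$. Applying Lemmas~\ref{lem:applyM} and~\ref{lem:applyS} to the pushed-down class $[\gamma^{\mathrm{pd}}]_{ACO}$ shows that $r$ is applicable to $\gamma^{\mathrm{pd}}$; a commutation argument then lets me select a position where applying $r$ yields a formula $\beta$ lying in $[\gamma']_{\T} = [\phi']_{\T}$. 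The direct computation from the proof of Lemma~\ref{lem:terminatingY} yields $p(\beta) < p(\gamma^{\mathrm{pd}})$, so $\hat p([\phi']_{\T}) \le p(\beta) < p(\gamma^{\mathrm{pd}}) = \hat p([\phi]_{\T})$, where the last equality uses that all pushed-down representatives of $[\phi]_{\T}$ share the same $p$-value.

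The principal technical hurdles are thus (i) verifying that pushed-down representatives of a single $\T$-class all have the same $p$-value, so that $\hat p$ is computed by any such representative, and (ii) the commutation argument guaranteeing that from $\gamma^{\mathrm{pd}}$ one can reach a formula in $[\gamma']_{\T}$ via a single $r$-step. Both claims reduce to a case analysis of how $\pushdown/\pushup$ interact with the remaining $\T$-rules and with the rules in $\{\splitdown, M\}$, making substantial use of Proposition~\ref{prop:T-to-organized} together with the applicability characterizations of Lemmas~\ref{lem:applyM} and~\ref{lem:applyS}. The underlying intuition is that in a pushed-down form every quantifier sits at its most-local scope, so the positions at which $r$ can be applied are structurally determined up to $\{A, C, O, N\}$-equivalence, and applying $r$ at such a position matches the effect of applying $r$ in the original witness $\gamma$ up to further $\T$-rewrites; the harder of the two points is the commutation argument, which must handle the overlap cases where a pushdown step and the $r$-step act on nearby quantifier-connective pairs.
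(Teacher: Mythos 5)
Your approach is genuinely different from the paper's: instead of constructing a potential that is exactly invariant under all of $\T$, you try to reuse the simple potential $p(\phi)=\sum_{Qx}a(Qx)^2$ from the proof of Lemma~\ref{lem:terminatingY} and pass to the minimum over a $\T$-class. Unfortunately your step (i) --- that all pushed-down representatives of a $\T$-class share the same $p$-value, so that $\hat p$ may be computed from an arbitrary pushed-down representative --- is false. Take $\theta=\exists x\,\exists y\,(A(x,y)\wedge B(x)\wedge B'(x)\wedge C(y))$. Pushing $\exists y$ down first yields $\phi_1=\exists x\,((\exists y\,(A(x,y)\wedge C(y)))\wedge B(x)\wedge B'(x))$ with $p(\phi_1)=4^2+2^2=20$; reordering the quantifiers and pushing $\exists x$ down first yields $\phi_2=\exists y\,((\exists x\,(A(x,y)\wedge B(x)\wedge B'(x)))\wedge C(y))$ with $p(\phi_2)=4^2+3^2=25$. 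Both formulas lie in the same $\T$-class, and in each of them every conjunct under the remaining outer quantifier contains that quantifier's variable, so both ACO-classes are pushed-down; yet $p(\phi_1)\neq p(\phi_2)$. Consequently the final equality in your chain $\hat p([\phi']_\T)\le p(\beta)<p(\gamma^{\mathrm{pd}})=\hat p([\phi]_\T)$ can fail (here $p(\gamma^{\mathrm{pd}})$ could be $25$ while the minimum over the class is $20$), and no strict decrease of $\hat p$ is established.

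One could try to repair this by letting $\gamma^{\mathrm{pd}}$ be a representative that actually attains the minimum of $p$ over the class (such a representative is pushed-down, since $\pushdown$ strictly decreases $p$, and Lemmas~\ref{lem:applyM} and~\ref{lem:applyS} then do give applicability of the rule $r$ to it). But then the entire burden falls on your ``commutation argument'': you must show that $r$ can be applied to that minimizing representative \emph{so that the result lands in the specific target class} $[\phi']_\T$, not merely in some class reachable by an $r$-step. For $M$ this is manageable (it is essentially the bookkeeping done in the proof of Theorem~\ref{thm:monotone}), but for $\splitdown$ the connective being split over sits at different positions in different representatives, and relating these applications across a $\T$-rewrite is exactly the kind of analysis carried out in Case~3 of the proof of Lemma~\ref{lem:locallyconfluentYp}; it is the hard core of the problem, not a routine case analysis. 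The paper sidesteps both difficulties at once by designing a potential ($3\bar p(F)+\mu(F)$ summed over quantifier nodes, where $\bar p$ counts the $\land$- or $\lor$-nodes below the quantifier's non-blocked ancestor) that is provably unchanged by every rule in $\T$, including $\pushdown$ and $\pushup$, and strictly decreases under any application of $\splitdown$ or $M$; with such an invariant no choice of representative and no commutation argument is needed.
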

\begin{proof}
    The proof follows the same idea as that of Lemma~\ref{lem:terminatingY}: we want to introduce a potential for every pfo-formula and show that it is the same for all representatives of an equivalence class. Then we would like to show that applying rules from the system $Y'$ decreases the potential of any formula. Unfortunately, the setting is slightly more complicated such that the definition of the potential is more involved.
    
    Consider a pfo-formula $\theta$. Let $v$ be a node labeled by a universal quantification and let~$F$ be the subformula of $\theta$ rooted in $v$. We say that an ancestor $u$ of $v$ is \emph{blocking} for $v$ if it is an $\land$-node or if it is of the form $\exists y$ and this quantifier binds an occurrence of $y$ below $u$. We call all other nodes \emph{non-blocking}. The \emph{non-blocked ancestor} $v^*$ of $v$ is the highest node in $\theta$ reachable from $v$ by using only non-blocking nodes. Observe that $v^*$ is uniquely determined because $\theta$ is a tree. The local potential $\bar p(F)$ is then the number of nodes labeled by $\land$ that appear in the subformula of $\theta$ rooted by $v^*$. To simplify notation, we sometimes also write $\bar p(v)$ for $\bar p(F)$. We define the local potential of nodes labeled with an existential quantifier analogously, letting $\lor$ and $\forall$ play the role of $\land$ and $\exists$ in the definition of non-blocking nodes. Thus we have defined the local potential for all nodes labeled by quantifications; all other nodes have no local potential.
    
    Then the potential $p(\theta) := \sum_F \left(3 \bar p(F) + \mu(F)\right)$ where the sum is over all subformulas $F$ of $\theta$ whose root is labeled by~a quantifier and 
    \begin{align*}
        \mu(F) = \begin{cases}
            1, &\text{ if } \bar{p}(F) = 0\\
            0, &\text{otherwise}.
        \end{cases}
    \end{align*}
    
    \begin{claim}\label{clm:same-potential}
        All formulas $\theta'\in [\theta]_\T$ have the same potential $p(\theta')=p(\theta)$.
    \end{claim}
    \begin{proof}
        It suffices to show that applying any rule in $\T$ does not change the potential. Note that applying any rule in~$\T$ does not delete or add any quantifiers, so it suffices to show that $\bar p(v)$ does not change for any node $v$. We only consider the case in which~$v$ is labeled by a universal quantifier; the other case is symmetric.
        So for the rest of the proof of the claim fix a node $v$ that is labeled by a universal quantification. We consider applications of the different rules in~$\T$.
        
        \begin{itemize}[align=left] 
            \item[$\mathbf{\rarr_{C}}$:] Applying commutativity does not change the non-blocked ancestor of any node $v$, nor the $\land$-nodes below it. So clearly $\bar p(v)$ does not change.
            \item[$\mathbf{\rarr_{A}}$:] Applying associativity might change the non-blocked ancestor $v^*$. However, the $\land$-nodes below the respective non-blocked ancestors stay the same, so $\bar p(v)$ stays unchanged.
            \item[$\mathbf{\rarr_{O}}$:] Since $\forall$-quantifiers are non-blocking, reordering them---while it may change the non-blocked ancestor---will not change the set of $\land$-nodes below that ancestor. $\exists$-quantifiers might be blocking, so reordering them might or might not change the non-blocked ancestor. But in any case, the $\land$-nodes under the non-blocked ancestor stay the same and thus $\bar p(v)$ is unchanged.
            \item[$\mathbf{\rarr_{N}}$:] The definition of $\bar p(v)$ is completely independent of variable names, so renaming does not change the potential $\bar p(v)$.
            \item[$\mathbf{\rarr_{\pushdown}}$:] Consider first the case that $\rarr_{\pushdown}$ was applied on a $\forall$-quantifier, so we have an operation of the form 
            \begin{align*}
                \forall x (F_1 \vee F_2) \rarr_{\pushdown} (\forall x F_1) \vee F_2.
            \end{align*}
            Call the node labeled with $\forall$ before this application $u$ and let the $\vee$-node below $u$ be $u'$.
            Both $\forall$ and $\lor$ are non-blocking, so if $u$ is not the non-blocked ancestor of $v$, then $v^*$ does not change and thus $\bar p(v)$ remains unchanged as well. If $u$ is the unblocked ancestor of $v$, then pushing $u$ down below $u'$ makes $u'$ the new non-blocked ancestor. However, before and after the application of the rule all $\land$-nodes below the non-blocked ancestor are actually below $u'$, so $\bar p(v)$ remains unchanged by the application.
            
            Now assume that $\rarr_{\pushdown}$ is applied on an $\exists$-quantifier, so we have an operation of the form 
            \begin{align*}
                \exists x (F_1 \land F_2) \rarr_{\pushdown} (\exists x F_1) \land F_2.
            \end{align*}
            Let $u$ be the node labeled by $\exists x$ before the application of the rule, $u'$ its $\land$-child and $u''$ the root of $F_1$. Note that $\land$ is blocking, so the path from $v$ to $v^*$ cannot pass through $u'$. However, the node $u$ might be non-blocking for $v$, so it can in particular become the non-blocked ancestor of $v$ after the application of the rule. This case happens only if before application the node $u''$ was the non-blocked ancestor of $v$, in which case the $\land$-nodes below the non-blocked ancestor stay the same. If $u$ does not become the non-blocked ancestor, then that ancestor remains the same and the $\land$-nodes below it also do not change. Thus, in any case $\bar p(v)$ does not change.
            \item[$\mathbf{\rarr_{\pushup}}$:] Since $\mathbf{\rarr_{\pushup}}$ is the inverse of $\mathbf{\rarr_{\pushdown}}$ and the latter does not change $\bar p(v)$, this is also true for $\mathbf{\rarr_{\pushup}}$.
        \end{itemize}
        Since we have checked all rules in $\T$ and none of them change the potential, this completes the proof of the claim.
    \end{proof}
    
    We continue the proof of Lemma~\ref{lem:terminatingYp}. Since the potential is the same for all pfo-formulas in an equivalence class $[\theta]_\T$, we can extend the definition of the potential to the equivalence classes by setting $[p]([\theta]_\T) := p(\theta)$. We next want to show that applying any rule of the system $Y'$ on an equivalence class decreases the potential $[p]$. We show this at the level of pfo-formulas.
    \begin{claim}\label{clm:potentialdecrease}
        Applying any rule of $Y'$ on a pfo-formula $\theta$ to get a new pfo-formula $\theta'$ decreases the potential $p$. 
    \end{claim}
    \begin{proof}
        We consider the two rules $\rarr_{\splitdown}$ and $\rarr_{M}$ of $Y'$ individually. We start with the latter. So assume that we forget a quantifier in $\theta$, say $\forall x$, since the other case is completely analogous. Let $v$ be the node labeled with the quantifier we eliminate. Then in the definition of $p(\theta)$ we lose the term $\bar p(v) + \mu(v)$ which, by construction, is positive: either $\bar p(v)=0$ in which case $\mu(v)=1$ and clearly $\bar p(v) + \mu(v)>0$. Otherwise, $\bar p(v)> 0$ and by definition $\mu(v)=0$, so $\bar p(v)> 0$. We will show the other summands in the definition of $p(\theta)$ do not change when deleting $v$. To this end, observe that, since we can eliminate $v$, it does not bind any variables in any atoms. So $v$ is non-blocking for all other nodes $u$ labeled with quantifiers. But then for all such nodes $u$ we have that the non-blocked ancestor $u^*$ of $u$ does not change, unless $v= u^*$ in which case the new non-blocked ancestor of $u$ is the single child of $v$. Thus, in any case, the number of $\land$-nodes below $u^*$ and consequently $\bar p(u)$ remain unchanged. It follows that the only change from $p(\theta)$ to $p(\theta')$ is that we lose one positive summand and thus the potential decreases strictly.
        
        Now say that we use $\rarr_{\splitdown}$, say 
        \begin{align}\label{eq:1}
            \forall x (F_1 \wedge F_2) \rarr_{\splitdown} (\forall x F_1) \wedge (\forall x F_2)
        \end{align}
        (the existential case is again totally symmetric).
        Let $v$ be the node containing the quantification before the application and let $v_1$ and $v_2$ contain them afterwards. Let $a$ be the number of $\land$-nodes below $v$ before the application of the rule. Note that $\bar p(v)\ge a\ge 1$, so $\mu(v) = 0$. The $\land$-node above $v_1$ and $v_2$ is blocking for both of them, so $v_1$ and $v_2$ are their respective non-blocked ancestors and thus $\bar p(v_1) + \bar p(v_2) +1= a \le \bar p(v)$. We get that 
        \begin{align*}
            3\bar p(v) + \mu(v)& = 3 \bar p(v)\\
            & \ge 3a\\
            & = 3(\bar p(v_1) + \bar p(v_2) +1)\\
            & > 3\bar p(v_1) + \mu(v_1 )+ 3\bar p(v_2) + \mu(v_2).
        \end{align*}
        Here the last step holds because the function $\mu$ is bounded by $1$. It follows that, when substituting the summand for $v$ by those of $v_1$ and $v_2$, we decrease the overall sum. It remains to show that when going from $\theta$ to $\theta'$ none of the other summands in $p(\theta)$ increases. But this is easy to see: for all nodes labeled by a universal quantification, if the $\land$-node in $(\ref{eq:1})$ was blocking before, then it is still blocking afterwards, so the $\land$-nodes to consider in the computation of $\bar p(v)$ do not change. For a node $u$ labeled by an existential quantification, we have that the $\land$-node is not blocking before or after. Moreover, $v$ is blocking before the application of the rule if and only $v_1$ and $v_2$ are after. So, in any case, the non-blocked ancestor of any node $u\ne v$ stays the same and thus also $\bar p(u)$. This completes the proof of the claim.
    \end{proof}
    
    The proof of Lemma~\ref{lem:terminatingYp} is now easy to complete. The potential $[p]([\theta]_\T)$ of any equivalence class $[\theta]_\T$ is defined as $p(\theta)$ and thus by construction polynomial in $|\theta|$. Moreover, whenever we apply a rule of $Y'$ to $[\theta]_\T$, we apply it on a representative $\psi\in [\theta]_\T$ to get $\psi'$ and thus end up in the equivalence class $[\psi']_\T$. By Claim~\ref{clm:potentialdecrease}, we have $p(\psi) > p(\psi')$ and thus $[p]([\theta]_\T) = [p]([\psi]_\T) > [p]([\psi']_\T)$, so every application of a rule in $Y'$ decreases the potential $[p]$. Since the potential is by construction always a non-negative integer, any reduction chain starting in $[\theta]$ has length at most polynomial in $|\theta|$ which completes the proof.
\end{proof}

As the second ingredient for the proof of confluence for $Y'$ with the help of Lemma~\ref{lemma:newman}, we now show local confluence of $Y'$. 

\begin{lemmarep}\label{lem:locallyconfluentYp}
	The system $Y'$ is locally confluent.
\end{lemmarep}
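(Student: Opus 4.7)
The plan is to establish local confluence of $Y'$ by case analysis on the two rule applications $r_1, r_2 \in \{\splitdown, M\}$ diverging from an arbitrary class $[\theta]_\T$. Each reduction $[\theta]_\T \to_{Y'} [\alpha]_\T$ (respectively $[\theta]_\T \to_{Y'} [\beta]_\T$) is witnessed by some representative in $[\theta]_\T$ together with a specific application of $r_1$ at a quantifier node $v_1$ of its syntax tree (respectively $r_2$ at $v_2$). Since the rules of $\T$ neither create nor destroy quantifiers, and for each quantifier $Qx$ they preserve the set of atomic occurrences of $x$ bound to it (as already noted in the proof of Lemma~\ref{lem:applyM}), the identity of the attacked quantifier is well-defined across $\T$-equivalent representatives. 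By picking a single standardized pushed-down representative $\theta$ of $[\theta]_\T$ via Theorem~\ref{thm:push-in} and Proposition~\ref{prop:getridofN}, and invoking Lemmas~\ref{lem:applyM} and~\ref{lem:applyS}, I can transport both rule applications into $\theta$ whenever possible and then compare reducts directly.

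The case analysis branches on the relative position of the two attacked quantifiers. In the \emph{disjoint} case the two applications act on unrelated subformulas and commute at the formula level, so applying the other rule to each reduct yields a common formula and hence joinability. In the \emph{nested} case, say $v_1$ is a strict ancestor of $v_2$, the crucial observation is that $v_2$ survives as a single node after the outer application: if $r_1 = M$ this is immediate, and if $r_1 = \splitdown$ over a body $F_1 \oplus F_2$, the node $v_2$ lies strictly inside one of the two halves, so exactly one copy of $v_2$ remains in the reduct. Applicability of $r_2$ at (the residual of) $v_2$ is therefore preserved, and the two orders yield the same final formula, up to obvious renaming absorbed by $N \in \T$. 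The \emph{identical} case, $v_1 = v_2 = Qx$, splits by rule pair: both $M$ give identical results; the mixed pair on $Qx(F_1 \oplus F_2)$ with $x \notin \free(F_1 \oplus F_2)$ is joined by the chain $Qx(F_1 \oplus F_2) \to_{\splitdown} (Qx F_1) \oplus (Qx F_2) \to_M \to_M F_1 \oplus F_2$, which matches the direct result of applying $M$.

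The main obstacle, in my view, is the identical subcase where both reductions are $\splitdown$s targeting the same quantifier $Qx$ whose body is a multi-fold connective. Associativity of the connective allows different $\T$-equivalent representatives to place $Qx$ above different groupings, so a single $\splitdown$ step from each representative can produce reducts lying in distinct $\T$-classes. The resolution is to continue applying $\splitdown$ to each residual copy of $Qx$ until $Qx$ has been pushed down to every maximal sub-disjunct (respectively sub-conjunct) whose root is not $\vee$ (respectively $\wedge$); the two resulting fully-split forms agree up to associativity and commutativity, hence lie in a common $\T$-class, giving joinability after finitely many extra $\splitdown$ steps. A minor recurring technicality is that a one-step reduct may fail to be pushed-down at the residual site of the other rule; this is handled by noting that $M$-applicability is insensitive to pushing (Lemma~\ref{lem:applyM}) and by applying $\pushdown$-rules within $\T$ before re-invoking Lemma~\ref{lem:applyS} when $\splitdown$-applicability must be re-established on the reduct.
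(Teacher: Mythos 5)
Your skeleton matches the paper's proof: factor out the renaming rule $N$ so that one works with standardized formulas and $\T\setminus\{N\}$, then case-analyse the pair of rule applications, with the $M$/$M$ and $M$/$\splitdown$ cases handled by commutation and the double-$\splitdown$ case as the crux. The gap is in your resolution of that crux. You claim that after fully splitting $Qx$ down to ``every maximal sub-conjunct whose root is not $\wedge$'', the two resulting forms ``agree up to associativity and commutativity.'' This is false: the two $\splitdown$ applications are witnessed in two \emph{different} $\T$-equivalent representatives, and the body of $Qx$ can decompose into genuinely different multisets of maximal conjuncts in those representatives, because $\pushup$/$\pushdown$ applied to inner existential quantifiers regroups conjuncts. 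For example, the bodies $D(x)\wedge\bigl(\exists y\,(A(x,y)\wedge B(x,y))\wedge C(x)\bigr)$ and $D(x)\wedge\exists y\,\bigl((A(x,y)\wedge B(x,y))\wedge C(x)\bigr)$ are $\T$-equivalent, but fully splitting $\forall x$ over the first yields three copies of $\forall x$ while over the second it yields two, and in the second reduct there is no residual $\wedge$ under the copy of $\forall x$ guarding the existential block, so no further $\splitdown$ is available there: one must first rewrite with $\pushdown$ on $\exists y$ before splitting again. Hence ``finitely many extra $\splitdown$ steps'' on the residual copies of $Qx$ does not suffice, and the joinability of the two fully-split forms is exactly the statement that still needs proof. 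The paper supplies it by constructing a canonical finest decomposition of the $\{\exists,\wedge\}$-region below the quantifier: it defines an equivalence relation $\approx$ on the holes (two holes related if they share a variable bound by a common inner $\exists$), proves in Claim~\ref{clm:equivalenceonholes} that every $\T$-representative keeps each $\approx$-class under a single existential quantifier, rewrites the region canonically as $\bigwedge_C \psi_C$ over the $\approx$-classes, and shows both splittings can be continued to the split over this canonical conjunction. Some argument of this kind is indispensable.

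A second, smaller issue: your trichotomy (disjoint / nested with $v_2$ strictly inside one half / identical) omits the configuration where $v_1\neq v_2$ are distinct quantifiers stacked above the same connective and the same hole, e.g.\ $\forall x_1\forall x_2(F_1\wedge F_2)$, where each quantifier is split in a different $O$-reordered representative; here $v_2$ is neither in a disjoint subformula nor strictly inside a half of the body of $v_1$. The paper handles this by first using $\rarr_O$ and $\rarr_{\pushdown}$ to move one quantifier directly on top of the other before reducing to the single-site analysis; your argument needs the analogous normalization step.
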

\begin{proof}
    It will be convenient to prove a different, more technical statement which we first show implies local confluence of $Y'$. Remember that two equivalence classes $[\phi_1]_{\T\setminus \{N\}}$ and $[\phi_2]_{\T\setminus \{N\}}$ are called joinable w.r.t.~$\rarr_{\{\splitdown, M\}}$, in symbols $[\phi_1]_{\T\setminus \{N\}} \downarrow_{\{\splitdown, M\}}[\phi_2]_{\T\setminus \{N\}}$, if there exists a $\phi$ such that 
    \begin{align*}[\phi_1]_{\T\setminus \{N\}}\rarrstar_{\{\splitdown, M\}} [\phi]_{\T\setminus \{N\}}\larrstar_{\{\splitdown, M\}} [\phi_{2}]_{\T\setminus \{N\}}.\end{align*}
    
    \begin{claim}\label{clm:without-N}
        If for all standardized pfo-formulas $\phi_1$ and $\phi_2$ 
        where $\phi_1 \lrarrstar_{\T\setminus \{N\}}\phi_2$ we have that for all $\phi_1', \phi_2'$ with $\phi_1\rarr_{\{\splitdown, M\}} \phi_1'$ and $\phi_2\rarr_{\{\splitdown, M\}} \phi_2'$ that $[\phi_1']_{\T\setminus \{N\}} \downarrow_{\{\splitdown, M\}}[\phi_2']_{\T\setminus \{N\}}$, i.e.~$[\phi_1']_{\T\setminus \{N\}}$ and $[\phi_2']_{\T\setminus \{N\}}$ are joinable w.r.t.~$\rarr_{\{\splitdown, M\}}$, then $Y'$ is locally confluent.
    \end{claim}	
    \begin{proof}
        We assume the condition of the statement. Let $\psi, \psi_1, \psi_2$ be such that 
        \begin{align*}
            [\psi_1]_\T \larr_{\{\splitdown, M\}} [\psi]_\T \rarr_{\{\splitdown, M\}} [\psi_2]_\T.
        \end{align*} 
        We have to show that $[\psi_1]_\T$ and $[\psi_2]_\T$ are joinable w.r.t. $\rarr_{\{\splitdown, M\}}$. 
        
        We may assume w.l.o.g.~that there are $\theta_1, \theta_2\in [\psi]_\T$ such that $\theta_1 \rarr_{\{\splitdown, M\}} \psi_1$ and $\theta_2 \rarr_{\{\splitdown, M\}} \psi_2$ (otherwise, just choose another representative for $[\psi_1]_\T$ and $[\psi_2]_\T$).
        We have $\theta_1, \theta_2\in [\psi]_\T$, so $\theta_1 \lrarrstar_{\T} \theta_2$. By Proposition~\ref{prop:getridofN}, there are formulas $\phi_1, \phi_2$ such that 
        \begin{align*}
            \theta_1 \rarrstar_N \phi_1 \rarrstar_{\T\setminus \{N\}} \phi_2\rarrstar_N \theta_2
        \end{align*}
        Since all applications of $\rarr_N$ are invertible, we have in fact that $\theta_1\lrarrstar_N \phi_1$ and $\theta_2\lrarrstar_N \phi_2$. Let $\mathcal N^1 := N_1^1, \ldots, N^1_s$ be the sequence of renaming operations that gets us from $\theta_1$ to $\phi_1$ and let $\mathcal N^2:= N_1^2, \ldots, N_{s'}^2$ be that getting us from $\theta_2$ to $\phi_2$. 
        
        Remember that $\theta_1\rarr_{\{\splitdown, M\}}\psi_1$ and $\theta_2\rarr_{\{\splitdown, M\}}\psi_2$; call the rule applications $R_1$ and $R_2$ to get formulas $\phi_1'$ and $\phi_2'$, respectively. Since $\rarr_{\splitdown}$ and $\rarr_M$ are independent of the variables in the quantifications, we can apply the $R_1$ and $R_2$ to $\phi_1$ and $\phi_2$, respectively. Moreover, the application of $R_1$ commutes with that of $\mathcal N^1$, that is $\theta_1 \rarr_{R_1} \psi_1 \rarr_{N_1^1} \rho_1\rarr_{N_2^1} \ldots \rarr_{N_s^1} \rho_s=\phi_1'$ (where we tacitly delete renaming operations if a quantifier gets deleted by $R_1$) gets us to the same formula $\phi_1'$ as first renaming in $\theta_1$ according to $\mathcal N^1$ to get to $\phi_1$ and then applying $R_1$. Similarly, we get from $\theta_2$ to $\phi_2'$ by either applying first $R_2$ and then $\mathcal N^2$ or applying $R_2$ on $\phi_2$.
        
        Since $\phi_1\rarr_{\{\splitdown, M\}} \phi_1'$, $\phi_2\rarr_{\{\splitdown, M\}} \phi_2'$ and $\phi_1 \lrarrstar_{\T\setminus \{N\}} \phi_2$, by the assumption there is a formula $\psi$ with  \begin{align*}[\phi_1']_{T\setminus \{N\}} \rarrstar_{\{\splitdown, M\}} [\psi]_{T\setminus \{N\}} \larrstar_{\{\splitdown, M\}} [\phi_2']_{T\setminus \{N\}}.\end{align*} It follows that 
        \begin{align*}[\psi_1]_\T = [\phi_1]_\T\rarrstar_{\{\splitdown, M\}} [\psi]_\T\larrstar_{\{\splitdown, M\}} [\phi_{2}]_\T=  [\psi_2]_\T,\end{align*}
        so $[\psi_1]_\T$ and $[\psi_2]_\T$ are joinable and the claim follows.
    \end{proof}

    In the following, we will prove the assumption of Claim~\ref{clm:without-N}.
    Since we only have to consider two rules, it suffices to consider the different combinations of rule applications in 
    \begin{align}
        \label{eq:conf1}\phi_1 &\rarr_{\{\splitdown, M\}} \phi_1'\\
        \label{eq:conf2}\phi_2 &\rarr_{\{\splitdown, M\}} \phi_2'.
    \end{align}
    We consider three different cases.
    
    \textbf{Case 1 -- in (\ref{eq:conf1}) and (\ref{eq:conf2}) we use $\mathbf{\rarr_{M}}$:} With a short induction it is easy to see that if in $\phi_1$ a quantifier does not bind a variable, then this is also true for all other formulas in $[\phi_1]_{\T\setminus \{N\}}$. An analogous statement is true for quantifiers in $\phi_2$. We may assume that in~(\ref{eq:conf1}) and (\ref{eq:conf2}) different quantifiers are deleted since otherwise we directly end up in the same equivalence class with respect to $\lrarrstar_{\T\setminus \{N\}}$. Observe that deleting the same quantifier in $\phi_1$ or $\phi_2$ has no impact on the respective other quantifier. Then, in both cases, after deleting the first quantifier, the second one still does not bind any variable. It follows that after deleting both quantifiers in both $\phi_1$ and $\phi_2$ in any order, we end up in the same equivalence class with respect to $\lrarrstar_{\T \setminus \{N\}}$.
    
    \textbf{Case 2 -- in (\ref{eq:conf1}) we use $\mathbf{\rarr_{M}}$ and in (\ref{eq:conf2}) we use $\mathbf{\rarr_{\splitdown}}$:} Assume first that in (\ref{eq:conf1}) and (\ref{eq:conf2}) we apply the rules on the same quantifier. Then the newly introduced quantifiers in $\phi_2'$ do not bind any variables, so we can simply use $\rarr_{M}$ twice on $\phi_2'$ to delete them and get a formula in  $[\phi_1']_{\T\setminus \{N\}}$.
    
    If (\ref{eq:conf1}) and (\ref{eq:conf2}) are applied on different quantifiers, then it is evident that applying one of the operations first does not change the applicability of the second, and after applying both on both $\phi_1$ and $\phi_2$, we end up in formulas equivalent with respect to $\lrarrstar_{\T\setminus \{N\}}$.
    
    \textbf{Case 3 -- in (\ref{eq:conf1}) and (\ref{eq:conf2}) we use $\mathbf{\rarr_{\splitdown}}$:} If one application of $\rarr_{\splitdown}$ is on an $\exists$-quantifier while the other is on a $\forall$-quantifier, then we can reason as in Case 2. So assume in the following that both applications are on $\forall$-quantifiers; the case of $\exists$-quantifiers is completely symmetric. Let $v_1$ and $v_2$ be the respective root nodes of the corresponding subformulas in $\phi_1$, resp.~$\phi_2$, that we apply $\rarr{\splitdown}$ on (note that $v_1=v_2$ is possible here). Assume that both $v_1$ and $v_2$ bind a variable; otherwise we can use an application of $\rarr_{M}$ to delete the non-binding quantifier and proceed similarly to Case 2.
    
    By Proposition~\ref{prop:each-pfo-is-organized} we have that $\phi_1$ and $\phi_2$ are organized. Thus, for $i=1,2$ there is a largest holey sub-formula $F_i$ of $\phi_i$ that is a $\fove$-formula and that contains $v_i$. If $v_1$ does not lie in $F_2$, then clearly the two rule applications can be exchanged and treated as in Case 2. So assume in the remainder that $v_1$ and $v_2$ lie both in $F_1$ and $F_2$ and $[F_1]_{\T\setminus \{N\}} = [F_2]_{\T\setminus \{N\}}$. Since applications of $\rarr_{\splitdown}$ for a $\forall$-quantifiers cannot happen inside a $\fove$-formula, $v_1$ and $v_2$ must lie above a hole in $h_1$ in $F_1$ and $h_2$ in $F_2$, respectively. By Proposition~\ref{prop:T-to-organized}, the holes are the same as in $F_1$ and $F_2$ with respect to the formulas that are plugged into them up to $\lrarrstar_\T$, so somewhat abusing notation, we say that they are the same holes. So we have that both $h_1$ and $h_2$ appear both in $F_1$ and $F_2$. Consider first that case that $h_1$ and $h_2$ are different. If the quantification in $v_1$ binds a variable, then this variable can only appear in $h_1$. So w.l.o.g., using $\lrarrstar_{\T}$, we may assume that $v_1$ is directly above $h_1$ in both $F_1$ and $F_2$. Similarly, we may assume that $v_2$ is directly above $h_2$ in $F_1$ and $F_2$.
    Then we can again exchange the rule applications freely and proceed as in Case 2. Now assume  that $v_1$ and $v_2$ are above the same hole $h$ in $F_1$ and $F_2$. Note that out of the holes of $F_1, F_2$ the quantifiers $v_1$ and $v_2$ bind a variable only associated to $h$ and no other hole. So $v_1$ must be on a path from the root of $F_2$ to $h$ in $F_2$ and thus we can use $\rarr_{\pushdown}$ and $\rarr_{O}$ in $F_2$ to move $v_1$ directly on top of $v_2$. Thus we assume in the remainder that $F_2$ is in a form in which $v_1$ is the parent of $v_2$ which is the parent of $h$. We proceed accordingly if $v_1=v_2$.
    
    Since $\phi_1$ and $\phi_2$ are organized and $v_1$ and $v_2$ are pushed down, the hole $h$ must in $\phi_1$ and $\phi_2$ contain a $\exwe$-formula $\psi$ that has at least one $\land$-node and is thus not atomic. Let $F$ be the largest holey sub-formula of $\psi$ that is rooted in $h$. 
    
    In the remainder, we will need a better understanding of $[F]_{\T\setminus \{N\}}$. To this end, we define a relation $\sim$ on the holes of $F$ by saying that $h_1\sim h_2$ if and only if there is a variable $x$ associated to both $h_1$ and $h_2$ and there is an $\exists$-quantifier in $F$ binding $x$. Let $\approx$ be the transitive and reflexive closure of $\sim$. Note that $\approx$ is an equivalence relation on the holes of $F$. Moreover, this relation is the same for every formula $F'\in [F]_\T$.
    
    \begin{claim}\label{clm:equivalenceonholes}
        For every $F'\in [F]_\T$ and every equivalence class $C$ of $\approx$, there is an existential quantifier $u$ in $F'$ such that all holes in $C$ are below $u$ in $F'$.
    \end{claim}
    \begin{proof}
        We assume w.l.o.g.~that $C$ contains at least two holes. Let $\bar F'$ be the smallest subformula of $F'$ containing all holes in $C$. Then the root of $\bar F'$ is a $\land$-node and none of the two subformulas $\bar F'_1, \bar F'_2$ contains all holes in $C$. Let $C_1$ be the holes in $C$ appearing in $\bar F'_1$ and $C_2$ those in $\bar F'_2$. Then there must be a variable $x$ that is associated to a hole in $C_1$ and one in $C_2$ and they are bound by the same quantifier, because otherwise $C$ would not be an equivalence class. It follows that the quantifier binding $x$ in the holes must be an ancestor of the root of $\bar F'$ and thus all holes in $C$ lie below it.
    \end{proof}
    We rewrite $F$ with $\T$ as follows: first pull up all quantifiers to turn $F$ into prenex form (this is possible since $\phi_1$, $\phi_2$ and thus $F$ are standardized), then write the quantifier free part as 
    \begin{align}\label{eq:notyetcanonical}\bigwedge_C (\bigwedge_{h\in C} h),
    \end{align}
    where the outer conjunction goes over all equivalence classes $C$ of $\approx$.
    Then push all $\exists$-quantifiers of $F$ down as far as possible with $\rarr_{\pushdown}$. Note that since every quantifier binds a variable that only appear in one class $C$, all quantifiers can be pushed into the outer conjunction in (\ref{eq:notyetcanonical}). This gives a representation of $F$ as 
    \begin{align}\label{eq:canonical}
        F \equiv \bigwedge_{C} \psi_C,
    \end{align}
    where the conjunction is again over all equivalence classes with respect to $\approx$. By Claim~\ref{clm:equivalenceonholes} this representation is canonical up to $\lrarrstar_{\T\setminus \{N\}}$ inside the $\psi_C$ and all $\psi_C$ have an $\exists$-quantifier on top. Since the $\forall$-quantifiers $v_1$ and $v_2$ cannot be split over $\exists$-quantifiers, they cannot be moved into any formula in $[\psi_C]_{\T\setminus \{N\}}$. It follows that when using $\rarr_{\splitdown}$ on $v_1$ and $v_2$ in $\phi_1$ and $\phi_2$, we essentially split $v_1$ and $v_2$ over a conjunction of the $\psi_C$: we can split $v_1$ and $v_2$ exactly over all $\land$-nodes in (\ref{eq:canonical}). Moreover, we can split them over all those $\land$-nodes iteratively. From this we directly get that $[\phi_1']_{\T\setminus \{N\}}\downarrow_{\{\splitdown, M\}}[\phi_2']_{\T\setminus \{N\}}$ also in this case which completes the proof of Lemma~\ref{lem:locallyconfluentYp}.
\end{proof}

We now apply Lemma~\ref{lemma:newman} using Lemma~\ref{lem:terminatingYp} and Lemma~\ref{lem:locallyconfluentYp}, to directly get Theorem~\ref{thm:convergent}.

\section{Between rewriting and tree decompositions}
\label{sect:tds}

The aim of this section is to link tree decompositions and formula width.  We will then use our development to prove Theorem~\ref{thm:minimize-in-T}. 
In what follows, we focus on $\{ \exists, \wedge \}$-formulas, but our results
apply to the corresponding dual formulas, namely, $\{ \forall, \vee \}$-formulas.

Let $\phi$ be for now a standardized holey $\{ \exists, \wedge\}$-formula,
and let $a$ be an association for $\phi$.
We define the hypergraph of $(\phi,a)$ as the hypergraph
whose vertex set is $\bigcup_{i} a(i)$, where the union is over all holes $i$ appearing in $\phi$,
and whose edge set is $\{ a(i) ~|~ \mbox{$i$ appears in $\phi$} \} \cup \{ \free(\phi) \}$;
that is, this hypergraph has an edge corresponding to each hole of $\phi$, and an edge
made of the free variables of $\phi$.

\begin{example}
    Consider the holey formula $\phi := \exists z\ 1 \land 2$ and the association $a$ with $a(1):= \{x,z\}$ and $a(2):= \{y,z\}$. Then $\free(\phi) = \{x, y\}$, and thus the hypergraph of $(\phi,a)$ is the triangle on vertices $x,y,z$ so $H=(V,E)$ with $V=\{x,y,z\}$ and $E= \{\{x,y\}, \{x,z\}, \{y,z\}\}$.
\end{example}

The following theorem identifies a correspondence between the rules
in $\T \setminus \{N\}$ and the computation of minimum tree decompositions.
A related result appears as~\cite[Theorem 7]{DalmauKV02}.\footnote{
	Let us remark that, in the context of the present theorem,
	the presence of the rewriting rule $O$ is crucial:
Example~\ref{ex:role-of-O} shows that absence of this rule affects
the minimum width achieveable.
}

\begin{theorem}\label{thm:width-for-regions}
	Let $\phi$ be a standardized holey $\{ \exists, \wedge\}$-formula,
	and let $a$ be an association for~$\phi$.
	Let $H$ be the hypergraph of $(\phi, a)$.
	It holds that $\width([\phi]_{\T \setminus \{ N \} }) = \tw(H) + 1$.
	Moreover, there exists a polynomial-time algorithm that,
	given the pair $(\phi,a)$ and a tree decomposition $C$ of $H$, outputs a formula
	$\phi' \in [\phi]_{\T \setminus \{ N \}}$
	where $\width(\phi') \le \bagsize(C)$;
	when $C$ is a minimum width tree decomposition of $H$, 
	it holds that  $\width(\phi') = \bagsize(C)$.
\end{theorem}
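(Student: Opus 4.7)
The plan is to establish the width equality through matching bounds, with the algorithm emerging from the upper-bound construction.

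For the lower bound $\wid([\phi]_{\T\setminus\{N\}}) \ge \tw(H)+1$, I would take any $\phi' \in [\phi]_{\T\setminus\{N\}}$, view its syntax tree $T$, and read off a tree decomposition of $H$ by setting the bag at each node $v$ to be $\free(F_v)$, where $F_v$ is the subformula rooted at $v$. Vertex and edge coverage are immediate: each hole corresponds to a leaf whose bag is that hole's association, and the root's bag equals $\free(\phi')=\free(\phi)$, covering the edge $\free(\phi)$. For connectivity, I use that $\phi'$ is still standardized (the rules in $\T\setminus\{N\}$ do not change variable names): each $x \in V(H)$ is either free in $\phi'$ or quantified at a unique node $q_x$, and the set $\{v : x \in \free(F_v)\}$ is exactly the subtree of ancestors of occurrences of $x$ lying strictly below $q_x$ (or all such ancestors, if $x$ is unquantified), which is connected. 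Since $|\free(F_v)| \le \wid(\phi')$ for every $v$, this tree decomposition has bagsize at most $\wid(\phi')$, yielding $\tw(H) \le \wid(\phi')-1$.

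For the upper bound and the polynomial-time algorithm, given a tree decomposition $C=(T,\{B_t\})$ of $H$, I would root $T$ at any node $r$ with $\free(\phi)\subseteq B_r$ (which exists because $\free(\phi)$ is an edge of $H$), pick for each hole $i$ a node $t_i$ with $a(i)\subseteq B_{t_i}$, and for each $x \in V(H)\setminus\free(\phi)$ let $\topp(x)$ be the node closest to $r$ whose bag contains $x$. Define $\phi':=F_r$ bottom-up by
\[
F_t \;:=\; \exists \vec{y}_t \Bigl( \bigwedge_{i : t_i=t} i \;\wedge\; \bigwedge_{t' \text{ child of } t} F_{t'} \Bigr),
\]
where $\vec{y}_t$ enumerates the variables $x$ with $\topp(x)=t$. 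A routine induction using the connectivity property of $C$ shows $\free(F_v)\subseteq B_v$ at every node $v$, so $\wid(\phi') \le \bagsize(C)$. Because the tree decompositions under consideration have polynomial size, so does $\phi'$ and the procedure is polynomial.

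The main technical point, in my view, is arguing that the constructed $\phi'$ lies in $[\phi]_{\T\setminus\{N\}}$. Both $\phi$ and $\phi'$ are standardized holey $\{\exists,\wedge\}$-formulas with the same set of holes, the same free variables, and---modulo vacuous quantifications, which do not affect widths and can be reinserted into $\phi'$ if needed---the same set of quantified variables. My plan is to show that any two such formulas are $\lrarrstar_{\T\setminus\{N\}}$-equivalent: applying $\rarr_{\pushup}$ repeatedly pulls all existentials to the top, which is always possible in standardized formulas because a quantified variable never appears in a sibling subformula, and then $\rarr_O$ together with $\rarr_A$ and $\rarr_C$ lets one normalize the quantifier block and the quantifier-free conjunction to a canonical representative determined solely by the hole and variable sets. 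Once both bounds are in place, the equality $\wid([\phi]_{\T\setminus\{N\}}) = \tw(H)+1$ follows immediately, and the $\wid(\phi')=\bagsize(C)$ claim for a minimum-width $C$ follows since $\phi'\in[\phi]_{\T\setminus\{N\}}$ cannot have width below $\tw(H)+1$ by the lower bound.
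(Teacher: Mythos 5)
Your proposal is correct, and the lower-bound half coincides with the paper's: both read a tree decomposition off the syntax tree of any $\phi'\in[\phi]_{\T\setminus\{N\}}$ by taking $\free(F_v)$ as bags, with standardization (preserved by the rules in $\T\setminus\{N\}$) guaranteeing connectivity. The upper bound is where you genuinely diverge. The paper rewrites $\phi$ itself by recursion on the rooted tree decomposition: at each node it pulls all quantifiers into prenex position with $\pushup$, regroups the conjunction with $A$, $C$, $O$ according to which child subtrees cover which holes, and pushes quantifiers back down with $\pushdown$; membership of the result in $[\phi]_{\T\setminus\{N\}}$ is then automatic, since every step is a rule application. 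You instead build the target formula $\phi'$ directly from the decomposition (the classical treewidth-to-bounded-width-query construction, quantifying each variable at its topmost bag) and then prove reachability separately, via the lemma that any two standardized holey $\exwe$-formulas with the same holes and the same quantified variables are $\lrarrstar_{\T\setminus\{N\}}$-equivalent, because both normalize to a canonical prenex form (pushup is always applicable in a standardized formula, and $A$, $C$, $O$ then act transitively on quantifier orders and conjunction shapes). That canonicity lemma is a clean structural statement the paper never isolates, and it makes the correctness of your algorithm essentially immediate; the price is that you must handle two bookkeeping points that the paper's guided rewriting sidesteps, namely decomposition nodes whose subtrees contain no hole (your $F_t$ would be an empty conjunction there, so such subtrees must be pruned) and quantifiers of $\phi$ binding variables outside $V(H)$, which you correctly note must be reinserted since the removal rule $M$ is not available in $\T\setminus\{N\}$. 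Both routes yield the same polynomial-time bound.
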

\begin{proof}
    We first prove the equality $\width([\phi]_{\T \setminus \{ N \} }) = \tw(H) + 1$
    by proving two inequalities, and then discuss the algorithm.
    
    $(\geq)$: For this inequality, it suffices to prove that, given a formula
    $\phi' \in [\phi]_{\T \setminus \{ N \} }$, there exists a tree decomposition $C$ of $H$
    having $\bagsize(C) = \width(\phi')$.
    We view $\phi'$ as a rooted tree, and define $T$ to be a rooted tree with the same structure as
    the tree $\phi'$, so that $T$ contains a vertex for each subformula occurrence $\beta$
    of $\phi'$,
    and for $t_1, t_2 \in V(T)$, it holds that $t_2$ is a descendent of $t_1$ if and only if
    the subformula occurrence $\beta_2$ corresponding to $t_2$ is a subformula (in $\phi'$)
    of the subformula occurrence $\beta_1$ corresponding to $t_1$.
    For each vertex $t$ of $T$, we define $B_t = \free(\beta)$, where $\beta$ is the subformula
    occurrence corresponding to $t$.
    Define
    $C = (T, \{ B_t \}_{t \in V(T)})$.
    It is clear that the maximum size $|B_t|$ of a bag $B_t$ is equal to
    $\width(\phi')$, and it is straightforward to verify that $C$
    is a tree decomposition of $H$.

    $(\leq)$: For this inequality, 
    we show how to algorithmically pass from a tree decomposition $C = (T, \{ B_t \})$ of $H$ to a formula
    $\phi' \in [\phi]_{\T \setminus \{ N \}}$
    where $\width(\phi') \leq \bagsize(C)$.
    We view $T$ as a rooted tree by
    picking a root $r$ having $\free(\phi) \subseteq B_r$.
    We define the passage inductively.  For each hole $i$ of $\phi$, let~$t_{i}$
    be a vertex of $T$ such that $a(i) \subseteq B_{t_i}$ which exists by the edge coverage property of tree decompositions.
    
    As a basic case, consider the situation where each such vertex $t_i$ is equal to the
    root $r$ of $T$.  Then, $B_r$ contains all variables occurring in $a(i)$, over each hole $i$
    of $\phi$, and thus $\phi$ itself has $\width(\phi) \leq |B_r| \leq \bagsize(C)$.
    This basic case covers each situation where $T$ contains a single vertex.
    
    When this basic case does not occur, let $r_1, \ldots, r_k$ be each child of the root
    such that the subtree of $T$ rooted at $r_j$, denoted by $T_j$, contains a vertex
    of the form $t_i$.  (Since we are not in the basic case, we have $k \geq 1$.)
    Let $H_j$ be the set containing each hole $i$ such that $t_i$ occurs in $T_j$.
    Let $V_j$ be the set containing all variables of the holes in $H_j$,
    that is, define $V_j = \bigcup_{i \in H_j} a(i)$.
    Let $U_j$ be the subset of $V_j$ containing each variable $v \in V_j$ that is not
    in $B_r$.
    Due to the connectivity of $C$, the sets $U_j$ are pairwise disjoint;
    since $\free(\phi) \subseteq B_r$ (by our choice of the root $r$), 
each variable in a set $U_j$ is quantified in $\phi$.
    Let $U$ be the set containing each variable that is quantified in $\phi$
    that is not in $\bigcup_{j=1}^k U_j$, and let $H$ be the set containing each hole in $\phi$
    that is not in $\bigcup_{j=1}^k H_j$.
    Since $\phi$ is standardized, by repeatedly applying the $\pushup$ rule, we can
    rewrite $\phi$ into prenex form; by using the $O$ rule on the existential quantifications
    and the $A$ and $C$ rules on the conjunctions,
    we can rewrite $\phi$ in the form
    $$\exists U \exists U_1 \ldots \exists U_k ((\bigwedge H) \wedge (\bigwedge H_1) \wedge \cdots \wedge (\bigwedge H_k)),$$
    where for a set of variables $S$, we use $\exists S$ to denote the existential
    quantification of the elements in $S$ in some order, and for a set of holes $I$,
    we use $\wedge I$ to denote the conjunction of the elements in $I$ in some order.
    By applying the $A$ and $C$ rules on the conjunctions and by using the $\pushdown$ rule,
    we can then rewrite the last formula as
    $$\exists U ( (\bigwedge H) \wedge (\exists U_1 \bigwedge H_1) \wedge \cdots \wedge (\exists U_k \bigwedge H_k)).$$
    For each $j = 1, \ldots, k$, let $\phi_j$ denote $\exists U_j \bigwedge H_j$.
    We claim that $T_j$ and the bags $(B_t \cap V_j)$ (over $t \in V(T_j)$),
    constitute a tree decomposition $C_j$ of the hypergraph of $\phi_j,a$:
    the coverage conditions hold by the definitions of $H_j$ and $T_j$;
    the connectivity condition is inherited from $C$; and, the bag
    $B_{r_j} \cap V_j$ contains each variable in
    $\free(\phi_j) = V_j \setminus U_j$, since each such variable appears
    both in $B_r$ (by definition of $U_j$) and in a bag of $C_j$.
    By invoking induction, each formula $\phi_j$ can be rewritten, via the $\T \setminus \{N\}$ rules,
    as a formula $\phi'_j$ where $\width(\phi'_j) \leq \bagsize(C)$.
    We define the desired formula as
    $$\phi' = \exists U ( (\bigwedge H) \wedge \phi'_1 \wedge \cdots \wedge \phi'_k).$$
    To verify that $\width(\phi') \le \bagsize(C)$, we need only verify that
    the formula
    $F = ( (\bigwedge H) \wedge \phi'_1 \wedge \cdots \wedge \phi'_k)$
    has at most $\bagsize(C)$ free variables.
    This holds since each subformula of $\phi$ is either a subformula of a formula $\phi'_j$,
    or has its free variables contained in $B_r$.
    We conclude that
    $\width(\phi') \le \bagsize(C)$.
    
    (Algorithm): We explain how
    the rewriting of $\phi$ into $\phi'$, that was just presented,
    can be performed in polynomial time.
    It suffices to verify that the number of invocations of induction/recursion
    is polynomially bounded:
    aside from these invocations, the described rewriting can clearly be carried out
    in polynomial time.  But at each invocation, the recursion is applied to a
    tree decomposition whose root is a child of the original tree decomposition's root,
    along with a formula that is smaller than the original formula.
    Hence, the total number of invocations is bounded above by the number of vertices
    in the original tree decomposition and thus the size of the formula and its hypergraph as discussed in Section~\ref{sct:prelims}.
    We have that $\width(\phi') \le \bagsize(C)$.
    When $C$ is a minimum width tree decomposition of $H$, the initial inequality
    proved rules out $\width(\phi') < \bagsize(C)$, so in this case,
    we obtain $\width(\phi') = \bagsize(C)$.
\end{proof}

The following is a consequence of Theorem~\ref{thm:width-for-regions},
and Proposition~\ref{prop:getridofN}.

\begin{corollary}
\label{cor:exists-wedge-tw}
Let $\phi$ be a standardized $\{ \exists, \wedge \}$-sentence, and
let $H$ be the hypergraph of $\phi$ (defined as having a hyperedge
$\{ v_1, \ldots, v_k \}$ for each atom $R(v_1, \ldots, v_k)$ of $\phi$).
It holds that 
\[\width([\phi]_{\A }) = \width([\phi]_{\T }) = \width([\phi]_{\T \setminus \{ N \} }) = \tw(H) + 1.\]
\end{corollary}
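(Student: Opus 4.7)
I would prove the corollary by establishing the chain
\[ \tw(H)+1 \;\leq\; \width([\phi]_{\A}) \;\leq\; \width([\phi]_{\T}) \;\leq\; \width([\phi]_{\T\setminus\{N\}}) \;=\; \tw(H)+1. \]
The final equality follows by viewing the sentence $\phi$ as a standardized holey $\{\exists,\wedge\}$-formula in which each atom $R(v_1,\ldots,v_k)$ is replaced by a hole with association $\{v_1,\ldots,v_k\}$, and then invoking Theorem~\ref{thm:width-for-regions}; the hypergraph so obtained differs from $H$ only by possibly including the empty edge $\free(\phi)=\emptyset$, which does not affect treewidth. The middle two inequalities are immediate from the inclusions $[\phi]_{\T\setminus\{N\}}\subseteq[\phi]_{\T}\subseteq[\phi]_{\A}$, since a smaller equivalence class can only raise the minimum width.

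For the nontrivial lower bound $\width([\phi]_{\A})\geq \tw(H)+1$, the plan is to show that every $\psi\in[\phi]_{\A}$ is itself a $\{\exists,\wedge\}$-sentence whose hypergraph still equals $H$, after which Theorem~\ref{thm:width-for-regions} (applied to a standardized renaming of $\psi$) yields the desired bound on $\width(\psi)$. To this end, I would first observe that the rules $\splitdown$ and $\splitup$ cannot be applied anywhere inside the $\{\exists,\wedge\}$-fragment, since both variants of each of these rules require either a $\lor$ or a $\forall$ to be present locally; consequently, starting from $\phi$, only rules in $\T\cup\{M\}$ are ever applicable along any derivation, and $[\phi]_{\A}$ consists entirely of $\{\exists,\wedge\}$-sentences. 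Next, by direct inspection of each rule in $\A$, the multiset of atom occurrences is preserved (the $\T$-rules rearrange structure without touching atoms, $M$ only removes a vacuous quantifier, and $\splitdown,\splitup$, although non-applicable here, would in any case preserve atom multiplicities). Hence $\psi$ has the same hypergraph $H$ as $\phi$.

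Given such a $\psi\in[\phi]_{\A}$, I would apply the renaming rule $N$ to obtain a standardized $\psi'$ with $\width(\psi')=\width(\psi)$ and the same atoms (hence the same $H$). Viewing $\psi'$ as a holey formula as above and invoking Theorem~\ref{thm:width-for-regions}, we get
\[ \width(\psi) \;=\; \width(\psi') \;\geq\; \width([\psi']_{\T\setminus\{N\}}) \;=\; \tw(H)+1. \]
Taking the infimum over $\psi\in[\phi]_{\A}$ yields $\width([\phi]_{\A})\geq \tw(H)+1$, closing the chain.

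The main piece of care lies in checking that no rule in $\A$ can either break out of the $\{\exists,\wedge\}$-fragment or alter the hypergraph; this is routine given the explicit rule list in Figure~\ref{fig:rules}, and I do not anticipate a conceptual obstacle beyond this bookkeeping plus the identification of $\phi$ with its holey counterpart.
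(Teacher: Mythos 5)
Your overall strategy---sandwiching $\width([\phi]_{\A})$ between $\tw(H)+1$ and $\width([\phi]_{\T\setminus\{N\}})=\tw(H)+1$---is workable, and the upper-bound half (class inclusions plus Theorem~\ref{thm:width-for-regions} applied to the holey version of $\phi$, with the harmless empty edge $\free(\phi)=\emptyset$) is fine. Note, though, that the paper never proves a separate lower bound ranging over all of $[\phi]_{\A}$: it establishes the three equalities directly, collapsing $\A$ to $\T$ (splitdown/splitup are inapplicable in the fragment, $M$ is width-neutral) and $\T$ to $\T\setminus\{N\}$ via Proposition~\ref{prop:getridofN}, and then invokes Theorem~\ref{thm:width-for-regions} once.

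The genuine gap is in your lower-bound step, where two intermediate claims are false as stated. First, ``$[\phi]_{\A}$ consists entirely of $\exwe$-sentences'' fails: the class is generated by $\lrarrstar_{\A}$, which contains $\larr_M$, i.e.\ the insertion of a vacuous quantifier of \emph{either} type. Thus $\forall x\,\phi\in[\phi]_{\A}$ for fresh $x$; this formula lies outside the fragment, and if $\phi$ is a conjunction then $\rarr_{\splitdown}$ genuinely applies to it. You must additionally argue that such detours are width-neutral (they are, because quantifiers that bind nothing remain vacuous under all rules and never affect any $\free(\cdot)$ set, but this needs to be said---it is exactly the ``bookkeeping'' you defer, and it is where the content lies). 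Second, ``the multiset of atom occurrences is preserved by every rule in $\A$'' fails for $N$, which rewrites variables inside atoms; worse, $N$ may rename a bound variable to one quantified elsewhere, so a non-standardized $\psi\in[\phi]_{\A}$ can have a naive atom-hypergraph that is \emph{not} isomorphic to $H$ (distinct bound variables of $\phi$ can collide in name). Standardizing $\psi$ before reading off the hypergraph is the right repair, but your assertion that the standardized $\psi'$ has ``the same atoms (hence the same $H$)'' is then unjustified as written: it requires the binding-structure-preservation argument underlying Proposition~\ref{prop:getridofN} (each atom occurrence stays bound to ``the same'' quantifier instance throughout a derivation), not a rule-by-rule inspection of atom multisets. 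With those two points supplied, your chain closes; without them, the step ``hence $\psi$ has hypergraph $H$'' does not follow.
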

\begin{proof}
    We get $\width([\phi]_{\T \setminus \{ N \} }) = \tw(H) + 1$ from Theorem~\ref{thm:width-for-regions}. Using Proposition~\ref{prop:getridofN} and the fact that renaming variables clearly does not change the width, we get $\width([\phi]_{\T }) = \width([\phi]_{\T \setminus \{ N \} })$. Finally, we have $\A = \T \cup \{ \splitdown, \splitup, M \}$. But $\splitdown$ and $\splitup$ can never be used in an $\{ \exists, \wedge \}$-sentence and $M$ does not change the width of any formula, we have $\width([\phi]_{\A }) = \width([\phi]_{\T })$.
\end{proof}    

We now have everything in place to prove Theorem~\ref{thm:minimize-in-T} which we restate for the reader's convenience.
\theoremminimizeinT*
\begin{proof}%
    Let us first sketch how to employ Theorem~\ref{thm:width-for-regions} to prove Theorem~\ref{thm:minimize-in-T}.
    The proof relies on the fact that, via Proposition~\ref{prop:T-to-organized}, we can consider the regions of the formula $\theta$ independently. Moreover, Theorem~\ref{thm:width-for-regions} allows us to minimize the width of the individual regions. We now give the technical details.
    
    Let $\theta$ be a given pfo-formula. 
If $\theta$ is not standardized,
we compute a standardized version of $\theta$, that 
is, a pfo-formula $\theta'$ where $\theta \rarrstar_N \theta'$,
and replace $\theta$ with this standardized version $\theta'$;
by appeal to Proposition~\ref{prop:getridofN}, 
it suffices to find a pfo-formula having minimum width 
among all formulas in $[\theta]_{\T\setminus \{ N \} }$.
We compute $\theta^+$ as follows: if $\theta$ is a $\fove$-formula without any holes, we compute its hypergraph and compute a tree decomposition of minimal width for it. Then we use Theorem~\ref{thm:width-for-regions} on the tree decomposition to get the formula $\theta^+$. The case of $\exwe$-formulas is treated analogously.
    
    If $\theta$ is neither a $\fove$-formula nor a $\exwe$-formula, we proceed as follows: using Proposition~\ref{prop:T-to-organized}, we first compute the representation $\phi\llb \phi_1, \ldots, \phi_k\rrb$ for $\theta$ where $\phi$ is the region of~$\theta$ containing the root. Assume that $\phi$ is a $\exwe$-formula; the other case is completely analogous. Let $a$ be the association that for every $i\in [k]$ maps $i\mapsto \free(\phi_i)$. For every $i\in [k]$, we recursively compute minimum width formulas $\phi_i^+\in [\phi_i]_{\T \setminus \{N\}}$. We then compute $\phi^+$ using a minimum width tree decomposition of the hypergraph of $(\phi, a)$ and applying Theorem~\ref{thm:width-for-regions}. Finally, we compute $\theta^+ := \phi^+\llb \phi_1^+, \ldots, \phi_k^+\rrb$ which is well-defined since for every $i\in [k]$ the hole $i$ of $\phi$ and thus also $\phi^+$ is associated the variables $\free(\phi_i) = \free(\phi_i^+)$. This completes the description of the algorithm.
    
    Since in every recursive step the formula size decreases and the number of recursive calls is bounded by the number of regions, the algorithm clearly runs in polynomial time up to the calls of the oracle for treewidth computation. It remains to show that $\theta^+$ has minimum width. By way of contradiction, assume this were not the case. Then there is a recursive call on a formula $\bar \theta$ such that $\bar \theta$ is of minimal size such that there is a formula $\theta'\in [\bar \theta]_{\T\setminus \{N\}}$ with $\width(\theta') < \width(\bar \theta^+)$. Assume first that $\bar \theta$ has only atomic formulas in its holes. Then, by construction, we have $\width(\bar \theta^+) = \bagsize(T, \{B_t\}_{t\in V(T)}) = \tw(H)+1$ where $(T, \{B_t\}_{t\in V(T)})$ is a minimum width tree decomposition of the hypergraph~$H$ of $\bar \theta$. Since, by assumption, $\width(\theta') < \width(\bar \theta^+)$, we get that 
    \begin{align*}
        \width([\bar \theta]_{\T\setminus \{N\}}) \le \width(\theta') < \width(\bar \theta^+) = \tw(H)+1,
    \end{align*}
    which contradicts Theorem~\ref{thm:width-for-regions}.
    
    Now assume that $\bar \theta$ is of the form $\phi\llb \phi_1, \ldots, \phi_k\rrb$ where some of the $\phi_i$ are not atomic. Then, we have $\bar \theta^+ := \phi^+\llb \phi_1^+, \ldots, \phi_k^+\rrb$. Assume that there is a smaller width formula $\theta'$. Then $\theta'$ can by Proposition~\ref{prop:T-to-organized} be written as $\theta' := \phi' \llb \phi_1', \ldots, \phi_k'\rrb$ where for every $i\in [k]$ we have $\phi_i' \lrarrstar_{\T \setminus \{ N \}} \phi_i^+$. We assumed that the recursive calls for the $\phi_i$ computed minimum width formulas $\phi_i^+$, so $\width(\phi_i^+) \le \width(\phi_i')$. We have $\width(\bar \theta^+) = \max(\width(\phi^+), \max\{\width(\phi_i^+)\mid i\in [k]\})$. Moreover, $\width(\theta') \ge \max(\{\width(\phi_i')\mid i\in [k]\}) \ge \max(\{\width(\phi_i^+)\mid i\in [k]\})$. It follows that 
    \begin{align*}
        \width(\theta') < \width(\bar \theta^+) \le \max(\width(\phi^+), \width(\theta')),
    \end{align*}
    which can only be true if $\width(\phi^+) > \width(\theta') \ge \width(\phi_i^+)$ for all $i\in [k]$. We get that $\width(\bar \theta^+) = \width(\phi^+)$. Using that $\width(\theta') \ge \width(\phi')$, we get $\width(\phi^+) > \width(\phi')$. Arguing as before, we have $\width(\phi^+) = \tw(H)+1$ where $H$ is the hypergraph of $\phi$ and thus 
    \begin{align*}
        \width([\phi]_{\T\setminus \{N\}}) \le \width(\phi') < \width(\phi^+) = \tw(H)+1,
    \end{align*}
    which is again a contradiction with Theorem~\ref{thm:width-for-regions}.
    
    Thus, in every case, $\theta^+$ is a minimum width formula in $[\theta]_{\T \setminus \{ N \}}$, which completes the proof of Theorem~\ref{thm:minimize-in-T}.
\end{proof}

\section{Monotonicity}
\label{sect:monotonicity}

In this section, we prove monotonicity of the gauged system $G'$.
We begin by proving a theorem concerning the width, up to rules in $\T \setminus \{N\}$,
of holey $\{ \exists, \wedge \}$-formulas that are conjunctive.  This will allow us to understand
how to use these rules to minimize a formula that permits application of the splitdown rule. 

\begin{theorem}
	\label{thm:conjunction-on-top}
	Let $\theta$ be a standardized holey $\{ \exists, \wedge \}$-formula
	having the form $\phi \wedge \phi'$, and let~$a$ be an association for $\theta$.
	It holds that
	$\width([\phi \wedge \phi']_{\T \setminus \{ N \}})$
is equal to
\[
	\max(
	\width([\phi ]_{\T \setminus \{ N \}}),
	\width([\phi']_{\T \setminus \{ N \}}),
	|\free(\phi) \cup \free(\phi')|
	).\]
\end{theorem}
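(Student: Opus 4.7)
The plan is to reduce Theorem~\ref{thm:conjunction-on-top} to a treewidth identity for the underlying hypergraphs via Theorem~\ref{thm:width-for-regions}, and then prove the hypergraph identity by exhibiting tree decompositions in each direction. Let $H, H_\phi, H_{\phi'}$ denote the hypergraphs of $(\phi \wedge \phi', a)$, $(\phi, a)$, and $(\phi', a)$, respectively. Since $\phi \wedge \phi'$ is standardized, so are $\phi$ and $\phi'$ (as subformulas of a standardized formula). Applying Theorem~\ref{thm:width-for-regions} to each of the three holey formulas, it suffices to show
\[
\tw(H) + 1 \;=\; \max\bigl(\tw(H_\phi)+1,\ \tw(H_{\phi'})+1,\ |\free(\phi)\cup\free(\phi')|\bigr).
\]
Observe that $E(H) = (E(H_\phi)\setminus\{\free(\phi)\}) \cup (E(H_{\phi'})\setminus\{\free(\phi')\}) \cup \{\free(\phi)\cup\free(\phi')\}$, and that the standardization assumption guarantees that any vertex appearing in both $V(H_\phi)$ and $V(H_{\phi'})$ must lie in $\free(\phi)\cap\free(\phi')$.

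For the lower bound ($\geq$), I start from any tree decomposition $C$ of $H$. The edge $\free(\phi)\cup\free(\phi')$ of $H$ must be covered by some bag, so $\bagsize(C) \geq |\free(\phi)\cup\free(\phi')|$. Next, restricting $C$ to $V(H_\phi)$ yields a tree decomposition of $H_\phi$: all edges $a(i)$ for holes $i$ in $\phi$ are edges of $H$ and hence covered, and the edge $\free(\phi)$ is covered since it is a subset of $\free(\phi)\cup\free(\phi')$, which is itself covered; connectivity transfers from $C$. Symmetrically for $H_{\phi'}$. Thus $\bagsize(C)$ upper-bounds $\tw(H_\phi)+1$, $\tw(H_{\phi'})+1$, and $|\free(\phi)\cup\free(\phi')|$.

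For the upper bound ($\leq$), I take minimum-width tree decompositions $C_\phi$ and $C_{\phi'}$ of $H_\phi$ and $H_{\phi'}$. Each of these must contain a bag covering its respective $\free(\cdot)$ edge; pick such bags as roots $r_\phi$ and $r_{\phi'}$. Build $C$ by introducing a new root $r$ with bag $\free(\phi)\cup\free(\phi')$ attached to $r_\phi$ and $r_{\phi'}$. Edge and vertex coverage are immediate. The main point to verify is connectivity: for a vertex $v$ quantified inside $\phi$, the standardized assumption ensures $v \notin V(H_{\phi'})$, so the bags containing $v$ lie entirely inside the $T_\phi$ subtree, and the same holds symmetrically; for a vertex $v \in \free(\phi)\cap\free(\phi')$, it appears in the new root bag $r$ and in both root bags $r_\phi$ and $r_{\phi'}$, so together with the connectivity of $C_\phi$ and $C_{\phi'}$ the induced set $S_v$ in $C$ is connected. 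The bagsize of $C$ equals $\max(\bagsize(C_\phi), \bagsize(C_{\phi'}), |\free(\phi)\cup\free(\phi')|)$, giving the desired inequality.

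I expect the only genuinely delicate step to be the connectivity verification in the upper bound, where the standardization hypothesis is essential: without it, a variable quantified in $\phi$ could reappear in $\phi'$, and then neither $r_\phi$'s bag nor $r$'s bag need contain that variable, potentially breaking connectivity of $S_v$ across the two subtrees. Everything else is a direct unpacking of the definitions combined with Theorem~\ref{thm:width-for-regions}.
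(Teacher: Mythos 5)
Your proposal is correct and follows essentially the same route as the paper's own proof: both directions reduce to a treewidth statement via Theorem~\ref{thm:width-for-regions}, the lower bound is obtained by restricting a tree decomposition of the combined hypergraph to the sub-hypergraphs, and the upper bound glues minimum decompositions of the two parts under a fresh root bag equal to $\free(\phi)\cup\free(\phi')$, with connectivity resting on the standardization hypothesis exactly as you describe.
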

\begin{proof}
    Let $H$ be the hypergraph of $(\theta, a)$.
    
    ($\geq$):
    By Theorem~\ref{thm:width-for-regions} and the definition of treewidth,
    it suffices to show that, given a minimum tree decomposition $C$ of $(\theta,a)$,
    the value $\bagsize(C)$ is greater than or equal to each of the three values whose maximum is taken
    in the statement.
    Since $\free(\theta) = \free(\phi) \cup \free(\phi')$ is an edge of $H$,
    there exists a bag of $C$ containing $\free(\theta)$,
    and thus $\bagsize(C) \geq |\free(\theta)|$.
    It remains to show that $\bagsize(C) \geq \width([\phi]_{\T \setminus \{ N \}})$.
    (An identical argument shows that the corresponding statement for $\phi'$ holds.)
    By Theorem~\ref{thm:width-for-regions}, it suffices to give
    a tree decomposition $C^-$ of the hypergraph $H^-$ of $(\phi,a)$
    such that $\bagsize(C) \geq \bagsize(C^-)$.
    Denote $C$ by $(T,\{B_t\}_{t \in V(T)})$.
    Define $C^-$ as $(T,\{B^-_t\}_{t \in V(T)})$,
    where $B^-_t = B_t \cap V(H^-)$ for each $t \in V(T)$.  That is, we define new bags
    by restricting to the elements in $V(H^-)$.
    It is straightforward to verify that $C^-$ gives the desired tree decomposition.
    
    ($\leq$):
    Let $H$, $H'$, and $H^+$ be the hypergraphs of $(\phi,a)$, of $(\phi',a)$, and of $(\phi \wedge \phi',a)$,
    respectively.
    By Theorem~\ref{thm:width-for-regions}, it suffices to show that,
    given a minimum tree decomposition $C = (T,\{B_t\})$ of $H$
    and a minimum tree decomposition $C' = (T',\{B'_{t'}\})$ of $H'$,
    there exists a tree decomposition $C^+$ of $H^+$ such that
    $\bagsize(C^+) \leq \max(\bagsize(C),\bagsize(C'),|\free(\phi) \cup \free(\phi')|)$.
    By renaming vertices if necessary, we may assume that $V(T)$ and $V(T')$ are disjoint.
    Let $r \in V(T)$ and $r' \in V(T')$ be vertices where $B_r \supseteq \free(\phi)$
    and $B_{r'} \supseteq \free(\phi')$, and let $r^+$ be a fresh vertex that is not in $V(T) \cup V(T')$.
    We create a tree decomposition $C^+ = (T^+, \{ B^+_{t^+} \}_{t^+ \in V(T^+)})$
    where $V(T^+) = V(T) \cup V(T') \cup \{ r^+ \}$ and 
    $E(T^+) = E(T) \cup E(T') \cup \{ \{ r^+, r \}, \{ r^+, r' \} \}$.
    The bags are defined as follows: $B^+_{r^+} = \free(\phi) \cup \free(\phi')$;
    $B^+_t = B_t$ for each $t \in V(T)$, and $B^+_{t'} = B'_{t'}$ for each $t' \in V(T')$.
    So, the tree decomposition $C^+$ is created by taking $C$ and $C'$, and by adding a new vertex
    $r^+$ whose bag is as described and which is adjacent to just $r$ and $r'$.
    It is straightforward to verify that $C^+$ is a tree decomposition of $H^+$.
    We briefly verify connectivity as follows.
    Consider a variable $v$ in $V(H^+)$.  If $v$ is quantified in $\theta$, then, since $\theta$ is standardized,
    it occurs in just one of $\phi,\phi'$, and connectivity with respect to $v$ is inherited
    by connectivity within $C$ or $C'$, respectively.
    Otherwise, the variable $v$ is in $\free(\phi) \cup \free(\phi')$.
    If the variable $v$ occurs in $\free(\phi)$, it is in $B_r$, and if the variable $v$
    occurs in $\free(\phi')$, it is in $B_{r'}$; in all cases, connectivity is inherited,
    since $v$ occurs in $B_{r^+}$.
\end{proof}

We can use Theorem~\ref{thm:conjunction-on-top} to show Theorem~\ref{thm:monotone}.
Remember that 
$$G' = (\pfo, \to_{\{\splitdown,M\}}, \wid) \div \lrarrstar_{\T},$$ where $\wid$ is the function that 
maps a formula to its width. Then Theorem~\ref{thm:monotone} is the following statement.

\theoremmonotone*
\begin{proof}%
    Let $C_1, C_2$ be $\T$-equivalence classes such that
    $C_1 \rarr_{M} C_2$
    or $C_1 \rarr_{\splitdown} C_2$.
    We need to show that $\width(C_1) \geq \width(C_2)$.
    We argue this for each of the two rules, in sequence.
    
    ($M$): Since $C_1 \rarr_{M} C_2$, there exist formulas
    $\theta_1 \in C_1$, $\theta_2 \in C_2$
    such that $\theta_1 \rarr_{M} \theta_2$.
    We have that the rule $\rarr_{M}$ is monotone on individual formulas,
    that is, for formulas $\beta_1, \beta_2$, the condition $\beta_1 \rarr_{M} \beta_2$
    implies $\width(\beta_1) \geq \width(\beta_2)$.
    It thus suffices to show that
    there exists a formula $\theta_1^- \in C_1$
    having minimum width among all formulas in $C_1$
    and a formula $\theta_2^- \in C_2$ such that $\theta_1^- \rarr_{M} \theta_2^-$.
    We explain how to derive such a pair $\theta_1^-, \theta_2^-$ from $\theta_1, \theta_2$.
    
    Let $\theta_1^-$ be any minimum width element of $C_1$.
    We have that $\theta_1^-$ can be obtained from $\theta_1$ via a sequence of $\T$ rules.
    Let $Q x$ be the quantification in $\theta_1$ that can be removed via the $M$ rule.
    Apply the sequence of rules one by one to both $\theta_1$ and $\theta_2$, but without applying to $\theta_2$
    any rules that involve the removable quantification $Q x$ but moving other quantifiers over $Qx$ when necessary with the $O$ rule; note that this is possible because the only difference between $\theta_1$ and $\theta_2$ is that $Q x$ does not appear in $\theta_2$. When this transformation is done, we end up in formulas $\theta_1'$, resp., $\theta_2'$. It is straightforwardly
    verified that the relationship $\rarr_{M}$ is preserved, and thus the desired pair of formulas
    with $\theta_1^- \rarr_{M} \theta_2^-$ is obtained.
    
    ($\splitdown$): Paralleling the previous case, we assume that we have formulas
    $\theta_1 \in C_1$, $\theta_2 \in C_2$
    such that $\theta_1 \rarr_{\splitdown} \theta_2$, and we aim to show that
    there exists a formula $\theta_1^- \in C_1$
    having minimum width among all formulas in $C_1$
    and a formula $\theta_2^- \in C_2$ such that $\theta_1^- \rarr_{\splitdown} \theta_2^-$.
    
    Let $\theta_1^m$ be a minimum width element of $C_1$; we have that $\theta_1 \rarrstar_{\T} \theta_1^m$.
    By appeal to Proposition~\ref{prop:getridofN} and the fact that the rules in $\T \setminus \{ N \}$
    preserve the property of being standardized, we obtain formulas $\theta_1^n, \theta_1^+$
    such that $\theta_1 \rarrstar_{N} \theta_1^n \rarrstar_{\T \setminus \{ N \}} \theta_1^+$
    where $\theta_1^+$ is equivalent to $\theta_1^m$ up to renaming, and hence has minimum width,
    and where $\theta_1^n, \theta_1^+$ are standardized.
    By applying the renaming rules witnessing  $\theta_1 \rarrstar_{N} \theta_1^n$
    to $\theta_2$, we obtain a formula $\theta_2^n$ such that
    $\theta_1^n \rarr_{\splitdown} \theta_2^n$.
    We have that $\theta_1^n \rarrstar_{\T \setminus \{ N \}} \theta_1^+$,
    so to minimize the width of $\theta_1^n$ using the $\T$ rules, it suffices to do so using
    the $\T \setminus \{ N \}$ rules.
    Consider the decomposition of $\theta_1^n$ into regions,
    and let $\phi \llb \psi_1, \ldots, \psi_k \rrb$ be a subformula of $\theta_1^n$
    such that $\phi$ contains the quantifier to which $\splitdown$ can be applied.
    Up to duality and up to notation, we assume that $\phi$ is a holey $\fove$-formula
    containing $\forall y 1$ as a subformula (where $1$ is a hole), that $\psi_1$ is a $\exwe$-organized formula
    of the form $\beta \wedge \beta'$, and that $\forall y$ is the quantification
    to which $\splitdown$ can be applied.  We argue that
    the subformula $\phi \llb \psi_1, \ldots, \psi_k \rrb$ can be rewritten by $\T \setminus \{ N \}$
    rules into a width-minimal form while keeping the $\forall y$ quantification
    above the outermost conjunction $\wedge$ of $\psi_1$, which suffices,
    since the same rewriting could be applied to $\theta_2^n$ to yield the desired formulas
    $\theta_1^-, \theta_2^-$.  The desired rewriting of $\phi$ exists: the variable $y$ occurs
    free only in $\psi_1$, since $\theta_1^n$ is standardized, and hence can be maintained right on top of $1$.
    The desired rewriting of $\psi_1$ exists, because Theorem~\ref{thm:conjunction-on-top} shows that each of the conjuncts of the outermost conjunction can be minimized individually.
\end{proof}

\section{Conclusion}

We have seen that one can optimally minimize the width of positive first-order formulas with respect to the application of syntactic rules. The algorithm we have given runs in polynomial time, up to the use of an algorithm 
computing minimum tree decompositions of hypergraphs. 
While the computation of such decompositions is known to be
$\mathsf{NP}$-hard, there exist FPT-algorithms parameterized by the treewidth
(see for example~\cite{BodlaenderDDFLP13,Bodlaender96}) and it follows directly that our
width minimization algorithm runs in FPT-time parameterized by the optimal width.

From our techniques, it can be seen that equivalence up to the syntactical rules that we study can also be checked for pfo-formulas: given two pfo-formulas $\phi_1, \phi_2$, compute their normal forms with the algorithm of Theorem~\ref{thm:minimize-in-T}. Then, $\phi_1$ and $\phi_2$ are equivalent if there is an isomorphism of the structure of their regions---which is easy to check since the regions are organized in a tree shape---and the regions mapped onto each other by this isomorphism are equivalent. The latter can be verified by standard techniques for conjunctive queries~\cite{ChandraM77} which yields the algorithm for checking equivalence of pfo-formulas under our syntactical rules and, in the positive case, also allows extracting a sequence of rule applications that transforms $\phi_1$ to $\phi_2$.

We close by discussing an open issue: it would be interesting to extend the set of rules we allow in the width minimization. A particularly natural addition would be the distributive rules stating that $\land$ distributes over $\lor$, and vice-versa. To see why this operation would be useful, consider as an example the formula
\begin{align*}
    \phi:= \forall x_1 \ldots \forall x_n \left((S_1(x_1) \land \ldots \land S_n(x_n))\lor (T_1(x_1) \land \ldots \land T_n(x_n)) \right).
\end{align*}
With the rules that we have considered here, one can only permute the universal quantifications and use the tree decomposition rules on the conjunctions. Neither of these operations decreases the width of the formula which is $n$. With the distributivity rule, we can ``multiply out'' the two conjunctions to get
\begin{align*}
    \phi':= \forall x_1 \ldots \forall x_n \bigwedge_{i,j\in [n]} (S_i(x_i) \lor T_j(x_j)).
\end{align*}
Now we can split down the quantifiers over the conjunction ($\splitdown$), remove most of them (M) and push down ($\pushdown$) to end up with
\begin{align*}
    \phi'' = \bigwedge_{i,j\in [n], i\ne j} \left((\forall x_i S_i(x_i))\lor (\forall x_j T_j(x_j))\right) \land \bigwedge_{i\in [n]} \left(\forall x_i (S_i(x_i)\lor T_i(x_i))\right)
\end{align*}
which has width $1$. So distributivity has the potential of substantially decreasing the width of formulas.
 However, note that allowing this operation would necessarily change the form of the results we could hope for: it is known that there are formulas whose width minimization leads to an unavoidable exponential blow-up in the formula size~\cite{BerkholzC19}, and it is 
verifiable that this is also true when only allowing syntactical rewriting rules including distributivity. Note that this is very different from our setting where the rewriting rules that we consider may only increase the formula size by adding a polynomial number of quantifiers by the splitdown rule. Thus, when adding distributivity to our rule set, we cannot hope for polynomial time algorithms up to treewidth computation, as we showed in the present article. Still, it would be interesting to understand how distributivity changes the rewriting process beyond this and if there are algorithms that run polynomially, say, in the input and output size.

\paragraph{\bf Acknowledgements.} The authors thank Carsten Fuhs, Sta\'s Hauke, Moritz M\"uller, and Riccardo Treglia for useful feedback and pointers.
The first author acknowledges the support of EPSRC grants
EP/V039318/1   and
EP/X03190X/1.

\bibliographystyle{alphaurl}
\bibliography{rewriting}

\newcommand{\etalchar}[1]{$^{#1}$}
\begin{thebibliography}{BDD{\etalchar{+}}13}

\bibitem[AHV95]{AbiteboulHV95}
Serge Abiteboul, Richard Hull, and Victor Vianu.
\newblock {\em Foundations of Databases}.
\newblock Addison-Wesley, 1995.

\bibitem[AW12]{AdlerWeyer}
Isolde Adler and Mark Weyer.
\newblock Tree-width for first order formulae.
\newblock {\em Log. Methods Comput. Sci.}, 8(1), 2012.

\bibitem[BC14]{BovaChenICDT14}
Simone Bova and Hubie Chen.
\newblock The complexity of width minimization for existential positive
  queries.
\newblock In Nicole Schweikardt, Vassilis Christophides, and Vincent Leroy,
  editors, {\em Proc. 17th International Conference on Database Theory (ICDT),
  Athens, Greece, March 24-28, 2014}, pages 235--244. OpenProceedings.org,
  2014.

\bibitem[BC19a]{BerkholzC19}
Christoph Berkholz and Hubie Chen.
\newblock Compiling existential positive queries to bounded-variable fragments.
\newblock In Dan Suciu, Sebastian Skritek, and Christoph Koch, editors, {\em
  Proceedings of the 38th {ACM} {SIGMOD-SIGACT-SIGAI} Symposium on Principles
  of Database Systems, {PODS} 2019, Amsterdam, The Netherlands, June 30 - July
  5, 2019}, pages 353--364. {ACM}, 2019.
\newblock \href {https://doi.org/10.1145/3294052.3319693}
  {\path{doi:10.1145/3294052.3319693}}.

\bibitem[BC19b]{BovaC19}
Simone Bova and Hubie Chen.
\newblock How many variables are needed to express an existential positive
  query?
\newblock {\em Theory Comput. Syst.}, 63(7):1573--1594, 2019.

\bibitem[BDD{\etalchar{+}}13]{BodlaenderDDFLP13}
Hans~L. Bodlaender, P{\aa}l~Gr{\o}n{\aa}s Drange, Markus~S. Dregi, Fedor~V.
  Fomin, Daniel Lokshtanov, and Michal Pilipczuk.
\newblock An o(c{\^{}}k n) 5-approximation algorithm for treewidth.
\newblock In {\em 54th Annual {IEEE} Symposium on Foundations of Computer
  Science, {FOCS} 2013, 26-29 October, 2013, Berkeley, CA, {USA}}, pages
  499--508. {IEEE} Computer Society, 2013.
\newblock \href {https://doi.org/10.1109/FOCS.2013.60}
  {\path{doi:10.1109/FOCS.2013.60}}.

\bibitem[BN98]{BaaderN98}
Franz Baader and Tobias Nipkow.
\newblock {\em Term rewriting and all that}.
\newblock Cambridge University Press, 1998.

\bibitem[Bod96]{Bodlaender96}
Hans~L. Bodlaender.
\newblock A linear-time algorithm for finding tree-decompositions of small
  treewidth.
\newblock {\em {SIAM} J. Comput.}, 25(6):1305--1317, 1996.
\newblock \href {https://doi.org/10.1137/S0097539793251219}
  {\path{doi:10.1137/S0097539793251219}}.

\bibitem[CD16]{ChenD16}
Hubie Chen and V{\'{\i}}ctor Dalmau.
\newblock Decomposing quantified conjunctive (or disjunctive) formulas.
\newblock {\em {SIAM} J. Comput.}, 45(6):2066--2086, 2016.

\bibitem[CFK{\etalchar{+}}15]{CyganFKLMPPS15}
Marek Cygan, Fedor~V. Fomin, Lukasz Kowalik, Daniel Lokshtanov, D{\'{a}}niel
  Marx, Marcin Pilipczuk, Michal Pilipczuk, and Saket Saurabh.
\newblock {\em Parameterized Algorithms}.
\newblock Springer, 2015.
\newblock \href {https://doi.org/10.1007/978-3-319-21275-3}
  {\path{doi:10.1007/978-3-319-21275-3}}.

\bibitem[Che14]{Chen14}
Hubie Chen.
\newblock On the complexity of existential positive queries.
\newblock {\em {ACM} Trans. Comput. Log.}, 15(1):9:1--9:20, 2014.

\bibitem[Che17]{ChenJACM17}
Hubie Chen.
\newblock The tractability frontier of graph-like first-order query sets.
\newblock {\em J. {ACM}}, 64(4):26:1--26:29, 2017.

\bibitem[CM77]{ChandraM77}
Ashok~K. Chandra and Philip~M. Merlin.
\newblock Optimal implementation of conjunctive queries in relational data
  bases.
\newblock In John~E. Hopcroft, Emily~P. Friedman, and Michael~A. Harrison,
  editors, {\em Proceedings of the 9th Annual {ACM} Symposium on Theory of
  Computing, May 4-6, 1977, Boulder, Colorado, {USA}}, pages 77--90. {ACM},
  1977.
\newblock \href {https://doi.org/10.1145/800105.803397}
  {\path{doi:10.1145/800105.803397}}.

\bibitem[DKV02]{DalmauKV02}
V{\'{\i}}ctor Dalmau, Phokion~G. Kolaitis, and Moshe~Y. Vardi.
\newblock Constraint satisfaction, bounded treewidth, and finite-variable
  logics.
\newblock In Pascal~Van Hentenryck, editor, {\em Principles and Practice of
  Constraint Programming - {CP} 2002, 8th International Conference, {CP} 2002,
  Ithaca, NY, USA, September 9-13, 2002, Proceedings}, volume 2470 of {\em
  Lecture Notes in Computer Science}, pages 310--326. Springer, 2002.
\newblock \href {https://doi.org/10.1007/3-540-46135-3\_21}
  {\path{doi:10.1007/3-540-46135-3\_21}}.

\bibitem[GLS01]{GottlobLS01}
Georg Gottlob, Nicola Leone, and Francesco Scarcello.
\newblock Hypertree decompositions: {A} survey.
\newblock In Jir{\'{\i}} Sgall, Ales Pultr, and Petr Kolman, editors, {\em
  Mathematical Foundations of Computer Science 2001, 26th International
  Symposium, {MFCS} 2001 Marianske Lazne, Czech Republic, August 27-31, 2001,
  Proceedings}, volume 2136 of {\em Lecture Notes in Computer Science}, pages
  37--57. Springer, 2001.

\bibitem[GLS03]{robbers}
Georg Gottlob, Nicola Leone, and Francesco Scarcello.
\newblock Robbers, marshals, and guards: game theoretic and logical
  characterizations of hypertree width.
\newblock {\em Journal of Computer and System Sciences}, 66(4):775--808, 2003.

\bibitem[Gro07]{Grohe07}
Martin Grohe.
\newblock The complexity of homomorphism and constraint satisfaction problems
  seen from the other side.
\newblock {\em J. {ACM}}, 54(1):1:1--1:24, 2007.

\bibitem[GSS01]{GroheSS01}
Martin Grohe, Thomas Schwentick, and Luc Segoufin.
\newblock When is the evaluation of conjunctive queries tractable?
\newblock In Jeffrey~Scott Vitter, Paul~G. Spirakis, and Mihalis Yannakakis,
  editors, {\em Proceedings on 33rd Annual {ACM} Symposium on Theory of
  Computing, July 6-8, 2001, Heraklion, Crete, Greece}, pages 657--666. {ACM},
  2001.
\newblock \href {https://doi.org/10.1145/380752.380867}
  {\path{doi:10.1145/380752.380867}}.

\bibitem[KV00]{cqc}
Phokion~G. Kolaitis and Moshe~Y. Vardi.
\newblock Conjunctive-query containment and constraint satisfaction.
\newblock {\em J. Comput. Syst. Sci.}, 61(2):302–332, 2000.

\bibitem[New42]{Newman42}
Maxwell Herman~Alexander Newman.
\newblock On theories with a combinatorial definition of" equivalence".
\newblock {\em Annals of mathematics}, pages 223--243, 1942.

\bibitem[Pap94]{Papadimitriou94}
Christos~H. Papadimitriou.
\newblock {\em Computational complexity}.
\newblock Addison-Wesley, 1994.

\bibitem[Wil14]{Williams14}
Ryan Williams.
\newblock Faster decision of first-order graph properties.
\newblock In {\em Joint Meeting of the Twenty-Third {EACSL} Annual Conference
  on Computer Science Logic {(CSL)} and the Twenty-Ninth Annual {ACM/IEEE}
  Symposium on Logic in Computer Science (LICS), {CSL-LICS} '14}, pages
  80:1--80:6. {ACM}, 2014.
\newblock \href {https://doi.org/10.1145/2603088.2603121}
  {\path{doi:10.1145/2603088.2603121}}.

\end{thebibliography}
\end{document}